\theoremstyle{thmstyleone}%
\newtheorem{theorem}{Theorem}
\newtheorem{proposition}[theorem]{Proposition}%
\newtheorem{lemma}{Lemma}
\theoremstyle{thmstyletwo}%
\newtheorem{remark}{Remark}%
\newtheorem{assumption}{Assumption}
\newtheorem*{assumption*}{Assumption}
\newtheorem*{condition*}{Condition}
\theoremstyle{thmstylethree}%
\newtheorem{claim}{Claim}
\newcommand{\argmax}{\arg\!\max}
\global\long\def\expect{\mathbb{E}}%
\global\long\def\prob{\mathbb{P}}%
\global\long\def\real{\mathbb{R}}%
\global\long\def\op{o_{P}}%
\global\long\def\transpose{\top}%
\global\long\def\define{:=}%
\global\long\def\var{\mathrm{Var}}%
\global\long\def\cov{\mathrm{Cov}}%
\def\indicator#1{\mathbb{I}_{\left\{#1\right\}}}
\def\supp#1{\text{supp}(#1)}
\def\onevec#1{\textbf{1}_{#1}}
\def\zerovec#1{\textbf{0}_{#1}}
\newcommand{\cmark}{\ding{52}}%
\newcommand{\xmark}{\ding{56}}%
\def\uconst{h}
\def\yinan#1{\textcolor{black}{#1}}
\begin{document}

\title[Article Title]{Testing High-Dimensional Mediation Effect with Arbitrary Exposure--Mediator Coefficients}


\author[1]{\fnm{Yinan} \sur{Lin}}\email{linyn@cqnu.edu.cn}

\author[2]{\fnm{Zijian} \sur{Guo}}\email{zijguo@stat.rutgers.edu}

\author[3]{\fnm{Baoluo} \sur{Sun}}\email{stasb@nus.edu.sg}

\author*[3]{\fnm{Zhenhua} \sur{Lin}}\email{linz@nus.edu.sg}

\affil[1]{\orgdiv{National Center for Applied Mathematics in Chongqing}, \orgname{Chongqing Normal University}, \orgaddress{\street{37 Daxuecheng Middle Road}, \city{Chongqing}, \postcode{401331}, \state{Chongqing}, \country{China}}}

\affil[2]{\orgdiv{Department of Statistics}, \orgname{Rutgers University}, \orgaddress{\street{110 Frelinghusen Road}, \city{Piscataway}, \postcode{08854}, \state{New Jersey}, \country{USA}}}

\affil*[3]{\orgdiv{Department of Statistics and Data Science}, \orgname{National University of Singapore}, \orgaddress{\street{6 Science Drive 2}, \city{Singapore}, \postcode{117546}, \state{Singapore}, \country{Singapore}}}


\abstract{In response to the unique challenge created by high-dimensional mediators in mediation analysis, this paper presents a novel procedure for testing the nullity of the mediation effect in the presence of high-dimensional mediators. The procedure incorporates two distinct features. Firstly, the test remains valid under all cases of the composite null hypothesis, including the challenging scenario where both exposure--mediator and mediator--outcome coefficients are zero. Secondly, it does not impose structural assumptions on the exposure--mediator coefficients, thereby allowing for  an arbitrarily strong exposure--mediator relationship. To the best of our knowledge, the proposed test is the first of its kind to provably possess these two features in high-dimensional mediation analysis. The validity and consistency of the proposed test are established, and its numerical performance is showcased through simulation studies. The application of the proposed test is demonstrated by examining the mediation effect of DNA methylation between smoking status and lung cancer development.}

\keywords{High-dimensional inference, Bias correction, Mediation analysis, Composite hypothesis, Super-efficiency}



\maketitle

\section{Introduction}\label{sec:intro}

Originally motivated in psychology \citep{baron1986moderator}, mediation analysis has found widespread application in various scientific fields, including medicine, genomics, economics, and many others \citep{zeng2021statistical, celli2022causal}, over the past few decades. 
Given independent and identically distributed (i.i.d.) triples $(Y_{i}, A_{i}, M_{i})_{i=1}^{n}$, mediation analysis focuses on examining the effect of exposures $A_i\in \real^q$ on an outcome $Y_i\in \real$, which may be mediated by some potential intermediate variables $M_i\in \real^{p}$ known as mediators. 
In the case where the relationships among $Y_{i}$, $A_{i}$, and $M_{i}$ are linear, one can consider the following linear structural equation model (LSEM):
\begin{align}
	M_i &= \beta_{A} A_i + E_{i}, 
	\label{eq:mediator-exposure}
	\\
	Y_i &= \theta_{A}^{\transpose} A_i + \theta_{M}^{\transpose} M_i + Z_{i},
	\label{eq:outcome-mediator}
\end{align}
for each $i=1,\ldots, n$. 
Here, both $E_i\in\real^p$ and $Z_i\in\real$ are random noise. The matrix $\beta_{A} \in \real^{p\times q}$ and the vectors    
$\theta_{A}\in \real^{q}$ and  $\theta_{M}\in \real^{p}$ contain the unknown regression coefficients that  encode the relationships among the exposures, mediators and outcome. 
In this paper, we consider the high-dimensional setting where the dimension $p$ of the mediators may diverge with the sample size $n$. 

In the LSEM \eqref{eq:mediator-exposure} and \eqref{eq:outcome-mediator}, the joint mediated effect through the mediators, also referred to as the natural indirect effect or mediation effect, is captured by the parameter $\gamma = \beta_{A}^{\top} \theta_{M} \in \mathbb{R}^{q}$ \citep{robins1992identifiability,vanderweele2014mediation}. 
It is known that, if the LSEM \eqref{eq:mediator-exposure} and \eqref{eq:outcome-mediator} is correctly specified with the absence of measured baseline covariates, $\gamma$ admits a causal interpretation under a counterfactual framework; see Appendix \ref{sec:causal-interpret} for details. 	
Our focus is on the hypothesis testing problem,
\begin{equation}
	\mathrm{H}_{0}:\gamma=\zerovec{q}
	\quad\text{v.s.}\quad
	\mathrm{H}_{a}:\gamma\neq \zerovec{q},
	\label{test-problem-product}
\end{equation}
where $\zerovec{q} \in \real^{q}$ is the vector of all zeros. 
Due to its practical importance, numerous statistical methods have been proposed for the  problem \eqref{test-problem-product}, primarily in the context of low-dimensional mediators \citep{ vanderweele2009conceptual, vanderweele2014mediation}. 

Nowadays, high-dimensional data are ubiquitous in many areas, such as bioinformatics \citep{zeng2021statistical}.
This has led to a growing need for new statistical methods for mediation analysis with high-dimensional mediators, where the number of potential mediators may be comparable to, or even larger than, the sample size. 
For instance, genome-wide association studies have investigated the impact of early-life trauma on cortisol stress reactivity in adulthood through hundreds of thousands of DNA methylation levels \citep{houtepen2016genome, guo2022high}.
Epidemiological studies have also confirmed the role of socioeconomic factors, mediated through molecular-level traits including methylation, in disease susceptibility \citep{tobi2018dna,huang2019genome}.
In neuroscience, there is interest in identifying brain regions, comprising a large volume of voxels, whose activity levels act as potential mediators in the relationship between a thermal stimulus and self-reported pain \citep{chen2018high}. At first glance, in  mediation analysis, one might attempt to address $\beta_{A}$ and $\theta_{M}$ separately through their respective equations in the LSEM. However, the coexistence of $\beta_{A}\ne 0$ and $\theta_{M}\ne 0$ does not imply $\gamma \ne 0$. In addition, separately analyzing the two equations do not account for the complex interplay between the exposure, mediator and outcome. Consequently, existing techniques for high-dimensional linear models 
cannot be readily adapted to analyze the two coupled equations in the LSEM.

Various methodologies have been proposed in response to the unique challenge of mediation analysis, especially in the presence of high-dimensional mediators. For example, one might first reduce the number of mediators through dimension reduction techniques, such as principal component analysis \citep{huang2016hypothesis,chen2018high} and variable screening  \citep{zhang2016estimating}. Alternatively,  studies such as \cite{huang2019genome, dai2022multiple, liu2022large} consider testing whether $\beta_{A,j}\theta_{M,j}=0$ for all $j=1,\ldots,p$, where $\beta_{A,j}$ and $\theta_{M,j}$ are respectively the $j$th coordinates of $\beta_{A}$ and $\theta_M$. However, while these individual null hypotheses are meaningful in their own right, they do not collectively equate to $\gamma=0$. In addition, \cite{huang2019genome} requires a sparse $\beta_A$, whereas \cite{dai2022multiple} and \cite{liu2022large} impose implicit conditions on $\beta_A$ through an assumption of the weak dependence across coordinates $1,\ldots,p$. In contrast, 
The recent works by \cite{zhou2020estimation} and \cite{guo2022statistical} directly address the overall mediation effect $\gamma$.  Specifically, 
\cite{zhou2020estimation} proposed a debiased estimator and a test for the mediation effect, under some structural assumptions on the exposure--mediator coefficients $\beta_{A}$ to ensure consistency of the estimator. 
\cite{guo2022statistical}, observing that the mediation effect $\gamma$ is the difference between the total effect $\gamma+\theta_{A}$ and the natural direct effect $\theta_{A}$, proposed to estimate $\gamma$ by the difference between an estimator of the total effect and an estimator of the natural direct effect, and developed a Wald test for the mediation effect. Their method relies on the sign consistency of the estimated mediator--outcome coefficient $\theta_{M}$, which demands relatively strong assumptions, such as the uniform signal strength condition \citep{zhang2014confidence} and the irrepresentable condition \citep{zhao2006model}. As a consequence, the exposure--mediator coefficient $\beta_{A}$ in \cite{guo2022statistical} cannot be too large in magnitude.

The null hypothesis $\gamma=0$ includes the case that both $\beta_A=0$ and $\theta_M = 0$, which is not that uncommon in  applications such as genome-wide studies due to {extreme sparse signals} \citep{huang2019genome, barfield2017testing}. 
In addition to structural assumptions on $\beta_{A}$, both tests of \cite{zhou2020estimation} and  \cite{guo2022statistical} do not address this peculiar case of the null hypothesis. The fundamental cause is that, when $\beta_A=0$ and $\theta_M = 0$, the standard deviations of their estimators for the mediation effect decay to zero at a rate faster than $n^{-1/2}$, rendering their asymptotic normality results invalid; see a numerical demonstration in Section \ref{sec:simu}.  In fact, the case of $\beta_A=0$ and $\theta_M=0$ is nontrivial even in the low-dimensional setting \citep{sobel1982asymptotic, barfield2017testing, zhou2020estimation}.

As a major contribution of this paper, we develop a test for the overall mediation effect $\gamma=0$ in the presence of high-dimensional mediators, with the following distinctive features. First, it does not impose assumptions on $\beta_A$, and thus is able to accommodate \textit{arbitrary} exposure--mediator coefficients $\beta_{A}$, including a dense vector $\beta_A$. Second, it remains valid in all cases of the null hypothesis, even in the challenging case of  $\beta_{A}=0$ and $\theta_{M}=0$.   See Table \ref{tab:method-cases} for a contrast between our method and the others. To the best of our knowledge, our test is the first  of its kind to enjoy both of these features in high-dimensional mediation analysis. 

Our test is built on a novel debiased estimator for the mediation effect $\gamma$, achieved by adapting the technique of the variance-enhancement projection direction \yinan{(VePD)} \citep{cai2021optimal}. Although the method of \cite{zhou2020estimation} is also based on debiasing a pilot estimator by projecting the sum of residuals, the projection direction in \cite{zhou2020estimation} is constructed to solely alleviate the bias of the pilot estimator. In contrast, in our procedure, in addition to correcting the bias, we construct the direction to also control the variance of {the estimator}. Therefore, the variance of our estimator would dominate the corresponding bias for any $\beta_{A}$, thus allowing for arbitrary exposure--mediation associations; see the discussion right after \eqref{eq:hgamma-estimator} for details. Compared to the work presented in \cite{cai2021optimal}, which primarily concentrates on testing a linear contrast with a predetermined high-dimensional loading vector, the development of our test method confronts a distinctive theoretical hurdle. First, this challenge stems from the consideration of a \emph{random} high-dimensional loading vector, which emerges due to the estimation of the unknown coefficient vector $\beta_{A}$. Second, the intricate interdependence between the two high-dimensional equations within the LSEM introduces a significant complication in our theoretical analysis. This complexity is  distinct from the typical situation encountered in the high-dimensional linear models, where a single equation is usually addressed \citep[e.g.,][]{cai2021optimal}.

\begin{table}[h]
	\caption{Validity of our method and the competing methods, and requirement on sign consistency}\label{tab:method-cases}%
	\begin{tabular}{ccccc}
		\hline
		\multicolumn{1}{c}{Method} & Sparse $\beta_{A}$ & Dense $\beta_{A}$ & $\beta_A=0$ and $\theta_M=0$ & \multicolumn{1}{c}{Sign Consistency} \\
		\hline
		\cite{zhou2020estimation}  & \cmark & * & \xmark & Not required \\
		\hline
		\cite{guo2022statistical}   & \cmark & \xmark & \xmark & Required \\
		\hline
		Our proposal & \cmark & \cmark & \cmark & Not required \\
		\hline
	\end{tabular}
	\footnotetext{Note: While there is no direct sparsity condition on $\beta_A$ in \cite{zhou2020estimation}, Assumption 2 therein imposes a structural requirement on the covariance structures of the mediators and exposures. The discussion following that assumption mentions that such requirement is related to the irrepresentable condition of \cite{zhao2006model}. When $\beta_{A}$ is dense, exposures and mediators are likely to be strongly correlated, making the irrepresentable condition hard to meet \citep{zhao2006model}.}
\end{table}

The rest of the paper is organized as follows. In Section \ref{sec:model-meth}, we present the proposed debiased estimator and the corresponding test for mediation effect. 
Theoretical investigations of the proposed test are provided in Section \ref{sec:theory}. 
Numerical studies on the proposed method and comparisons with other methods are presented in Section \ref{sec:simu}. In Section \ref{sec:application}, we showcase the proposed method in a real data application, which investigates the mediation effect between smoking status, DNA methylation, and lung cancer development. The paper concludes with a final remark in Section \ref{sec:concluding}.

\vspace{2mm}
\textbf{Notation.} 
For a vector $x$,  $x^{\transpose}$ is its transpose, $\|x\|_r$ represents its $\ell_r$ norm with $r=1,2,\infty$, and $\text{diag}(x)$ denotes the diagonal matrix whose diagonal is $x$.
For a matrix $X$, $\|X\|_2$ and $\|X\|_{\infty}$ represent its spectral norm and element-wise $\ell_{\infty}$ norm, respectively. Let $X_{i}$ (respectively, $X_{\cdot j}$) be its $i$th row (respectively, $j$th column) and $X_{jk}$ represents its $(j,k)$th element. In particular, for the matrix $\beta_{A}$ in \eqref{eq:mediator-exposure},  $\beta_{A,j}$  is its $j$th column and $\beta_{A,jk}$ is its $(j,k)$th element. If $X$ is a squared matrix, $\Lambda_{\min}(X)$ and $\Lambda_{\max}(X)$ denote its smallest and largest eigenvalues, respectively. 
$\onevec{q}$ and $\zerovec{q}$ are the vectors consisting of all ones and all zeros, respectively, in $\real^{q}$. Occasionally, we may omit the subscript $q$ when it's clear from the context.
$I_n$ represents the identity matrix of size $n$. $\indicator{\cdot}$ denotes the indicator function.
$N_{q}(\mu, \Sigma)$ represents the $q$-dimensional Gaussian distribution with mean $\mu$ and covariance matrix $\Sigma$, while $N(0,1)$ represents the standard Gaussian distribution in $\real$. 
For two non-negative sequences $\{a_n\}$ and $\{b_n\}$, we write $a_n \lesssim b_n$ (respectively, $ a_n\gtrsim b_n$) if there is a constant $c>0$ not depending on $n$, such that $a_n \le c b_n$ (respectively, $a_n\ge c b_n$) for all sufficiently large $n$. We write $a_n \asymp b_n$ if and only if both $a_n \lesssim b_n$ and $a_n \gtrsim b_n$. Moreover, $a_n \ll b_n$ if $a_n/b_n \to 0$ as $n\to \infty$. 	
\vspace{2mm}

\section{Testing Mediation Effect in High Dimensions}\label{sec:model-meth}

\subsection{Estimating  Mediation Effect via a Debiased Approach}\label{sec:estimation}

Given i.i.d. triples $(Y_{i}, A_{i}, M_{i})_{i=1}^{n}$, without loss of generality, we assume centered $A_i$ and $E_i$, that is, $\expect A_i=0$ and $\expect M_i=0$ for each $i=1, \ldots, n$. In practice, this assumption can be satisfied by centering $Y_i$, $A_i$ and $M_i$, that is, by instead considering  $Y_{i}-\bar{Y}$, $A_{i}-\bar{A}$ and $M_{i}-\bar{M}$, where $\bar{Y}=n^{-1}\sum_{i=1}^{n}Y_{i}$, $\bar{A}=n^{-1}\sum_{i=1}^{n}A_{i}$ and $\bar{M}=n^{-1}\sum_{i=1}^{n}M_{i}$. 	
To simplify our discussions, we rewrite  \eqref{eq:mediator-exposure} and \eqref{eq:outcome-mediator} into a matrix form, 
\begin{align}
	M &= A\beta_{A}^{\transpose}  + E, 
	\label{eq:mediator-exposure-matrix}
	\\
	Y &= A \theta_{A} + M\theta_{M}  + Z,
	\label{eq:outcome-mediator-matrix}
\end{align}
where $M\in \real^{n\times p}$, $A\in \real^{n\times q}$, $Y\in \real^{n\times 1}$, $E\in \real^{n\times p}$ and $Z \in \real^{n\times 1}$. Further, we set $X=(A, M)\in \real^{n\times (q+p)}$ and $\theta = (\theta_A^{\transpose}, \theta_M^{\transpose})^{\transpose} \in \real^{q+p}$, and  \eqref{eq:outcome-mediator-matrix} becomes
\begin{equation}
	Y = X \theta + Z.
	\label{eq:Y-X-equation}
\end{equation}

Let $\hat{\theta}=(\hat{\theta}_A^{\transpose}, \hat{\theta}_M^{\transpose})^{\transpose}$ be an initial estimator for the high-dimensional vector $\theta$, such as the Lasso estimator
\[
\hat{\theta} = \underset{b\in \real^{q+p}}{\arg\min} \,\, n^{-1} \|Y-Xb\|_2^2 + \lambda_n \|b\|_1
\]
with $\lambda_n$ being the Lasso tuning parameter. Given the ordinary least-squares estimator $\hat{\beta}_A= ((A^{\transpose}A)^{-1}A^{\transpose} M)^{\transpose}$ for the coefficient matrix $\beta_A$, a pilot estimator for $\gamma=\beta_{A}^{\transpose}\theta_{M}$ is then given by $\tilde{\gamma}=\hat{\beta}_A^{\transpose}\hat{\theta}_M$. From \eqref{eq:mediator-exposure-matrix} and \eqref{eq:outcome-mediator-matrix},  we observe
\begin{equation}
	\tilde{\gamma} - \gamma 
	= 
	\hat{\Sigma}_A^{-1} \hat{\Sigma}_{AM} (\hat{\theta}_M - \theta_M) + n^{-1}\hat{\Sigma}_{A}^{-1}A^{\transpose}E_{M}, 
	\label{eq:bias-of-init}
\end{equation}
where $\hat{\Sigma}_A = n^{-1} A^{\transpose}A$, $\hat{\Sigma}_{AM}=n^{-1} A^{\transpose} M$, 
\begin{equation*}
	E_{M}=(E_{M,1}, \ldots, E_{M,n})^{\transpose}=E \theta_{M} \in \real^n,
	\label{eq:def-eps1}
\end{equation*}
and $E_{M,i}=E_i^{\transpose} \theta_{M}$ for $i=1,\ldots, n$.
\yinan{Intuitively, the second term on the right-hand side of \eqref{eq:bias-of-init} is asymptotically normal under regularity conditions. In contrast,} 
with high-dimensional mediators, the first term $\hat{\Sigma}_A^{-1} \hat{\Sigma}_{AM} (\hat{\theta}_M - \theta_M)$ in \eqref{eq:bias-of-init} becomes a non-negligible bias of $\tilde\gamma$ due to the high-dimensional penalty. To eliminate this bias, we employ a bias correction approach, as follows. 

Let $\tilde g_j$ be the $j$th column of $(\hat{\Sigma}_A^{-1} \hat{\Sigma}_{AM})^{\transpose}\in \real^{p\times q}$ and thus $\hat{\Sigma}_A^{-1} \hat{\Sigma}_{AM} (\hat{\theta}_M - \theta_M)=\big(\tilde{g}_1^{\transpose} (\hat{\theta}_M - \theta_M), \ldots, \tilde{g}_q^{\transpose} (\hat{\theta}_M - \theta_M) \big)^{\transpose}$. To estimate each coordinate of the bias, we first observe that, for any projection direction $u \in \real^{q+p}$, 
\begin{equation}
	\tilde{g}_j^{\transpose} (\hat{\theta}_M- \theta_M ) - n^{-1} u^{\transpose}  X^{\transpose} (X\hat{\theta}-Y) =  (\hat{\Sigma}_{X} u - g_j)^{\transpose}(\theta - \hat{\theta})+n^{-1} u^{\transpose} X^{\transpose} Z,
	\label{eq:gj-def}
\end{equation}
where $g_j = (\zerovec{q}^{\transpose}, \tilde{g}_j^{\transpose}) \in \real^{q+p}$ and $\hat{\Sigma}_{X}=n^{-1} X^{\transpose} X$. This motivates us to use $n^{-1} u^{\transpose}  X^{\transpose} (X\hat{\theta}-Y)$ as an estimate of the $j$th coordinate $\tilde{g}_j^{\transpose} ( \hat{\theta}_M-\theta_M )$ of the bias, and to choose a direction $u$ that ``minimizes'' the estimation error, that is, the right-hand side of \eqref{eq:gj-def}, for \emph{all} realizations of $g_j$. For this purpose, we first observe the following inequality regarding the first term of the right-hand side of \eqref{eq:gj-def},
\[
|(\hat{\Sigma}_X u - g_j )^{\transpose}(\theta-\hat{\theta})| \le \| \hat{\Sigma}_X u - g_j \|_{\infty} \|\theta-\hat{\theta}\|_{1},
\]
where the upper bound may be minimized by choosing an appropriate initial estimator $\hat\theta$ and imposing constraints on $\| \hat{\Sigma}_X u - g_j \|_{\infty}$ as in \eqref{VePD-constraint-bias}. For the second term  $n^{-1} u^{\transpose} X^{\transpose} Z$ of \eqref{eq:gj-def}, 
as its mean is zero, we find an effective $u$ to minimize its variance which is in the order of $u^{\transpose} \hat{\Sigma}_X u$. In addition, to facilitate statistical inference, an effective projection direction $u$ should ensure that the variance of the associated debiased estimator asymptotically dominates its bias, for all realizations of $g_j$. All of these considerations inspire us to  adapt the VePD technique \citep{cai2021optimal}. 

Specifically, for each $j=1,\ldots,q$, the projection direction $\hat u_j$ is 
\begin{align}
	\hat{u}_j=\underset{u\in \real^{q+p}}{\arg\min}\,\, u^{\transpose} \hat{\Sigma}_X u \quad \text{subject to } \quad  &\| \hat{\Sigma}_X u - g_j \|_{\infty} \le \|g_j\|_2 \lambda, 
	\label{VePD-constraint-bias}
	\\
	&|g_j^{\transpose}\hat{\Sigma}_X u-\|g_j\|_2^2| \le \|g_j\|_2^2 \lambda, 
	\label{VePD-constraint-var}
	\\
	&\|X u\|_{\infty} \le \|g_j\|_2\mu,
	\label{VePD-constraint-lindeberg}
\end{align}
where $\lambda \asymp \sqrt{\log(q+p) /n}$ and {$\mu \asymp \log n$} are two tuning parameters.
Details on solving this constrained optimization and selecting the tuning parameters can be found in Section 4.1 of \cite{guo2021inference}, \yinan{where the constrained minimization problem is equivalently transformed to an unconstrained minimization problem with a readily available optimizer.}
Consequently, with $\hat{U}=(\hat{u}_1, \ldots, \hat{u}_q)\in \real^{(p+q)\times q}$, the estimated bias term in \eqref{eq:bias-of-init} is $n^{-1} \hat{U}^{\transpose} X^{\transpose} (X\hat{\theta}-Y)$, based on which we propose the following debiased estimator for $\gamma$, 
\begin{equation}
	\hat{\gamma} = \tilde{\gamma} - n^{-1} \hat{U}^{\transpose} X^{\transpose} (X\hat{\theta}-Y)=\tilde{\gamma} + n^{-1} \hat{U}^{\transpose} X^{\transpose} (Y-X\hat{\theta})
	\in \real^{q}.
	\label{eq:hgamma-estimator}
\end{equation}

In the above, the constraint \eqref{VePD-constraint-bias} is introduced to tackle the first term on the right-hand side of \eqref{eq:gj-def}; this term may be viewed as the further bias associated with the estimator  $n^{-1} u^{\transpose}  X^{\transpose} (X\hat{\theta}-Y)$ for the $j$th coordinate  $\tilde{g}_j^{\transpose} ( \hat{\theta}_M-\theta_M )$ of the bias in \eqref{eq:bias-of-init}.  This idea of minimizing the variance of a projected sum of errors while constraining the bias is a common practice  in high-dimensional statistical inference with bias correction \citep{zhang2014confidence, javanmard2014confidence}. \cite{cai2021optimal} further introduced an additional constraint \eqref{VePD-constraint-var} to ensure that the variance of the {second term} on the right-hand side of \eqref{eq:gj-def} asymptotically dominates the first term for a \textit{fixed} counterpart of $g_j$, and {hence the variance of the debiased estimator would asymptotically dominate its bias}. In our scenario, each $g_j$ is \textit{random}  due to estimating the parameter $\beta_A$ of the additional equation \eqref{eq:mediator-exposure-matrix}. This distinction, combined with the presence of two coupled equations in our model, sets our theoretical analysis apart from previous works. {The constraint \eqref{VePD-constraint-lindeberg} is primarily a technical requirement to ensure that the error terms in \eqref{eq:gj-def} satisfy the Lindeberg's condition.} 

\subsection{Test Procedure}\label{sec:meth-tesing}
To develop a test for the hypothesis \eqref{test-problem-product}, we start with considering the asymptotic covariance matrix of $\hat{\gamma}$. Based on the combination of \eqref{eq:bias-of-init}, \eqref{eq:gj-def} and \eqref{eq:hgamma-estimator},  conditional on $\{X_i, i=1,\ldots, n\}$, the variance of $\hat\gamma$ is 
\begin{equation}
	V_0 = \frac{\sigma_E^2}{n} \hat{\Sigma}_A^{-1} + \frac{\sigma_Z^2}{n} \hat{U}^{\transpose} \hat{\Sigma}_X \hat{U},
	\label{eq:V0-def}
\end{equation}
where $\sigma_E^2 = \var[E_{M,i}]=\var[E_i^{\transpose} \theta_{M}]$ and $\sigma_Z^2 = \var[Z_{i}]$.
In practice, an estimator $\hat\sigma_Z^2$ of $\sigma_Z^2$ could be derived from the scaled Lasso method. Let $\sigma^2$ be the variance of the residual when regressing $Y_i$ on $A_i$. In light of  $\sigma_E^2 = \sigma^2 - \sigma_Z^2$, we can estimate  $\sigma_E^2$ by $\hat\sigma_E^2=\max\{\hat\sigma^2-\hat\sigma_Z^2, 0\}$, where $\hat\sigma^2$ is the sample version of $\sigma^2$. 
This allows us to obtain an estimated variance $\hat{V}_0=\frac{\hat{\sigma}_E^2}{n} \hat{\Sigma}_A^{-1} + \frac{\hat{\sigma}_Z^2}{n} \hat{U}^{\transpose} \hat{\Sigma}_X \hat{U}$.

When the null of \eqref{test-problem-product} holds, the scenario where $\theta_{M}=0$ and $\beta_{A}=0$ is of particular interest as it is commonly encountered in genome-wide analyses \citep{huang2019genome}, which may lead to a super-efficiency issue. 
To illustrate this, consider the case of $q=1$ for simplicity. 
Since $\sigma_E^2=0$ when $\theta_{M}=0$, the standard deviation $\sqrt{V_0}$ is reduced to $\sqrt{V_0}=\frac{\sigma_Z}{\sqrt{n}}(\hat{u}_{1}^{\transpose}\hat{\Sigma}_{X}\hat{u}_{1})^{1/2}$. The magnitude of this value is in the order of $n^{-1/2}\|g_1\|_2$, where $\|g_1\|_2=\op(1)$ in certain cases when  $\beta_{A}=0$, according to Lemma \ref{lemma:bound-G_21} in Appendix \ref{sec:lemma}.
Consequently, the magnitude of $\sqrt{V_0}$ converges to zero at the rate $n^{-1/2}$ when $\beta_A\neq 0$ but potentially at a rate faster than $n^{-1/2}$ when $\beta_{A}=0$, making it difficult to \yinan{find an accuracy estimator of $\sqrt{V_0}$ for constructing} a valid test. 
To address this issue, we introduce a ridge to $V_0$ and consider the enlarged covariance matrix
\begin{equation}
	V=V_0+\frac{\tau}{n} I_q \qquad\text{ and its estimator } \qquad\hat V=\hat V_0+\frac{\tau}{n} I_q,
	\label{eq:V-def}
\end{equation}
where $I_q$ is the $q\times q$ identity matrix, and $\tau>0$ is a positive constant; a similar strategy was adopted in \cite{guo2021group} with recommended $\tau=0.5$ or $\tau=1$.

Based on the debiased estimator $\hat{\gamma}$ from \eqref{eq:hgamma-estimator} and the corresponding estimated variance $\hat{V}$ from \eqref{eq:V-def}, we propose the test statistics $\|T\|_\infty$ with 
\begin{equation*}
	T=\left(\hat{\gamma}_j / \sqrt{\hat{V}_{jj}},~ j=1,\ldots, q\right),
\end{equation*}
where $\hat{\gamma}_j$ is the $j$th element in $\hat{\gamma}$ and $\hat{V}_{jj}$ is the $j$th diagonal element in $\hat{V}$.
By applying the Bonferroni correction criterion, we adopt the following test
\begin{equation}
	\phi_{\alpha} = \indicator{\|T\|_{\infty}>\Phi^{-1}(1-\alpha/(2q))},
	\label{eq:test-function-bonf}
\end{equation}
where $\Phi(\cdot)^{-1}$ is the quantile function of the standard Gaussian distribution. The corresponding p-value is given by 
\[
P = \min_{1\le j \le q} q \cdot 2 \prob\left(Z>|T_j|\right),
\]
where $Z\sim N(0,1)$. We reject the null hypothesis in \eqref{test-problem-product} when $\phi_{\alpha}=1$ or equivalently when $P<\alpha$. In this case, there is statistically significant evidence for the presence of the overall mediation effect between the outcome and exposures.

\begin{remark}\label{rem:ext}
	An extension for incorporating additional covariates is provided in Appendix \ref{sec:incorp-covariates}.
\end{remark}

\section{Theoretical Results}\label{sec:theory}

\subsection{\label{sec:theory-assump}Assumptions}
To state our assumptions, we first introduce the concepts of subGaussianity and norm-subGaussianity. A real-valued random variable $S$ is subGaussian with a parameter $\sigma>0$ if $\prob(|S-\expect S|\geq t)\leq 2e^{-t^2/(2\sigma^2)}$. When $S\in\real^p$ is a random vector, it is subGaussian with a parameter $\sigma>0$ if $v^\top S$ is subGaussian with the parameter $\sigma$ for all $v\in\real^p$ such that $\|v\|_2=1$. 
A random vector $S\in \real^{p}$ is norm-subGaussian  with a parameter $\sigma>0$ \citep{jin2019short}, if $\prob\left(\|S-\expect S\|_2\ge t\right) \le 2 e^{-t^2/(2\sigma^2)}$ for all $t\in \real$. Norm-subGaussianity generalizes the usual subGaussianity as	both subGaussian and bounded random vectors are norm-subGaussian \citep[Lemma 1,][]{jin2019short}. 
Recall  $X=(A, M)$, $\theta=(\theta_A^{\transpose}, \theta_M^{\transpose})^{\transpose}$,  $Z=(Z_{1}, \ldots, Z_{n})^{\transpose}$, $E=(E_1, \ldots, E_n)^{\transpose}$, and $E_{M}=(E_{M,1}, \ldots, E_{M,n})^{\transpose}=E \theta_{M}$ in \eqref{eq:mediator-exposure-matrix}  and \eqref{eq:Y-X-equation}. 
To study the theoretical properties of the proposed test, we require the following assumptions. 

\begin{enumerate}
	\item[(A1)] $X_1,\ldots,X_n$ are i.i.d. $(q+p)$-dimensional centered subGaussian  random vectors with its covariance $\Sigma_X$ satisfying $c_0 \le \Lambda_{\min}(\Sigma_X) \le \Lambda_{\max}(\Sigma_X) \le C_0$ for positive constants $C_0 \ge c_0 > 0$. 
	The error vectors $E_1,\ldots,E_n$ are i.i.d.  and satisfy the moment conditions $\expect[E_i|A_i]=0$ and $\var[E_i|A_i]=\Sigma_{E}$ for some unknown symmetric positive semi-definite matrix $\Sigma_{E}$ with $\|\Sigma_{E}\|_2<\infty$. In addition,  conditional on $A_i$, each $E_i$ is norm-subGaussian with a parameter $\sigma$.
	Also, the {i.i.d.} variables $Z_i$  are subGaussian and satisfy the moment conditions $\expect[Z_{i}|X_i]=0$ and $\expect[Z_{i}^2|X_i]=\sigma_Z^2$ for some unknown positive constant $0< \sigma_Z^2 < \infty$. 
	\item[(A2)] $\max_{1\le i \le n}\max\left\{\expect[E_{M,i}^{2+\nu}|X_{i}], \expect[Z_{i}^{2+\nu}|X_{i}]\right\} \le M_0$ for  constants $\nu, M_0 >0$.
\end{enumerate}
Assumptions similar to (A1) are commonly adopted in the high-dimensional statistics and high-dimensional mediation analysis \citep{buhlmann2011statistics,guo2022statistical}. 
Assumption (A1) implies $\expect[E_{M,i}|A_i]=0$ and $\expect[E_{M,i}^2|A_i]=\sigma_E^2$ for some positive constant $0\le \sigma_E^2<\infty$. We remark that no independence between $X_i$, $E_i$ and $Z_i$ is required. In contrast, \cite{zhou2020estimation} assumed independence between $E_i$ and $A_i$ and independence between $Z_{i}$ and $X_i$. In addition to these independence assumptions, \cite{guo2022statistical} further assumed that $E_i$ and $Z_{i}$ are independent. In fact, our conditional moment assumptions on $E_i$ and $Z_{i}$ in Assumption (A1), satisfied under the independence assumptions  of \cite{zhou2020estimation} and \cite{guo2022statistical}, can accommodate more general dependence structures among $X_i$, $E_i$ and $M_i$. Moreover, Assumption (A1) guarantees the existence of the ordinary least-squares estimator $\hat{\beta}_{A}$ with probability tending to one, even when $q$ grows with $n$; see the proof of Lemma \ref{lemma:bound-G_21} in Appendix \ref{sec:lemma} for details.
Assumption (A2) poses further mild moment conditions on $E_{M,i}$ and $Z_i$.

We also assume the following  conditions on the initial estimators.
\begin{enumerate}
	\item[(B1)] With probability larger than $1-h(n)$, $\|\hat{\theta}-\theta\|_1 \lesssim s \sqrt{\log(q+p)/n}$, where $h(n)\to 0$ as $n \to \infty$.
	\item[(B2)] $| \hat{\sigma}_Z^2/\sigma_Z^2 - 1 | \stackrel{p}{\to} 0$ and $\hat{\sigma}_E^2+\hat{\sigma}_Z^2 \stackrel{p}{\to} \sigma_E^2+\sigma_Z^2$, as $n\to \infty$.
\end{enumerate}
The convergence of $\hat\sigma_Z^2$ in Assumption (B2) holds when $\sigma_Z^2$ is estimated by the scaled Lasso \citep{sun2012scaled}. 

Assumption (B1) can be satisfied by Lasso-type estimators under the compatibility condition and certain sparsity structures, where $s$ serves as a sparse parameter. To introduce the compatibility condition, given a set $\mathcal{S} \subset \{1, \ldots, q+p\}$, for any positive number $\eta\ge 1$, define the set
\begin{equation}
	\mathcal{C}(\mathcal{S}, \eta) = \{
	u \in \real^{q+p}: \|u_{\mathcal{S}^c}\|_1 \le \eta \|u_{\mathcal{S}}\|_1 \},
	\label{eq:sparse-cone}
\end{equation}
where $u_{\mathcal{S}}$ is the sub-vector of $u$ with coordinates in $\mathcal{S}$ and $\mathcal{S}^c$ is the complement of $\mathcal{S}$. 	
For a symmetric positive semi-definite matrix $\Sigma_0$, define the compatibility constant $\phi_0(\Sigma_0, \mathcal{S}, \eta)$ for $\Sigma_0$ with respect to $\mathcal{C}(\mathcal{S}, \eta)$ via
\[
\phi_0^2(\Sigma_0, \mathcal{S}, \eta) = \inf\left\{ \frac{u^{\transpose} \Sigma_0 u |\mathcal{S}|}{\|u\|_1^2} : u\in \mathcal{C}(\mathcal{S}, \eta), u\ne 0 \right\},
\]
where $|\mathcal S|$ is the cardinality of $\mathcal S$. 
We say a design matrix $X$ satisfies the compatibility condition over $\mathcal{S}$ with a parameter $\eta$ if $\phi_0(\hat{\Sigma}, \mathcal{S}, \eta)>0$, where $\hat{\Sigma}=n^{-1} X^{\transpose} X$ is the sample covariance matrix of $X$.
Similarly, we say that a (population) covariance matrix $\Sigma$ satisfies the compatibility condition over $\mathcal{S}$ with a parameter $\eta$ if $\phi_0(\Sigma, \mathcal{S}, \eta)>0$.
It is known that the compatibility condition is implied by the restricted eigenvalue condition, which is another well-known condition used to establish the consistency of Lasso-type estimators \citep{van2009conditions}. 

For the sparsity structures, in this paper, we consider the capped-$\ell_1$ sparse structure on the regression coefficient $\theta$, 
\begin{equation}
	\sum_{j=1}^{q+p} \min\{|\theta_j|/(\sigma_Z \lambda_0), 1\} \le s
	\label{eq:capped-l1}
\end{equation}
with $\lambda_0=\sqrt{2\log(q+p) / n}$,  
which has been extensively examined \citep{sun2013sparse, zhang2014confidence, cai2021optimal}. As highlighted in \cite{zhang2014confidence}, the capped-$\ell_1$ condition in \eqref{eq:capped-l1} holds when $\theta$ is $\ell_0$ sparse with $\|\theta\|_0\le s$ or $\ell_r$ sparse with $\|\theta\|_r^r / (\sigma_Z \lambda_0)^r \le s$ for $0<r\le 1$. 
The following proposition shows that Lasso-type estimators satisfy Assumption (B1).

\begin{proposition}
	\label{prop:CC-general-require}
	Suppose that Assumption (A1) holds and the population covariance matrix $\Sigma_X$ satisfies the compatibility condition over some $\mathcal{S}\subset \{1, \ldots, q+p\}$ with a parameter $\eta$. 
	Then, there exists a constant $C>0$, not depending on $q+p$, such that, for all sufficiently large $n$, with probability at least $1-C(q+p)^{-1}$, 
	the design matrix $X$ satisfies the compatibility condition over $\mathcal{S}$ with the parameter $\eta$. 
	Consequently, if additionally $\theta$ in \eqref{eq:Y-X-equation} possesses the capped-$\ell_1$ sparsity {and $\log(q+p)\le n/(32e^2)$} for all sufficiently large $n$,
	then the following statements hold with probability approaching one: 
	\begin{enumerate}
		\item The Lasso estimator $\hat{\theta}$ with the tuning parameter $\lambda_n \asymp \sqrt{\log(q+p) / n}$ satisfies Assumption (B1).
		\item The scaled Lasso estimators $(\hat{\theta}, \hat{\sigma}_{Z})$ \citep{sun2012scaled} satisfy Assumptions (B1) and (B2). 
	\end{enumerate}
\end{proposition}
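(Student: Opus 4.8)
The plan is to treat the two parts of the proposition in sequence: first I would prove the transfer of the compatibility condition from the population covariance $\Sigma_X$ to the sample Gram matrix $\hat{\Sigma}_X = n^{-1}X^{\transpose}X$, and then, granting this transfer as a high-probability event, I would invoke the standard oracle theory for the Lasso and the scaled Lasso to verify Assumptions (B1) and (B2). Throughout, the guiding idea is that Assumption (A1) supplies exactly the subGaussianity needed to control the two relevant element-wise $\ell_\infty$ deviations, namely that of $\hat{\Sigma}_X$ from $\Sigma_X$ and that of $n^{-1}X^{\transpose}Z$ from zero.

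For the transfer, the key device is a deviation bound for $\hat{\Sigma}_X$ in the element-wise $\ell_\infty$ norm. Under Assumption (A1) each coordinate $X_{ij}$ is subGaussian, so each centered product $X_{ij}X_{ik} - (\Sigma_X)_{jk}$ is subexponential with parameter controlled by $\Lambda_{\max}(\Sigma_X) \le C_0$. Applying a Bernstein inequality to each of the $(q+p)^2$ entries and taking a union bound yields $\|\hat{\Sigma}_X - \Sigma_X\|_\infty \lesssim \sqrt{\log(q+p)/n}$ on an event of probability at least $1 - C(q+p)^{-1}$, with $C$ not depending on $q+p$. The hypothesis $\log(q+p) \le n/(32e^2)$ is precisely what places $\sqrt{\log(q+p)/n}$ in the regime where the Bernstein bound delivers the subGaussian-type rate with a dimension-free constant. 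On this event, for any $u \in \mathcal{C}(\mathcal{S}, \eta)$ the elementary decomposition
\[
u^{\transpose} \hat{\Sigma}_X u \ge u^{\transpose} \Sigma_X u - \|\hat{\Sigma}_X - \Sigma_X\|_\infty \|u\|_1^2 \ge \left(\frac{\phi_0^2(\Sigma_X, \mathcal{S}, \eta)}{|\mathcal{S}|} - \|\hat{\Sigma}_X - \Sigma_X\|_\infty\right)\|u\|_1^2,
\]
combined with the definition of $\phi_0$, shows that once $\|\hat{\Sigma}_X - \Sigma_X\|_\infty \le \phi_0^2(\Sigma_X, \mathcal{S}, \eta)/(2|\mathcal{S}|)$, which holds for all sufficiently large $n$, we obtain $\phi_0^2(\hat{\Sigma}_X, \mathcal{S}, \eta) \ge \phi_0^2(\Sigma_X, \mathcal{S}, \eta)/2 > 0$. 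This establishes the first assertion.

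For the second part, I would first record the noise-control event. Since $\expect[Z_i \mid X_i]=0$, each entry $n^{-1}(X^{\transpose}Z)_j = n^{-1}\sum_i X_{ij}Z_i$ is a centered average of subexponential products, so another Bernstein-plus-union-bound argument gives $\|n^{-1}X^{\transpose}Z\|_\infty \lesssim \sigma_Z\sqrt{\log(q+p)/n}$ with probability approaching one. Intersecting this with the compatibility event from Part~1 and using the capped-$\ell_1$ sparsity \eqref{eq:capped-l1} of $\theta$, the standard Lasso oracle inequality under the compatibility condition yields $\|\hat{\theta} - \theta\|_1 \lesssim s\sqrt{\log(q+p)/n}$, which is Assumption (B1); this proves statement~1. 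For statement~2, the joint analysis of the scaled Lasso in \cite{sun2012scaled} applies on the same good events, delivering both the $\ell_1$ rate (B1) and the consistency $\hat{\sigma}_Z^2/\sigma_Z^2 \to 1$, i.e. the first half of (B2). The second half of (B2) follows by noting $\hat{\sigma}_E^2 + \hat{\sigma}_Z^2 = \max\{\hat{\sigma}^2 - \hat{\sigma}_Z^2, 0\} + \hat{\sigma}_Z^2 = \max\{\hat{\sigma}^2, \hat{\sigma}_Z^2\}$ and that $\hat{\sigma}^2 \stackrel{p}{\to}\sigma^2 = \sigma_E^2 + \sigma_Z^2 \ge \sigma_Z^2$ from the low-dimensional regression of $Y_i$ on $A_i$, so that $\max\{\hat{\sigma}^2, \hat{\sigma}_Z^2\}\stackrel{p}{\to}\sigma_E^2+\sigma_Z^2$.

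The main obstacle is the concentration step in Part~1: one must ensure the element-wise deviation bound holds with a constant $C$ genuinely independent of the growing dimension $q+p$, and that the threshold $\phi_0^2(\Sigma_X,\mathcal{S},\eta)/(2|\mathcal{S}|)$ is eventually dominated by the $\sqrt{\log(q+p)/n}$ rate under the implicit scaling of $|\mathcal{S}|$; carefully tracking the subexponential parameters is what makes the constant $32e^2$ appear. The remaining ingredients — the oracle inequality under capped-$\ell_1$ sparsity and the scaled-Lasso consistency — are standard and can be quoted from \cite{van2009conditions, zhang2014confidence, sun2012scaled}. The only substantive adaptation is that Assumption (A1)'s conditional-moment conditions, rather than full independence, are exactly what render $X^{\transpose}Z$ mean-zero and subexponential, so the noise-control event holds without any independence between $E_i$ and $Z_i$.
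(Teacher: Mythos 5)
Your proposal is correct and follows essentially the same route as the paper: the paper obtains the population-to-sample compatibility transfer by citing Lemma 6.17 of B\"uhlmann and van de Geer (2011) --- whose proof is precisely your element-wise Bernstein-plus-union-bound perturbation argument --- and then verifies (B1) via a Bernstein bound on $\|n^{-1}X^{\transpose}Z\|_{\infty}$ combined with the Lasso oracle inequality under capped-$\ell_1$ sparsity (Corollary 4.5 of Fan et al., 2020), and handles the scaled Lasso by citing Theorem 4 of Zhang and Zhang (2014). The differences are cosmetic: the paper uses the $\log(q+p)\le n/(32e^2)$ condition in the noise-term concentration rather than in the covariance concentration, and your explicit identity $\hat{\sigma}_E^2+\hat{\sigma}_Z^2=\max\{\hat{\sigma}^2,\hat{\sigma}_Z^2\}$ for the second half of (B2) spells out a detail the paper leaves to its citation.
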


\subsection{Validity and Consistency}\label{sec:theory-main-results}

We first present an asymptotic normality result about $\hat\gamma$. Let $G=(g_1, \ldots, g_q)\in \real^{(q+p)\times q}$, where $g_j$ is defined in \eqref{eq:gj-def}.

\begin{theorem}\label{thm:asym-norm-hgamma-decomp}
	Suppose that Assumptions (A1), (A2) and (B1) hold and $q\ll \min\{p^\zeta,\sqrt n\}$ for a (arbitrarily large but fixed) constant $\zeta>0$.
	Let $W = n^{-1} \hat{\Sigma}_{A}^{-1} A^{\transpose} E_{M} + n^{-1} \hat{U}^{\transpose} X^{\transpose} Z$ and $B = (\hat{\Sigma}_{X} \hat{U} - G)^{\transpose} (\theta - \hat{\theta})$. Then, we have
	\[
	\hat{\gamma} - \gamma = W + B, 
	\quad
	\text{with}
	\quad 
	W_{j}/\sqrt{V_{jj}} \stackrel{d}{\to} N(0, 1) ~\text{for each}~j=1,\ldots, q,
	\]
	where {$\hat{\gamma}$ is given in \eqref{eq:hgamma-estimator}}, $W_{j}$ is the $j$th element in $W$ and $V_{jj}$ is the $j$th diagonal element in $V$ defined in \eqref{eq:V-def}. Further, with probability tending to one, $\|DB\|_{\infty} \lesssim s q\log(q+p)/\sqrt{n}$, where $D$ is the diagonal matrix with diagonal elements $1/\sqrt{V_{jj}}, j=1,\ldots, q$. 
\end{theorem}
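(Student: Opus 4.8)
The plan is to establish the three claims in turn: the exact identity $\hat\gamma-\gamma=W+B$, the coordinatewise asymptotic normality of $W$, and the uniform bound on $DB$. The decomposition is purely algebraic, so I would dispatch it first. Substituting $Y=X\theta+Z$ into \eqref{eq:hgamma-estimator} gives $\hat\gamma-\gamma=(\tilde\gamma-\gamma)+\hat U^\top\hat\Sigma_X(\theta-\hat\theta)+n^{-1}\hat U^\top X^\top Z$. The one identity that makes everything collapse is that, because $g_j=(\zerovec{q}^\top,\tilde g_j^\top)^\top$ loads only on the mediator block while $\theta=(\theta_A^\top,\theta_M^\top)^\top$, we have $g_j^\top(\theta-\hat\theta)=\tilde g_j^\top(\theta_M-\hat\theta_M)$, hence $G^\top(\theta-\hat\theta)=-\hat\Sigma_A^{-1}\hat\Sigma_{AM}(\hat\theta_M-\theta_M)$. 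Feeding this together with the pilot expansion \eqref{eq:bias-of-init} into the display cancels the leading bias of $\tilde\gamma$ against the debiasing correction and leaves exactly $W=n^{-1}\hat\Sigma_A^{-1}A^\top E_M+n^{-1}\hat U^\top X^\top Z$ and $B=(\hat\Sigma_X\hat U-G)^\top(\theta-\hat\theta)$, using that $\hat\Sigma_X$ is symmetric.

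For the asymptotic normality, writing $e_j^\top\hat\Sigma_A^{-1}=:w_j^\top$, I would split $W_j=W_j^{(1)}+W_j^{(2)}$ with $W_j^{(1)}=n^{-1}\sum_i(w_j^\top A_i)E_{M,i}$ and $W_j^{(2)}=n^{-1}\sum_i(\hat u_j^\top X_i)Z_i$, and work with the self-normalized statistic $W_j/\sqrt{(V_0)_{jj}}$, the $V_0$-part of $V_{jj}$ in \eqref{eq:V-def}; the ridge $\tau/n$ only inflates the denominator and, as explained after \eqref{eq:V-def}, is what secures validity in the degenerate null rather than contributing genuine variance. The two pieces are \emph{exactly} uncorrelated: conditioning on $X=(A,E)$ makes the weights of $W_j^{(2)}$ and all of $W_j^{(1)}$ measurable while $\expect[Z_i\mid X]=0$, so $\cov(W_j^{(1)},W_j^{(2)})=0$ and the variance is $(V_0)_{jj}=\tfrac{\sigma_E^2}{n}(\hat\Sigma_A^{-1})_{jj}+\tfrac{\sigma_Z^2}{n}\hat u_j^\top\hat\Sigma_X\hat u_j$, matching \eqref{eq:V0-def}. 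I would then run two conditional central limit theorems. Conditionally on $X$, $W_j^{(2)}$ is a fixed-weight sum of the independent mean-zero $Z_i$, and a Lyapunov argument applies: constraint \eqref{VePD-constraint-lindeberg} bounds $\max_i|\hat u_j^\top X_i|=\|X\hat u_j\|_\infty\le\|g_j\|_2\mu$, which with the $(2+\nu)$-moment control of Assumption (A2) and $\hat u_j^\top\hat\Sigma_X\hat u_j\asymp\|g_j\|_2^2$ renders the Lyapunov ratio of order $\mu^\nu n^{-\nu/2}\asymp(\log n)^\nu n^{-\nu/2}\to 0$, the factors $\|g_j\|_2$ cancelling (precisely the purpose of scaling \eqref{VePD-constraint-lindeberg} by $\|g_j\|_2$). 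Conditionally on $A$, $W_j^{(1)}$ is likewise a fixed-weight sum of the independent mean-zero $E_{M,i}$, handled by a second Lyapunov CLT using $\|w_j\|_2\lesssim 1/c_0$, $\max_i\|A_i\|_2\lesssim\sqrt{q\log n}$ from the subGaussianity in (A1), and $q\ll\sqrt n$. Finally, since a normalized sum $(s_1G_1+s_2G_2)/\sqrt{s_1^2+s_2^2}$ of two independent standard normals is standard normal, the exact uncorrelatedness together with the two conditional CLTs yields $W_j/\sqrt{(V_0)_{jj}}\stackrel{d}{\to}N(0,1)$; rigorously I would carry this out by conditioning on $X$, replacing the conditional characteristic function of $W_j^{(2)}$ by its Gaussian limit, and passing to the limit by dominated convergence while invoking the conditional CLT for $W_j^{(1)}$.

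The step I expect to be the main obstacle, and what distinguishes the analysis from the single-equation treatment of \cite{cai2021optimal}, is that the loading $g_j=\hat\beta_{A,j}$, and hence the projection direction $\hat u_j$, is \emph{random}, being driven by the estimation of $\beta_A$ in the coupled equation \eqref{eq:mediator-exposure-matrix}. I must therefore control the random conditional variances $\hat u_j^\top\hat\Sigma_X\hat u_j$ and $(\hat\Sigma_A^{-1})_{jj}$ uniformly in $j$ and verify the Lindeberg conditions for weights that themselves depend on the mediator noise $E$. This is where Lemma \ref{lemma:bound-G_21} (controlling $\|g_j\|_2$), the eigenvalue and subGaussian-concentration bounds from (A1), and the rate $q\ll\min\{p^\zeta,\sqrt n\}$ all enter. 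Crucially, the arbitrary magnitude of $\beta_A$, which can make $\|g_j\|_2$ large, is neutralized by the self-normalization, since both $W_j^{(2)}$ and its standard deviation scale like $\|g_j\|_2$; this is the mechanism that lets the test accommodate an arbitrary exposure--mediator relationship.

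For the bias bound I would control each coordinate by Hölder's inequality and constraint \eqref{VePD-constraint-bias}: $|B_j|\le\|\hat\Sigma_X\hat u_j-g_j\|_\infty\|\theta-\hat\theta\|_1\le\|g_j\|_2\lambda\|\theta-\hat\theta\|_1$. Assumption (B1) gives $\|\theta-\hat\theta\|_1\lesssim s\sqrt{\log(q+p)/n}$ and $\lambda\asymp\sqrt{\log(q+p)/n}$, so $|B_j|\lesssim\|g_j\|_2\,s\log(q+p)/n$ on an event of probability tending to one. To tame the arbitrary size of $\|g_j\|_2$ I would lower-bound the denominator by $\sqrt{V_{jj}}\ge(\sigma_Z^2\hat u_j^\top\hat\Sigma_X\hat u_j/n)^{1/2}\gtrsim\|g_j\|_2/\sqrt n$, which follows from constraint \eqref{VePD-constraint-var} (yielding $g_j^\top\hat\Sigma_X\hat u_j\ge(1-\lambda)\|g_j\|_2^2$), Cauchy--Schwarz, and $\Lambda_{\max}(\hat\Sigma_X)\lesssim 1$ from (A1). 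Dividing, the $\|g_j\|_2$ cancels so that each coordinate of $DB$ is $\lesssim s\log(q+p)/\sqrt n$; collecting the high-probability events of Lemma \ref{lemma:bound-G_21} uniformly over the $q$ random loadings then delivers $\|DB\|_\infty\lesssim sq\log(q+p)/\sqrt n$.
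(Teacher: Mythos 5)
Your overall architecture is the same as the paper's: the identical algebraic decomposition, the bias bound via \eqref{VePD-constraint-bias}, Assumption (B1), and the normalization $\sqrt{V_{jj}}\gtrsim \|g_j\|_2/\sqrt n$, and a conditional Lindeberg/Lyapunov CLT driven by \eqref{VePD-constraint-lindeberg}, concentration of $\hat\Sigma_A^{-1}$, control of $\max_i\|A_i\|_2$, and a characteristic-function/dominated-convergence step to uncondition. But one step in your argument would fail. You derive the load-bearing bound $\hat u_j^{\transpose}\hat\Sigma_X\hat u_j\gtrsim\|g_j\|_2^2$ from \eqref{VePD-constraint-var}, Cauchy--Schwarz, and ``$\Lambda_{\max}(\hat\Sigma_X)\lesssim 1$ from (A1)''. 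Assumption (A1) bounds only the \emph{population} eigenvalues; in the regime this paper targets, where $p$ is comparable to or larger than $n$, the sample covariance $\hat\Sigma_X$ has operator norm of order $p/n$, which diverges, so your Cauchy--Schwarz step $g_j^{\transpose}\hat\Sigma_X g_j\le \Lambda_{\max}(\hat\Sigma_X)\|g_j\|_2^2$ gives nothing useful. What is actually needed is $g_j^{\transpose}\hat\Sigma_X g_j\lesssim\|g_j\|_2^2$, which holds not via an operator-norm bound but via concentration of the quadratic form in a single direction around $g_j^{\transpose}\Sigma_X g_j\le C_0\|g_j\|_2^2$ --- and since each $g_j$ is \emph{random} (it is $\hat\beta_{A,j}$, built from the coupled equation), this requires an additional conditioning argument and a union bound over $j$. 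This is precisely the paper's event $\mathcal{A}_2=\{\|g_j\|_2^2/n\le cV_{jj}\text{ for all }j\}$, established with probability at least $1-qp^{-C_1}\ge 1-p^{-(C_1-\zeta)}$ via Lemma 1 of \cite{cai2021optimal}; notice this is the only place the hypothesis $q\ll p^{\zeta}$ enters, and your proposal never uses that hypothesis --- a symptom of the missing step. Since $V_{jj}\gtrsim\|g_j\|_2^2/n$ underpins both your CLT normalization (the cancellation of $\|g_j\|_2$ in the Lyapunov ratio) and your bias bound, the gap propagates to both parts.

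A second, subtler issue is your combination of two conditional CLTs. Conditional on $X=(A,M)$, the noise $E=M-A\beta_A^{\transpose}$ is determined, so $W_j^{(1)}$ is \emph{degenerate} given $X$; your plan must therefore run the CLT for $W_j^{(1)}$ conditional on $A$ only, while the normalizer $\sqrt{V_{jj}}$ (through $\hat u_j$, $\hat\Sigma_X$, $\|g_j\|_2$) is only $X$-measurable, so $W_j^{(1)}/\sqrt{V_{jj}}$ is not a fixed-weight sum given $A$. Uncorrelatedness plus two marginal asymptotic normalities does not yield joint normality of the sum for free, and your characteristic-function sketch, while repairable, leaves exactly this joint dependence unhandled. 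The paper sidesteps the two-CLT combination entirely by verifying Lindeberg's condition for the single combined array $\tilde W_j^i$ conditional on $X$ and the event $\mathcal{A}=\mathcal{A}_1\cap\mathcal{A}_2\cap\mathcal{A}_3$ (admittedly, the paper's own treatment of the $X$-measurability of $E_{M,i}$ is also somewhat informal here). One point in your favor: your per-coordinate bias bound actually yields the sharper $\|DB\|_{\infty}\lesssim s\log(q+p)/\sqrt n$; the paper's extra factor of $q$ arises from bounding a maximum by a sum, so stating the theorem's weaker bound is harmless.
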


In the above theorem, we allow the number $q$ of exposures to grow with $n$ in a polynomial rate, even though in practice most applications typically investigate a limited set of exposures/treatments. Equipped with this theorem, we are ready to analyze the theoretical properties of the proposed test $\phi_{\alpha}$.  In our analysis we consider the null hypothesis parameter space $\mathcal{H}_0(s)=\left\{ \xi\in \Xi(s): \beta_A^{\transpose} \theta_M = \zerovec{q} \right\}$ and the local alternative parameter space
\[
\mathcal{H}_1(s, \delta) = \left\{ \xi \in \Xi(s): {\beta_A^{\transpose} \theta_M = \delta/\sqrt{n}} \right\}, \quad \text{for some } \delta \in \real^q,
\]
with 
\[
\Xi(s) = \left\{ 
\xi=(\theta, \sigma_Z, \Sigma_X)\left| 
\begin{array}{c}
	\sum_{j=1}^{q+p} \min\{|\theta_j|/(\sigma_Z \lambda_0), 1\} \le s,~ 0< \sigma_Z \le M_0, \\
	c_0 \le \Lambda_{\min}(\Sigma_X) \le \Lambda_{\max}(\Sigma_X) \le C_0
\end{array}
\right.
\right\}
\] that encodes a capped-$\ell_1$ condition only on the vector $\theta$, 
where $M_0>0$ and $C_0\ge c_0 >0$ are positive constants defined in Assumptions (A1) and (A2), and $\lambda_0=\sqrt{2\log(q+p) / n}$. 
Recall $\Phi(\cdot)$ denotes the cumulative distribution function of the standard normal distribution, and let $\prob_\xi$ be the probability measure induced by the parameter $\xi$. Define $$F(\alpha, x, q) = q \left\{ \Phi\left(x+\Phi^{-1}\left(1-{\alpha}/{(2q)}\right)\right) - \Phi\left(x-\Phi^{-1}\left(1-{\alpha}/{(2q)}\right)\right) \right\}$$ for any $x\in\real$, as well as  $\delta_{\max}=\max_{1\le k \le q}|\delta_{k}|$ and  $\beta_{A,\max}=\max_{1\le k \le q}\|\beta_{A,k}\|_2$.

\begin{theorem}\label{thm:test-bonf}
	Suppose that Assumptions (A1), (A2) and (B1), (B2) hold,  $s \ll \sqrt{n}/(q\log(q+p))$ and {$q\ll \min\{p^\zeta,\sqrt n\}$} for a constant $\zeta>0$.
	\begin{itemize}
		\item (Validity).  For all $\alpha\in(0,1)$, 
		$$\underset{n\to \infty}{\overline{\lim}} \sup_{\xi \in \mathcal{H}_0(s)} \prob_{\xi}(\phi_{\alpha}=1) \le \alpha,$$ 
		that is,  the proposed test $\phi_{\alpha}$  in \eqref{eq:test-function-bonf} is asymptotically valid.
		\item (Power). The  power of the test $\phi_{\alpha}$ under the local alternative is asymptotically lower bounded by $1-F(\alpha,\Delta_n,q)$, where  
		$$\Delta_n={C \delta_{\max}}/{({\sigma\sqrt{\log(pn)/n}}+\beta_{A,\max}+C)}$$ for some constant $C>0$. That is,  
		\begin{align*}
			\lim_{n\to \infty} \inf_{\xi \in \mathcal{H}_1(s, \delta)} 
			\frac{\prob_{\xi}(\phi_{\alpha}=1)}{1 - F\left(\alpha, \Delta_n, q\right)}
			\ge
			1
		\end{align*}
		when $\lim_{n\rightarrow\infty}\{1-F(\alpha,\Delta_n,q)\}>0$.
	\end{itemize}
\end{theorem}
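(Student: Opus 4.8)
The plan is to deduce both parts from the decomposition $\hat{\gamma}-\gamma = W + B$ of Theorem~\ref{thm:asym-norm-hgamma-decomp}, after reducing everything to the studentized coordinates $T_j = \hat{\gamma}_j/\sqrt{\hat{V}_{jj}}$ with critical value $c = \Phi^{-1}(1-\alpha/(2q))$. Two preliminary facts carry most of the weight and are shared by validity and power. First, the bias is uniformly negligible after studentization: Theorem~\ref{thm:asym-norm-hgamma-decomp} gives $\|DB\|_{\infty}\lesssim s q\log(q+p)/\sqrt{n}$, which is $\op(1)$ under the assumed $s\ll\sqrt{n}/(q\log(q+p))$, so $\max_j |B_j|/\sqrt{V_{jj}}\to 0$ in probability. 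Second, the variance estimate is ratio-consistent, $\hat{V}_{jj}/V_{jj}\to 1$ uniformly in $j$; I would write $\hat{V}_{jj}-V_{jj}=n^{-1}(\hat{\sigma}_E^2-\sigma_E^2)(\hat{\Sigma}_A^{-1})_{jj}+n^{-1}(\hat{\sigma}_Z^2-\sigma_Z^2)(\hat{u}_j^{\transpose}\hat{\Sigma}_X\hat{u}_j)$ and divide by $V_{jj}\ge\tau/n$, so that the ridge in the denominator is precisely what keeps the ratio bounded when the genuine variance $(V_0)_{jj}$ degenerates, while Assumption (B2), the eigenvalue bound on $\hat{\Sigma}_A$, and the VePD control of $\hat{u}_j^{\transpose}\hat{\Sigma}_X\hat{u}_j$ force the ratio to one. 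Combining the two facts gives $\max_j|B_j|/\sqrt{\hat{V}_{jj}}=\op(1)$ as well.

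\textbf{Validity.} Under $\mathrm{H}_0$ we have $\gamma=\zerovec{q}$, hence $T_j=W_j/\sqrt{\hat{V}_{jj}}+\op(1)$ uniformly. The plan is to show $\overline{\lim}_n\sup_{\xi\in\mathcal{H}_0(s)}\prob_{\xi}(|T_j|>c)\le 2\{1-\Phi(c)\}$ coordinatewise, then apply the Bonferroni bound $\prob_{\xi}(\|T\|_{\infty}>c)\le\sum_{j=1}^q\prob_{\xi}(|T_j|>c)$ to obtain at most $\alpha+o(1)$. The per-coordinate bound comes from a conditional Berry--Esseen estimate for $W_j/\sqrt{(V_0)_{jj}}$ given $X$, whose Lindeberg/Lyapunov condition is exactly what the constraint \eqref{VePD-constraint-lindeberg} (bounding $\|X\hat{u}_j\|_{\infty}$) was designed to supply, together with Assumption (A2). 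Crucially, since $V_{jj}\ge(V_0)_{jj}$ by the ridge and $\hat{V}_{jj}/V_{jj}\to 1$, the studentized statistic is asymptotically stochastically no larger in magnitude than a standard normal: this yields exact level in the non-degenerate regime where $(V_0)_{jj}/V_{jj}\to 1$, and a conservative bound in the degenerate case $\beta_A=0$, $\theta_M=0$ where $(V_0)_{jj}/V_{jj}\to 0$, unifying both within one argument. The quantitative care needed is that the Berry--Esseen error be $o(1/q)$ so it survives summation over the growing number $q$ of coordinates, which holds because $q\ll\sqrt{n}$ controls the accumulation of the per-coordinate errors.

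\textbf{Power.} Under $\mathcal{H}_1(s,\delta)$ write $\hat{\gamma}=\delta/\sqrt{n}+W+B$ and pick the coordinate $k$ attaining $|\delta_k|=\delta_{\max}$. Then $T_k=\mu_k+W_k/\sqrt{\hat{V}_{kk}}+\op(1)$ with non-centrality $\mu_k=\delta_k/\sqrt{n\hat{V}_{kk}}$ and $W_k/\sqrt{\hat{V}_{kk}}\stackrel{d}{\to}N(0,1)$. The heart of the argument is the upper bound $\sqrt{nV_{kk}}\lesssim\sigma\sqrt{\log(pn)/n}+\beta_{A,\max}+C$, which I would establish from the eigenvalue condition controlling $(\hat{\Sigma}_A^{-1})_{kk}$ and $\sigma_E^2$, the VePD variance control relating $\hat{u}_k^{\transpose}\hat{\Sigma}_X\hat{u}_k$ to $\|g_k\|_2^2$, and Lemma~\ref{lemma:bound-G_21}, which bounds $\|g_k\|_2$ by $\|\beta_{A,k}\|_2\le\beta_{A,\max}$ plus an ordinary-least-squares estimation error of order $\sigma\sqrt{\log(pn)/n}$. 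This yields $|\mu_k|\ge\Delta_n$. Since $\|T\|_{\infty}\ge|T_k|$ and $m\mapsto\prob(|N(0,1)+m|>c)$ is increasing in $|m|$, the power is at least $\prob(|N(0,1)+\Delta_n|>c)=1-\{\Phi(\Delta_n+c)-\Phi(\Delta_n-c)\}=1-q^{-1}F(\alpha,\Delta_n,q)\ge 1-F(\alpha,\Delta_n,q)$; dividing by $1-F(\alpha,\Delta_n,q)$ and taking the infimum over $\xi$ and the limit gives the stated bound whenever $\lim_n\{1-F(\alpha,\Delta_n,q)\}>0$.

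I expect the main obstacle to be the uniform treatment of validity across $\mathcal{H}_0(s)$ in a single argument: reconciling genuine asymptotic normality (exact level) in the non-degenerate regime with conservative control in the degenerate regime $\beta_A=0$, $\theta_M=0$, while keeping every approximation uniform over $\xi$ and small enough ($o(1/q)$) to survive the Bonferroni sum over a growing number of coordinates. This is exactly where the ridge $\tau/n$ and the three VePD constraints \eqref{VePD-constraint-bias}--\eqref{VePD-constraint-lindeberg} must be used in concert, and where the randomness of the loading $g_j$ (from estimating $\beta_A$) together with the coupling of the two LSEM equations makes the analysis genuinely harder than the single-equation, fixed-loading setting of \cite{cai2021optimal}.
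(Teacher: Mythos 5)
Your proposal follows essentially the same route as the paper's proof: the decomposition $\hat{\gamma}-\gamma=W+B$ from Theorem \ref{thm:asym-norm-hgamma-decomp}, negligibility of $\|DB\|_{\infty}$ under $s\ll\sqrt{n}/(q\log(q+p))$, ridge-aided ratio-consistency of $\hat{V}_{jj}/V_{jj}$ (split into the degenerate and non-degenerate regimes), a per-coordinate normal approximation plus a Bonferroni/union bound for validity, and the lower bound $\Delta_n$ obtained from Lemma \ref{lemma:bound-G_21} combined with the VePD constraint \eqref{VePD-constraint-var} for power. The only deviations are minor matters of execution: for validity you propose a quantitative Berry--Esseen bound where the paper uses the Lindeberg CLT plus dominated convergence, and for power you fix the deterministic coordinate attaining $\delta_{\max}$ and invoke monotonicity of the noncentral exceedance probability (yielding $1-q^{-1}F(\alpha,\Delta_n,q)\ge 1-F(\alpha,\Delta_n,q)$), whereas the paper works with the random index $j^{\star}=\argmax_j|v_j|$ and pays a union bound over all $q$ coordinates---both arguments rest on the same lemmas and arrive at the same bound.
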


In contrast with \cite{zhou2020estimation, guo2022statistical} that require additional  assumptions on $\beta_{A}$,  Theorem \ref{thm:test-bonf} shows that the proposed test remains valid for \textit{arbitrary} $\beta_{A}$, as the parameter space $\Xi$ encodes no assumptions on $\beta_A$. While $q$ (the number of exposures) is allowed to grow with $n$ in Theorem \ref{thm:test-bonf}, in practice the number of exposures or treatments is often low-dimensional.
\yinan{
Note that for any fixed (or sufficiently slowly diverging) $q\ge 1$ and $\alpha>0$, the function $F(\alpha, x, q)$ is monotone decreasing with respect to $x$. That is, a larger value of $\Delta_n$ results in a higher power for the proposed test $\phi_{\alpha}$. 
Furthermore, if mediation effect is of constant order, i.e., $\|\beta_{A}^{\transpose}\theta_{M}\|_{\infty}= O(1)$, then $\delta_{\max}\asymp \sqrt{n}$. This,  together with constant-order exposure-mediator coefficients ($\beta_{A,\max}=O(1)$), further implies that $\Delta_n\to \infty$ as $n\to \infty$. Consequently, the proposed test $\phi_{\alpha}$ achieves asymptotic power one. 
}

\section{Simulation Study}\label{sec:simu}

In this section, we conduct numerical simulations to assess the performance of the proposed test $\phi_{\alpha}$ (denoted as \yinan{$\text{Bonf}$-$1$}) described in Section \ref{sec:meth-tesing}. We employ the \yinan{scaled} Lasso estimator as the initial estimator for $\theta$ and select the related tuning parameter by the quantile-based penalty procedure in the \verb|R| package \verb|scalreg| \citep{sun2019scalreg}.
The tuning parameter $\lambda$ in \eqref{VePD-constraint-bias} and \eqref{VePD-constraint-var} is chosen by the searching procedure in the \verb|R| package \verb|SIHR| \citep{wang2022SIHR}. We set $\tau=1$ in \eqref{eq:V-def}.
Throughout the simulations, we consider the model specified by \eqref{eq:mediator-exposure} and \eqref{eq:outcome-mediator} as our data generating model, and adopt simulation settings similar to those of  \cite{zhou2020estimation} and \cite{guo2022statistical}.

\subsection{Size}\label{sec:simu-size}

The null hypothesis of \eqref{test-problem-product} is composed of the following four cases: 1) $\beta_A=0$ and $\theta_M=0$, 2) $\beta_A\ne 0$ and $\theta_M=0$, 3) $\beta_A=0$ and $\theta_M\ne 0$ and 4) $\beta_A^{\transpose} \theta_M=0$ but $\beta_A\ne 0$ and $\theta_M \ne 0$. 
In this section, we numerically assess the validity of the proposed test procedure and the alternative existing methods for each of the these cases. 

We consider $n=50$ and $n=300$ which correspond to small and large sample sizes, respectively. We set $p=n$ and consider two cases for $q$: $q=1$ and $q=3$.  
When the true coefficient vector $\theta_M\in \real^{p}$ is nonzero, \yinan{we investigate a hard sparsity case with a fixed sparsity parameter $s=5$: $\theta_{M,k} = 0.2 k \indicator{1 \le k \le s}$ for $k=1, \ldots, p$. We also explore other types of sparsity in Appendix \ref{sec:additional-simu}.}
When the true coefficient matrix $\beta_A\in\real^{p\times q}$ is nonzero, \yinan{we consider two scenarios: a sparse setting and a dense setting. In the sparse setting, each element of $\beta_A$ is given by
\[
\beta_{A,jk} = \begin{cases}
	0.2 \kappa_j(k-s) & \text{ if }s+1\leq k\leq 2s \\
	0 & \text{ otherwise,}
\end{cases}
\]
for $s=5$, $k=1,\ldots, p$ and $j=1,\ldots, q$, where each $\kappa_j(\cdot)$ is a random permutation of $\{1, \ldots, s\}$. In the dense setting, 
each element of $\beta_A$ is given by
\[
\beta_{A,jk} = \begin{cases}
	0.2 & \text{ if }s+1\leq k\leq p/2 \\
	0 & \text{ otherwise,}
\end{cases}
\]
for $s=5$, $k=1,\ldots, p$ and $j=1,\ldots, q$.}
For the coefficient $\theta_{A}$, we set $\theta_{A}=c_0 \onevec{q}$ with $c_0=0.5$ for $q=1$ and $c_0=0.3$ for $q=3$.

For the exposure $A_i=(A_{ij})_{j=1}^q$, the coordinates of $A_i$ are independently drawn from a distribution, for which we consider two cases: The Bernoulli distribution $\text{Bern}(0.5)$ and the Gaussian distribution $N(0, 0.5^2)$. 
\yinan{For error terms, we set} $Z_i\sim N(0,1)$ and $E_i\sim N_q(0,\Sigma_0)$ \yinan{with two scenarios for the covariance structure. In the first scenario, we use $\Sigma_{0}=\Sigma_{AR}=(0.5^{|i-j|})_{1\le i,j\le p}$, representing a covariance matrix with an AR(1) structure. In the second scenario, we consider $\Sigma_0=\Sigma_{CS}$ for a compound symmetric covariance matrix $\Sigma_{CS}$, where the diagonal elements of $\Sigma_{CS}$ are $8$ and the other elements are $6.4$. This choice of $\Sigma_{CS}$ is a more challenging setting where the mediators exhibit strong correlations, corresponding to a scenario where the irrepresentable condition may fail \citep{zhao2006model}.}
For each experimental setting, we independently generate $R=500$ replications. For each replication, the empirical size is computed as the proportion of rejections among the $R$ replications when the composite null hypothesis of \eqref{test-problem-product} is true.

\yinan{
For comprehensive comparisons, we consider several competitors. 
First, to assess the effect of $\tau$ in the proposed covariance matrix \eqref{eq:V-def}, we introduce a test, denoted $\text{Bonf}$-$0$, which is similar to $\text{Bonf}$-$1$ but with $\tau = 0$.
Next, recall that our proposed test \eqref{eq:test-function-bonf} is based on the Bonferroni correction. Another possible approach is to perform a $\chi^2$ test using the de-biased estimator $\hat{\gamma}$ from \eqref{eq:hgamma-estimator} and the corresponding estimated variance $\hat{V}$ from \eqref{eq:V-def}. We thus consider the $\chi^2$ tests with $\tau = 0$ and $\tau = 1$ in \eqref{eq:V-def}, referred to as $\chi^2$-$0$ and $\chi^2$-$1$, respectively. Note that when $q = 1$, the proposed Bonferroni-based test \eqref{eq:test-function-bonf} and this $\chi^2$ test are equivalent.
Additionally, we include the methods of \cite{zhou2020estimation} (denoted Zhou-$0$) and \cite{guo2022statistical} (denoted Guo-$0$), which also address the overall mediation effect $\gamma$ and the test problem \eqref{test-problem-product}. The tuning parameters for Zhou-$0$ and Guo-$0$ were selected using the \texttt{R} package developed by \cite{zhou2020estimation} and the method described in \cite{guo2022statistical}, respectively.
To examine whether the existing methods can benefit from the enlarged covariance strategy, as in $\hat{V}$ from \eqref{eq:V-def}, we also investigate tests in \cite{zhou2020estimation} and \cite{guo2022statistical} based on the covariance matrices with an additional enlarged factor $1/n \cdot I_q$ as in \eqref{eq:V-def}, referred to as Zhou-$1$ and Guo-$1$, respectively.
}

Table \ref{tab:size-Bern} presents the numerical results 
for the case when the exposures follow the Bernoulli distribution $\text{Bern}(0.5)$; the results for Gaussian exposures are provided in Appendix \ref{sec:additional-simu}. 
\yinan{
The results show that the proposed test, $\text{Bonf}$-$1$, remains valid across all cases of the composite null hypothesis in \eqref{test-problem-product} and for all settings of $\beta_A$. 
In contrast, Zhou-$0$ and Zhou-$1$ are invalid when $\theta_M = 0$, while Guo-$0$ and Guo-$1$ tend to fail when the sample size is small ($n=50$) or mediators are highly correlated ($\Sigma_0 = \Sigma_{\text{CS}}$).
Interestingly, although the theoretical analysis in \cite{guo2022statistical} requires a uniform signal strength condition for $\theta_{M}$ (i.e., $\theta_M$ should be sufficiently large), the method Guo-$0$ and Guo-$1$ appear to be conservative when $\Sigma_0=\Sigma_{AR}$, $\theta_{M}=0$ and $n$ is relatively large. This phenomenon might be attributed to the relatively aggressively selected tuning parameter that tends to yield a zero estimator of $\theta_M$ when $\theta_{M}=0$. 
We would like to remark that the results presented above do not suggest the failure of the existing methods. Rather, they underscore that these methods require stronger conditions to operate effectively. When these conditions are met, these methods often exhibit commendable power.
}

\yinan{
In addition, the results in Table \ref{tab:size-Bern} provide further insights.
First, the proposed test benefits primarily from two key components: the VePD de-biasing technique and the regularization $\tau/n \cdot I_q$ of the variance. Comparing the results of $\text{Bonf}$-$1$ with those of $\text{Bonf}$-$0$, the impact of $\tau$ appears to be relatively modest. Furthermore, Zhou-$1$ (Guo-$1$) shows only a slight adjustment over Zhou-$0$ (Guo-$0$). These findings suggest that the proposed test's effectiveness largely stems from the VePD technique, which differs fundamentally from the methods of \cite{zhou2020estimation} and \cite{guo2022statistical}.
Second, the performance of the proposed Bonferroni-based test ($\text{Bonf}$-$1$) is comparable to the $\chi^2$ test ($\chi^2$-$1$) in most simulation scenarios. This indicates that the proposed Bonferroni-based test is not overly conservative, even when compared to a method that accounts for the correlation structure of the de-biased estimator.
}

\setlength{\tabcolsep}{4pt} 
\renewcommand{\arraystretch}{0.85} 
\begin{table}[ht]
	\centering
	\caption{Empirical sizes for Bernoulli exposures with a nonzero $\theta_{A}=c_0 \onevec{q}$, where $c_0=0.5$ when $q=1$, and $c_0=0.3$ when  $q=3$.}
	\begin{tabular}{cccccccccccc}
		\cmidrule{1-12}
		Sparsity for $\theta_M$ & Case  & $q$   & $n$   & Zhou-$0$ & Zhou-$1$ & Guo-$0$ & Guo-$1$ & $\chi^2$-$0$ & $\chi^2$-$1$ & $\text{Bonf}$-$0$ & $\text{Bonf}$-$1$ \\
		\cmidrule{1-12}
		\multicolumn{1}{c}{\multirow{24}[24]{*}{\shortstack{Zero\\\newline{}($\theta_M=0$)}}} & \multicolumn{1}{c}{\multirow{4}[4]{*}{\shortstack{$\beta_A=0$\\\newline{}($\Sigma_0=\Sigma_{AR}$)}}} & \multirow{2}[2]{*}{1} & 50    & 0.180 & 0.164 & 0.208 & 0.208 & 0.056 & 0.046 & 0.056 & 0.046 \\
		&       &       & 300   & 0.662 & 0.652 & 0.000 & 0.000 & 0.034 & 0.032 & 0.034 & 0.032 \\
		\cmidrule{3-12}          &       & \multirow{2}[2]{*}{3} & 50    & 0.144 & 0.122 & 0.580 & 0.580 & 0.036 & 0.018 & 0.040 & 0.022 \\
		&       &       & 300   & 0.534 & 0.512 & 0.000 & 0.000 & 0.040 & 0.032 & 0.048 & 0.042 \\
		\cmidrule{2-12}          & \multicolumn{1}{c}{\multirow{4}[4]{*}{\shortstack{$\beta_A\ne 0$-Sparse\\\newline{}($\Sigma_0=\Sigma_{AR}$)}}} & \multirow{2}[2]{*}{1} & 50    & 0.154 & 0.152 & 0.232 & 0.232 & 0.066 & 0.052 & 0.066 & 0.052 \\
		&       &       & 300   & 0.198 & 0.196 & 0.000 & 0.000 & 0.038 & 0.036 & 0.038 & 0.036 \\
		\cmidrule{3-12}          &       & \multirow{2}[2]{*}{3} & 50    & 0.208 & 0.192 & 0.596 & 0.594 & 0.040 & 0.038 & 0.064 & 0.042 \\
		&       &       & 300   & 0.496 & 0.496 & 0.000 & 0.000 & 0.030 & 0.026 & 0.028 & 0.028 \\
		\cmidrule{2-12}          & \multicolumn{1}{c}{\multirow{4}[4]{*}{\shortstack{$\beta_A\ne 0$-Dense\\\newline{}($\Sigma_0=\Sigma_{AR}$)}}} & \multirow{2}[2]{*}{1} & 50    & 0.188 & 0.176 & 0.210 & 0.208 & 0.062 & 0.052 & 0.062 & 0.052 \\
		&       &       & 300   & 0.220 & 0.218 & 0.000 & 0.000 & 0.040 & 0.036 & 0.040 & 0.036 \\
		\cmidrule{3-12}          &       & \multirow{2}[2]{*}{3} & 50    & 0.164 & 0.152 & 0.556 & 0.556 & 0.058 & 0.034 & 0.054 & 0.040 \\
		&       &       & 300   & 0.698 & 0.696 & 0.000 & 0.000 & 0.042 & 0.042 & 0.052 & 0.046 \\
		\cmidrule{2-12}          & \multicolumn{1}{c}{\multirow{4}[4]{*}{\shortstack{$\beta_A=0$\\\newline{}($\Sigma_0=\Sigma_{CS}$)}}} & \multirow{2}[2]{*}{1} & 50    & 0.086 & 0.072 & 0.122 & 0.122 & 0.084 & 0.068 & 0.084 & 0.068 \\
		&       &       & 300   & 0.216 & 0.206 & 0.060 & 0.056 & 0.034 & 0.034 & 0.034 & 0.034 \\
		\cmidrule{3-12}          &       & \multirow{2}[2]{*}{3} & 50    & 0.096 & 0.074 & 0.288 & 0.288 & 0.050 & 0.030 & 0.056 & 0.024 \\
		&       &       & 300   & 0.236 & 0.218 & 0.068 & 0.066 & 0.054 & 0.046 & 0.054 & 0.048 \\
		\cmidrule{2-12}          & \multicolumn{1}{c}{\multirow{4}[4]{*}{\shortstack{$\beta_A\ne 0$-Sparse\\\newline{}($\Sigma_0=\Sigma_{CS}$)}}} & \multirow{2}[2]{*}{1} & 50    & 0.122 & 0.110 & 0.152 & 0.152 & 0.082 & 0.068 & 0.082 & 0.068 \\
		&       &       & 300   & 0.688 & 0.686 & 0.082 & 0.078 & 0.042 & 0.040 & 0.042 & 0.040 \\
		\cmidrule{3-12}          &       & \multirow{2}[2]{*}{3} & 50    & 0.108 & 0.098 & 0.336 & 0.336 & 0.072 & 0.048 & 0.070 & 0.036 \\
		&       &       & 300   & 0.746 & 0.742 & 0.100 & 0.098 & 0.062 & 0.058 & 0.058 & 0.050 \\
		\cmidrule{2-12}          & \multicolumn{1}{c}{\multirow{4}[4]{*}{\shortstack{$\beta_A\ne 0$-Dense\\\newline{}($\Sigma_0=\Sigma_{CS}$)}}} & \multirow{2}[2]{*}{1} & 50    & 0.094 & 0.082 & 0.138 & 0.138 & 0.074 & 0.056 & 0.074 & 0.056 \\
		&       &       & 300   & 0.362 & 0.354 & 0.076 & 0.074 & 0.038 & 0.034 & 0.038 & 0.034 \\
		\cmidrule{3-12}          &       & \multirow{2}[2]{*}{3} & 50    & 0.086 & 0.072 & 0.290 & 0.288 & 0.052 & 0.024 & 0.054 & 0.022 \\
		&       &       & 300   & 0.332 & 0.320 & 0.082 & 0.082 & 0.052 & 0.044 & 0.060 & 0.056 \\
		\cmidrule{1-12}
		\multicolumn{1}{c}{\multirow{24}[24]{*}{\shortstack{Hard\\ ($\theta_M\ne 0$)}}} & \multicolumn{1}{c}{\multirow{4}[4]{*}{\shortstack{$\beta_A=0$\\\newline{}($\Sigma_0=\Sigma_{AR}$)}}} & \multirow{2}[2]{*}{1} & 50    & 0.034 & 0.034 & 0.088 & 0.088 & 0.038 & 0.034 & 0.038 & 0.034 \\
		&       &       & 300   & 0.044 & 0.044 & 0.042 & 0.042 & 0.042 & 0.042 & 0.042 & 0.042 \\
		\cmidrule{3-12}          &       & \multirow{2}[2]{*}{3} & 50    & 0.044 & 0.042 & 0.168 & 0.166 & 0.054 & 0.050 & 0.056 & 0.054 \\
		&       &       & 300   & 0.038 & 0.038 & 0.044 & 0.044 & 0.034 & 0.034 & 0.036 & 0.034 \\
		\cmidrule{2-12}          & \multicolumn{1}{c}{\multirow{4}[4]{*}{\shortstack{$\beta_A\ne 0$-Sparse\\\newline{}($\Sigma_0=\Sigma_{AR}$)}}} & \multirow{2}[2]{*}{1} & 50    & 0.036 & 0.036 & 0.106 & 0.106 & 0.048 & 0.044 & 0.048 & 0.044 \\
		&       &       & 300   & 0.050 & 0.050 & 0.042 & 0.042 & 0.050 & 0.046 & 0.050 & 0.046 \\
		\cmidrule{3-12}          &       & \multirow{2}[2]{*}{3} & 50    & 0.048 & 0.048 & 0.192 & 0.192 & 0.046 & 0.046 & 0.062 & 0.056 \\
		&       &       & 300   & 0.074 & 0.074 & 0.044 & 0.044 & 0.044 & 0.042 & 0.038 & 0.036 \\
		\cmidrule{2-12}          & \multicolumn{1}{c}{\multirow{4}[4]{*}{\shortstack{$\beta_A\ne 0$-Dense\\\newline{}($\Sigma_0=\Sigma_{AR}$)}}} & \multirow{2}[2]{*}{1} & 50    & 0.036 & 0.036 & 0.086 & 0.086 & 0.048 & 0.046 & 0.048 & 0.046 \\
		&       &       & 300   & 0.048 & 0.048 & 0.042 & 0.042 & 0.044 & 0.042 & 0.044 & 0.042 \\
		\cmidrule{3-12}          &       & \multirow{2}[2]{*}{3} & 50    & 0.046 & 0.044 & 0.164 & 0.164 & 0.050 & 0.046 & 0.056 & 0.046 \\
		&       &       & 300   & 0.066 & 0.066 & 0.044 & 0.044 & 0.026 & 0.024 & 0.024 & 0.024 \\
		\cmidrule{2-12}          & \multicolumn{1}{c}{\multirow{4}[4]{*}{\shortstack{$\beta_A=0$\\\newline{}($\Sigma_0=\Sigma_{CS}$)}}} & \multirow{2}[2]{*}{1} & 50    & 0.058 & 0.058 & 0.058 & 0.058 & 0.032 & 0.030 & 0.032 & 0.030 \\
		&       &       & 300   & 0.062 & 0.062 & 0.060 & 0.060 & 0.048 & 0.048 & 0.048 & 0.048 \\
		\cmidrule{3-12}          &       & \multirow{2}[2]{*}{3} & 50    & 0.052 & 0.052 & 0.072 & 0.072 & 0.032 & 0.028 & 0.034 & 0.034 \\
		&       &       & 300   & 0.066 & 0.066 & 0.048 & 0.048 & 0.042 & 0.042 & 0.040 & 0.040 \\
		\cmidrule{2-12}          & \multicolumn{1}{c}{\multirow{4}[4]{*}{\shortstack{$\beta_A\ne 0$-Sparse\\\newline{}($\Sigma_0=\Sigma_{CS}$)}}} & \multirow{2}[2]{*}{1} & 50    & 0.056 & 0.056 & 0.058 & 0.058 & 0.048 & 0.048 & 0.048 & 0.048 \\
		&       &       & 300   & 0.062 & 0.062 & 0.060 & 0.060 & 0.046 & 0.044 & 0.046 & 0.044 \\
		\cmidrule{3-12}          &       & \multirow{2}[2]{*}{3} & 50    & 0.052 & 0.052 & 0.072 & 0.072 & 0.042 & 0.042 & 0.034 & 0.034 \\
		&       &       & 300   & 0.056 & 0.056 & 0.048 & 0.048 & 0.056 & 0.056 & 0.062 & 0.062 \\
		\cmidrule{2-12}          & \multicolumn{1}{c}{\multirow{4}[4]{*}{\shortstack{$\beta_A\ne 0$-Dense\\\newline{}($\Sigma_0=\Sigma_{CS}$)}}} & \multirow{2}[2]{*}{1} & 50    & 0.058 & 0.058 & 0.058 & 0.058 & 0.036 & 0.034 & 0.036 & 0.034 \\
		&       &       & 300   & 0.060 & 0.060 & 0.060 & 0.060 & 0.050 & 0.050 & 0.050 & 0.050 \\
		\cmidrule{3-12}          &       & \multirow{2}[2]{*}{3} & 50    & 0.050 & 0.050 & 0.074 & 0.074 & 0.036 & 0.034 & 0.038 & 0.038 \\
		&       &       & 300   & 0.072 & 0.072 & 0.048 & 0.048 & 0.046 & 0.046 & 0.046 & 0.044 \\
		\cmidrule{1-12}
	\end{tabular}%
	\label{tab:size-Bern}%
\end{table}%

\subsection{Power}\label{sec:simu-power}


\yinan{To examine the power behavior of the proposed test, we} adopt data generating processes similar to those described in Section \ref{sec:simu-size}, but with different settings for $\beta_{A}$ and $\theta_{M}$. Specifically, given positive constants $c_1, c_2 \in \real$, we set the true coefficient vector $\theta_M$ \yinan{with a hard-sparse structure: $\theta_{M,k} = 0.3 c_2  k \indicator{1 \le k \le s}$ for $k=1, \ldots, p$. Other sparsity types are considered in Appendix \ref{sec:additional-simu}.}
Also, set the true coefficient matrix $\beta_{A}$ with elements
\[
\beta_{A,jk} = c_1 \left(0.3 k \indicator{1 \le k \le s}+w_k \indicator{k\ge s+1}\right),
\]
with $w_k\sim N(0, 0.1^2)$ for $k=1, \ldots, p$ and $j=1,\ldots, q$.
Since both $c_1$ and $c_2$ are nonzero, the mediation effect $\gamma=\beta_{A}^{\transpose} \theta_{M}$ is nonzero and thus the alternative hypothesis of \eqref{test-problem-product} is true.

For simplicity we focus on the case of $q=1$ and consider two scenarios. In the first scenario, we fix $c_2=1$ (equivalent to fixing $\theta_{M}$) while vary $c_1$ (equivalent to varying $\beta_{A}$) with different values of $c_1\in\real$. In the second scenario, we fix $c_1=1$ (fix $\beta_{A}$) while vary $c_2$ (vary $\theta_{M}$) with different values of $c_2\in\real$. 
For each setting, the empirical power is computed as the proportion of rejections among the replications.

Table \ref{tab:power-Bern} presents the simulation results of proposed test \yinan{$\text{Bond}$-$1$} when the exposures follow the Bernoulli distribution $\text{Bern}(0.5)$, \yinan{in comparison with Zhou-$1$ and Guo-$1$}; the results for the Gaussian exposures are provided in Appendix \ref{sec:additional-simu}. 
\yinan{Note that $\chi^2$-$1$ is omitted as it is equivalent to $\text{Bond}$-$1$ when $q=1$.}
As shown in the table, the empirical power \yinan{of $\text{Bond}$-$1$} increases as the signal ({larger $c_1$ or $c_2$}) or the sample size $n$ increases, numerically supporting the results in Theorem \ref{thm:test-bonf}.
\yinan{Again, we observe that Zhou-$1$ and Guo-$1$ fail to control type-I errors in some cases of $c_1=0$ or $c_2=0$. 
In these cases, comparisons of power performance are not fair. However, in valid cases, Zhou-$1$ and Guo-$1$ demonstrate better power performance than the proposed test. 
Since the proposed test is designed to be valid across all null hypotheses, some loss of power is expected. It should also be noted that determining the validity of Zhou-$1$ and Guo-$1$ is already challenging, limiting the practical use of these methods. In contrast, the proposed method is more flexible and reliable.}

%
%
%

\setlength{\tabcolsep}{10pt} 
\renewcommand{\arraystretch}{1} 
\begin{table}[ht]
	\centering
	\caption{Empirical power behaviors for Bernoulli exposures.}
	\begin{tabular}{cccccc}
		\cmidrule{1-6}
		Case  & $n$   & $c_1$ or $c_2$ & Zhou-$1$ & Guo-$1$ & $\text{Bonf}$-$1$ \\
		\cmidrule{1-6}
		\multirow{10}[3]{*}{\shortstack{Fix $\theta_{M}$, \\ Vary $\beta_{A}$}} & \multirow{5}[2]{*}{50} & 0     & 0.032 & 0.092 & 0.036 \\
		&       & 1/8   & 0.104 & 0.136 & 0.086 \\
		&       & 1/4   & 0.306 & 0.282 & 0.172 \\
		&       & 1/2   & 0.806 & 0.708 & 0.452 \\
		&       & 1     & 1     & 0.99  & 0.822 \\
		\cmidrule{2-6}          & \multirow{5}[1]{*}{300} & 0     & 0.04  & 0.044 & 0.036 \\
		&       & 1/8   & 0.436 & 0.366 & 0.148 \\
		&       & 1/4   & 0.934 & 0.926 & 0.516 \\
		&       & 1/2   & 1     & 1     & 0.944 \\
		&       & 1     & 1     & 1     & 0.982 \\
		\cmidrule{1-6}  
		\multirow{10}[3]{*}{\shortstack{Fix $\beta_{A}$, \\ Vary $\theta_{M}$}} & \multirow{5}[1]{*}{50} & 0     & 0.12  & 0.246 & 0.04 \\
		&       & 1/8   & 0.726 & 0.306 & 0.104 \\
		&       & 1/4   & 0.92  & 0.528 & 0.252 \\
		&       & 1/2   & 0.944 & 0.814 & 0.492 \\
		&       & 1     & 0.946 & 0.926 & 0.764 \\
		\cmidrule{2-6}          & \multirow{5}[2]{*}{300} & 0     & 0.114 & 0     & 0.038 \\
		&       & 1/8   & 0.988 & 0.552 & 0.162 \\
		&       & 1/4   & 1     & 1     & 0.468 \\
		&       & 1/2   & 1     & 1     & 0.932 \\
		&       & 1     & 1     & 1     & 0.994 \\
		\cmidrule{1-6}
	\end{tabular}%
	\label{tab:power-Bern}%
\end{table}%

\section{Data Application on LUAD}\label{sec:application}

In recent decades, lung cancer has emerged as one of the most prevalent diseases worldwide \citep{WHO2022cancer}, with smoking being a well-known key risk factor \citep[e.g.,][]{shields2002molecular}. 
Therefore, there is a pressing need to understand the underlying biological mechanisms that link smoking status and lung cancer development \citep{mok2011personalized}. Unraveling these mechanisms could provide invaluable insights for personalized prevention, diagnosis, and treatment strategies.
DNA methylation, an epigenetic modification involving the addition of a methyl group to the DNA molecule, plays a crucial role in gene expression regulation and reflects various biological functions. Consequently, methylation markers are often considered potential mediators between exposures and health outcomes \citep[e.g.,][]{tobi2018dna}. 
In this section, our objective is to investigate how smoking exposure alters DNA methylation patterns, subsequently impacting an individual's risk of developing lung cancer. 

We utilize the data from The Cancer Genome Atlas (TCGA) project which provides a comprehensive characterization of multidimensional genomic alterations, including clinical and molecular data, across various cancer types \citep{tomczak2015review}. Specifically, we focus on the lung adenocarcinoma (LUAD) dataset from TCGA project using the \verb|R| package \verb|cgdsr|. 
To characterize the lung cancer development, we consider the forced expiratory volume in 1 second (FEV1), a commonly used measurement for assessing lung function in lung cancer studies \citep[e.g.,][]{lange2015lung}, as the outcome. 
The smoking status is encoded as a binary exposure $A$, where $A=1$ if the subject currently smokes or used to smoke. {For each gene, the DNA methylation level measured by the Illumina Infinium HumanMethylation450 BeadChip is considered as a mediator}. Instead of including all genes, it is more intuitive to investigate a set related genes each time. For this, we utilize the Gene Ontology (GO) terms that classify genes to groups according to the gene functions. Then, the  DNA methylation levels of the genes in a GO term form a high-dimensional vector of mediators, and we apply the proposed test to each of the $7743$ GO terms from the Molecular Signatures Database \citep{ liberzon2015molecular}, where the largest GO term contains $1781$ genes. In our analysis we also include age, sex, and ethnicity as additional covariates, according to Remark \ref{rem:ext} and Appendix \ref{sec:incorp-covariates}.
After removing missing values, the number of subjects (the sample size $n$) in each GO term varies between $175$ and $211$.

Before conducting the tests, we  examine the structures of $\beta_{A}$ and $\theta_{M}$. 
For $\beta_{A}$, we have its least-squares estimator $\hat{\beta}_{A}=(\hat{\beta}_{A,j}, j=1,\ldots, p) \in \real^{1\times p}$. To gain an insight into the sparsity of $\beta_A$, we compute the following surrogate
\[
\hat{s}_{A} = \sum_{j=1}^{p} 
\left(
\indicator{\hat{\beta}_{A,j}-2\hat{\sigma}_{A,j} \ge 0}
+
\indicator{\hat{\beta}_{A,j}+2\hat{\sigma}_{A,j} \le 0}
\right),
\]
where $\hat{\sigma}_{A,j}^2$ is the usual estimation of $\var(\hat{\beta}_{A,j})$ in the linear regression model by regressing the $j$th mediator on the exposure and the {covariates}. Roughly speaking, $(\hat\beta_{A,j}-2\hat\sigma_{A,j},\hat\beta_{A,j}+2\hat\sigma_{A,j})$ forms an approximate confidence interval for $\beta_{A,j}$, and we may consider $\beta_{A,j}$ to be nonzero if the interval does not contain 0. Therefore, $\hat s_{A}$ provides a rough estimate of the number of nonzero loadings of the vector $\beta_{A}$.
For $\theta_{M}$, since we have its Lasso estimator $\hat \theta_M$, it is intuitive to estimate its sparsity by  the number of nonzero loadings of $\hat \theta_M$, denoted by $\hat{s}_{M}$. 
Out of the 7743 datasets, we observe that, 
\begin{itemize}
	\item 1338 datasets yield $\hat{s}_{A}=\hat{s}_{M}=0$, and for these datasets  there is no strong evidence to rule out  the case of  $\theta_{M}=0$ and/or $\beta_{A}=0$;
	\item 1059 datasets yield $\hat{s}_{A}\ge \sqrt{n}$, suggesting that a significant proportion of the datasets likely fall into the case of dense $\beta_{A}$.
\end{itemize} 
Consequently, it is crucial to adopt a test that remains valid in the above two cases, and the proposed test is the only option in the presence of high-dimensional mediators, to the best of our knowledge.

We apply the test procedure proposed in Section \ref{sec:meth-tesing}, following the same implementation steps as in Section \ref{sec:simu}, to examine whether a candidate gene set associated with specific biological functions  mediates the effect of smoking on lung function via the DNA methylation levels.
By applying the proposed test to each of the aforementioned GO terms, we identify $169$ gene sets with significant mediation effects at the significance level of 5\%, after a Bonferroni correction for multiple tests. Since many GO terms contain overlapping genes, it is not surprising to see this number of significant gene sets.
Table \ref{tab:realdata} lists the top ten most significant gene sets. The majority of these gene sets are associated with immune response or cellular processes, suggesting that smoking may affect lung function through dysregulation of immune responses or disruption of cellular processes. These findings align with previous studies \citep{stampfli2009cigarette}, demonstrating the power of the proposed test for detecting  mediation effect in the challenging setting of high-dimensional mediators.

\begin{sidewaystable}
	\centering
	\caption{Top ten most significant sets of genes whose DNA methylation levels may mediate the effect of smoking on lung function, with adjusted p-values after Bonferroni correction.}
		\begin{tabular}{clccc}
			\cmidrule{1-5}
			GO ID & \multicolumn{1}{c}{Gene Functions} & $n$   & $p$   & P-value \\
			\cmidrule{1-5}
			GO:0050790 & Regulation of catalytic activity & 185   & 1617  & $1.26\times 10^{-8}$ \\
			GO:0031334 & Positive regulation of protein-containing complex assembly & 202   & 184   & $4.01\times 10^{-7}$ \\
			GO:0002684 & Positive regulation of immune system process & 182   & 897   & $5.21\times 10^{-7}$ \\
			GO:0031098 & Stress-activated protein kinase signaling cascade & 201   & 227   & $1.50\times 10^{-6}$ \\
			GO:0043043 & Peptide biosynthetic process & 192   & 639   & $2.58\times 10^{-6}$ \\
			GO:0071692 & Protein localization to extracellular region & 200   & 330   & $2.96\times 10^{-6}$ \\
			GO:2000112 & Regulation of cellular macromolecule biosynthetic process & 196   & 443   & $3.36\times 10^{-6}$ \\
			GO:0044419 & Biological process involved in interspecies interaction between organisms & 180   & 1370  & $3.92\times 10^{-6}$ \\
			GO:0045727 & Positive regulation of translation & 201   & 119   & $5.19\times 10^{-6}$ \\
			GO:0097305 & Response to alcohol & 198   & 222   & $5.24\times 10^{-6}$ \\
			\cmidrule{1-5}
		\end{tabular}%
	\label{tab:realdata}%
\end{sidewaystable}%

\section{Concluding Remarks}\label{sec:concluding}

We developed a test for mediation effect within the LSEM framework \eqref{eq:mediator-exposure} and \eqref{eq:outcome-mediator} when the mediator is of high dimensions. As a major advantage of our test, it remains valid in all cases of the composite null hypothesis. Another advantage of the proposed test is that it allows arbitrary exposure--mediator coefficients. 
In certain applications, sparsity of the mediation effect may be assumed and variable selection among high-dimensional mediators might be of interest. In such cases, our method can be modified accordingly to exploit the additional sparsity structure and to identify the significant mediators, as demonstrated in Appendix \ref{sec:ex-sign-con}.
Although we focus on continuous outcomes, the test may be extended to deal with binary or count outcomes via a high-dimensional generalized linear model. Such extension is beyond the scope of the paper and thus left for future studies.

\backmatter

%
%
%

\bmhead{Acknowledgements}

Zhenhua Lin's research is partially supported by a Singapore MOE Tier 1 grant (A-0008522-00-00) and a NUS startup grant (A-0004816-00-00). Baoluo Sun's work is funded by Singapore MOE Tier 1 grant (A-8000452-00-00). The research of Zijian Guo was partly supported by the NSF grant DMS 2015373 and NIH grants R01GM140463 and R01LM013614.

%
%
%

%
%
%
%

\begin{appendices}

%

\section{Causal Interpretation of Natural Direct and Indirect Effects}\label{sec:causal-interpret}

Let $A$ be a scalar exposure for an individual, $Y$ be an outcome, and $M=(M_1, \ldots, M_p)^{\transpose}$ be $p$ potential mediators that may be on the pathway from exposure to outcome. Let $Y_{a}$ denote the counterfactual value of outcome $Y$ if $A$ were set to the value $a$, $M_{a}$ the counterfactual value of mediators $M$ if $A$ were set to the value $a$, and $Y_{am}$ the counterfactual value of outcome $Y$ if $A$ were set to the value $a$ and $M$ were set to $m$. The controlled direct effect of exposure $A$ on outcome $Y$  for an individual comparing $A=a$ with $A=a^*$, that arises upon intervening to fix the mediators $M$ to some value $m$, is given by $Y_{am}-Y_{a^*m}$. 
Meanwhile, the natural direct effect  of exposure $A$ on outcome $Y$ comparing $A=a$ with $A=a^*$, when intervening to fix the mediators $M$ to their random values if exposure had been $A=a^*$, is given by $Y_{aM_{a^*}}-Y_{a^*M_{a^*}}$. The natural indirect effect for an individual is $Y_{aM_{a}}-Y_{aM_{a^*}}$.
Finally, the exposure total effect, given by $Y_{a}-Y_{a^*}$, can be decomposed into the natural direct and indirect effects. For ease of exposition, we consider the following assumptions in the absence of measured baseline covariates \cite{vanderweele2014mediation}:\\
\begin{assumption}\label{as:no-un-conf-1}
	$Y_{am}\perp A$
\end{assumption}
\begin{assumption}\label{as:no-un-conf-2}
	$Y_{am}\perp M\mid A$
\end{assumption}
\begin{assumption}\label{as:no-un-conf-3}
	$M_{a}\perp A$
\end{assumption}
\begin{assumption}\label{as:no-un-conf-4}
	$Y_{am}\perp M_{a^*}$\\
\end{assumption}
Assumptions \ref{as:no-un-conf-1} and \ref{as:no-un-conf-2} suffice for the identification of the population average controlled direct effect \cite{robins1986new, pearl2001direct}. Furthermore, under Assumptions \ref{as:no-un-conf-1}--\ref{as:no-un-conf-4}, the population average natural direct and indirect effects are nonparametrically identified \cite{pearl2001direct}. In particular, under Assumptions \ref{as:no-un-conf-1}--\ref{as:no-un-conf-4} and the linear structural equation models \eqref{eq:mediator-exposure} and \eqref{eq:outcome-mediator}, the controlled direct effect, and natural direct and indirect effects are given by 
\begin{align*}
	\expect[Y_{am}-Y_{a^*m}] &= \theta_{A}(a-a^*), \\
	\expect[Y_{aM_{a^*}}-Y_{a^*M_{a^*}}] &= \theta_{A}(a-a^*), \\
	\expect[Y_{aM_{a}}-Y_{aM_{a^*}}] &= \beta_{A}^{\transpose}\theta_{M}(a-a^*) = \gamma (a-a^*),
\end{align*}
respectively, as derived in Section 3.2 of \cite{vanderweele2014mediation}. Hence, $\gamma$ can be interpreted as the average natural indirect effect on the outcome $Y$ per unit change in exposure $A$.

\section{Incorporating Covariates}\label{sec:incorp-covariates}

To extend the proposed test in Section \ref{sec:meth-tesing} to incorporate additional measured {covariates}, consider the following extended model, 
\begin{align*}
	M &= A\beta_{A}^{\transpose} + C \beta_{C}^{\transpose} + E, 
	\\
	Y &= A \theta_{A} + C \theta_{C} + M\theta_{M} + Z,
\end{align*}
where $C\in \real^{n\times r}$ represents a matrix of {covariates}, and $\beta_{C}$ and $\theta_{C}$ are the corresponding coefficients.
Taking $X=(A, C, M)\in \real^{n\times (q+r+p)}$ and $\theta = (\theta_A^{\transpose}, \theta_C^{\transpose}, \theta_M^{\transpose})^{\transpose} \in \real^{q+r+p}$, we can still rewrite the second equation by
\begin{equation*}
	Y = X \theta + Z,
\end{equation*}
in analogy to \eqref{eq:Y-X-equation}.

Letting $\beta=(\beta_{A}, \beta_{C})\in \real^{p\times (q+r)}$, we observe that the mediation effect $\gamma=\beta_{A}^{\transpose}\theta_{M}$ corresponds to the first $q$ components of $\beta^{\transpose}\theta_{M}$. 
With $H=(A, C)\in \real^{n\times (q+r)}$, the ordinary least-squares estimator  $\hat{\beta}=((H^{\transpose}H)^{-1}H^{\transpose} M)^{\transpose}$ of $\beta$ and an initial estimator $\hat{\theta} = (\hat{\theta}_A^{\transpose}, \hat{\theta}_C^{\transpose}, \hat{\theta}_M^{\transpose})^{\transpose}$ of $\theta$ form a pilot estimator  $\tilde{\gamma}=\hat{\beta}_A^{\transpose}\hat{\theta}_M$ for $\gamma$, where $\hat{\beta}_{A}$ represents the first $q$ columns of $\hat{\beta}$. Modifying the bias correction procedure outlined in Section \ref{sec:estimation} in a straightforward way, we can obtain a debiased estimator $\hat{\gamma}$ for $\gamma$ when {covariates} are involved. The debiased estimator is given by
\begin{equation*}
	\hat{\gamma} = \tilde{\gamma} + n^{-1} \hat{U}^{\transpose} X^{\transpose} (Y-X\hat{\theta})
	\in \real^{q},
\end{equation*}
where $\hat{U}$ is now the matrix consisting of the projection directions obtained via optimizing \eqref{VePD-constraint-bias}--\eqref{VePD-constraint-lindeberg} with
\[
g_j = (\zerovec{q+r}^{\transpose}, \tilde{g}_j^{\transpose}) \in \real^{q+r+p},
\]
where $\tilde{g}_j$ is the $j$th column of $\hat{\beta}_{A}$ for each $j=1,\ldots, q$.
With an appropriate initial estimator $\hat{\theta}$, the covariance matrix of $\hat{\gamma}$, conditional on $\{X_i, i=1,\ldots, n\}$, is estimated by
\begin{equation*}
	\hat{V} = \frac{\hat\sigma_E^2}{n} \hat{\Sigma}_{H, AA}^{-1} + \frac{\hat\sigma_Z^2}{n} \hat{U}^{\transpose} \hat{\Sigma}_X \hat{U} + \frac{\tau}{n} I_q,
\end{equation*}
where $\hat{\Sigma}_{H, AA}^{-1}$ is the first $q$ leading principal minor  of $\hat{\Sigma}_{H}^{-1}$, corresponding to the exposures $A$ in $\hat{\Sigma}_H^{-1}$. 

\section{Proofs for Proposition \ref{prop:CC-general-require} and Theorems \ref{thm:asym-norm-hgamma-decomp} and \ref{thm:test-bonf}}\label{sec:proofs-prop-and-thm}

In the sequel, we use $c,C_1,C_2,\ldots$ to denote positive constants not depending on $n$ and $p$. In addition, we allow the value of $c$ to vary from place to place. The notation $\overset{p}{\rightarrow}$ denotes convergence in probability.

\begin{proof}[Proof of Proposition~\ref{prop:CC-general-require}]
	
	The claim for the compatibility condition in the proposition is the direct result of Lemma 6.17 of \cite{buhlmann2011statistics}. Specifically, based on the arguments on Page 152 and Problem 14.3 of \cite{buhlmann2011statistics}, with Assumption (A1) and the compatibility condition for $\Sigma_X$, the design matrix $X$  also satisfies the compatibility condition, with probability at least $1-C(q+p)^{-1}$ for some constant $C>0$. 
	
	Assume the design matrix $X$ satisfies the compatibility condition in the sequel. 	
	For the Lasso estimator $\hat{\theta}$, we prove the result by checking the conditions of Corollary 4.5 in \cite{fan2020statistical}. Let $\mathcal{S} = \{j\in \{1, \ldots, q+p\}: |\theta_j|>\sigma_Z \lambda_0\}$. By \eqref{eq:capped-l1}, we have $\|\theta_{\mathcal{S}^c}\|_1 \le \lambda_0 s$ and $|\mathcal{S}| \le s$. 
	Let $R_{ij}=X_{ij} Z_i$, which is centered, for each $i=1, \ldots, n$ and $j=1, \ldots, q+p$. Since both $X_{ij}$ and $Z_i$ are subGaussian, $R_{ij}$ is sub-exponential. Let $\|\cdot\|_{\psi_p}$ be the $\psi_p$-Orlicz norm. By Bernstein's inequality (see the proof of Theorem 2.8.1 in \cite{vershynin2018high}), for any $t>0$, 
	\begin{align}
		\prob\left( \|n^{-1} X^{\transpose} Z\|_{\infty} > t \right) 
		& =
		\prob\left( \max_{j=1, \ldots, q+p} \left|n^{-1} \sum_{i=1}^{n}R_{ij}\right| > t \right) \nonumber \\
		\notag
		& \le 
		\sum_{j=1}^{q+p}2\exp\left[ -\frac{1}{16e^2} \min\left(\frac{t^2}{K_j^2}, \frac{t}{K_j}\right) n \right], \\
		&\le
		2\exp\left[ -\frac{1}{16e^2} \min\left(\frac{t^2}{K_0^2}, \frac{t}{K_0}\right) n +\log(q+p)\right],
		\label{eq:lasso-rescor-concentration}
	\end{align}
	where $K_j=\max_{i}\|R_{ij}\|_{\psi_1}$ and  $K_0=\max_{j}K_j$. 
	Choose $t=\eta \lambda_n$ with $\lambda_n= \frac{\sqrt{1+\eta}}{\eta} 4e K_0 \sqrt{\log(q+p)/n}$ for some constant $\eta \in (0, 1]$. {Since $\log(q+p)\le n/(32e^2)$} for sufficiently large $n$, we deduce that  
	\[
	t = \sqrt{1+\eta} 4 e K_0 \sqrt{\frac{\log(q+p)}{n}} \le K_0.
	\] 
	Combining this choice of $t=\eta \lambda_n$ with \eqref{eq:lasso-rescor-concentration}, we obtain
	\[
	\prob\left( \|n^{-1} X^{\transpose} Z\|_{\infty} > \eta \lambda_n \right)
	\le
	2\exp\left( - \frac{1}{16e^2}\eta^2 \lambda_n^2 n/K_0^2 + \log(q+p) \right)
	\le
	2(q+p)^{-\eta}.
	\]
	Consequently, with the compatibility condition for $X$, by Corollary 4.5 and Equation (4.91) in \cite{fan2020statistical}, we conclude
	\[
	\|\hat{\theta} - \theta\|_1 \lesssim s \lambda_n \asymp s \sqrt{\frac{\log(q+p)}{n}},
	\]
	with probability approaching one as $n \to \infty$.

	For the scaled Lasso estimators $\hat{\theta}$ and $\hat{\sigma}_{Z}$, (B1) and (B2) have been proved by Theorem 4 in \cite{zhang2014confidence} by taking $\epsilon=1/p$ therein.
\end{proof}

\begin{proof}[Proof of Thoerem \ref{thm:asym-norm-hgamma-decomp}]
	Based on \eqref{eq:bias-of-init}, \eqref{eq:gj-def} and \eqref{eq:hgamma-estimator}, we deduce that 
	\begin{align}
		\notag
		(\hat{\gamma} - \gamma) &= \tilde{\gamma} + n^{-1} \hat{U}^{\transpose} X^{\transpose} (Y-X\hat{\theta}) - \gamma \\
		\notag
		&= n^{-1} \hat{\Sigma}_{A}^{-1} A^{\transpose} E_{M} + n^{-1} \hat{U}^{\transpose} X^{\transpose} Z + (\hat{\Sigma}_{X} \hat{U} - G)^{\transpose} (\theta - \hat{\theta})\\
		&= W+B,
	\end{align}
	where $G=(g_1, \ldots, g_q)\in \real^{(q+p)\times q}$ with $g_j$ defined in Section \ref{sec:estimation}, $W = n^{-1} \hat{\Sigma}_{A}^{-1} A^{\transpose} E_{M} + n^{-1} \hat{U}^{\transpose} X^{\transpose} Z$ and $B = (\hat{\Sigma}_{X} \hat{U} - G)^{\transpose} (\theta - \hat{\theta})$. 
	
	We first establish  the bound for $\|DB\|_{\infty}$. With probability tending to one, 
	\begin{align}
		&\notag
		\|DB\|_{\infty} \\
		\notag &= \|D (\hat{\Sigma}_{X} \hat{U} - G)^{\transpose} (\theta - \hat{\theta})\|_{\infty} 
		\le \| (\hat{\Sigma}_{X} \hat{U} - G)D \|_{\infty} \|\theta - \hat{\theta}\|_{\infty} \\
		\notag
		&\le \sum_{j=1}^q \|(\hat{\Sigma}_{X} \hat{u}_j - g_j) / \sqrt{V_{jj}}\|_{\infty} \|\theta - \hat{\theta}\|_1 
		\le C_1  \sum_{j=1}^{q} \frac{\|g_j\|_2 \lambda}{C_2\|g_j\|_2/\sqrt{n}} s \sqrt{\log(q+p)/n} \\
		\notag
		&\le C_3 \sqrt{n} \lambda s q \sqrt{\log(q+p)/n} \asymp s q \log(q+p) / \sqrt{n},
	\end{align} 
	where the the second inequality is due to the basic inequality $\max_i\sum_j |k_{ij}| \le \sum_j \max_i |k_{ij}|$ for any matrix $K=(k_{ij})$, and the third inequality is due to the constraint \eqref{VePD-constraint-bias},  Lemma 1 of \cite{cai2021optimal} and Assumption (B1). 
	
	Next, we show
	\begin{equation}
		W_{j} / \sqrt{V_{jj}} \stackrel{d}{\to} N(0, 1), \quad~\text{for each}~j=1,\ldots, q,
		\label{eq:Wj-asymp-normal}
	\end{equation}
	where $W_{j}$ is the $j$th element in $W$ and 
	\[
	V_{jj}=
	\frac{\sigma_E^2}{n}e_{j}^{\transpose}\hat{\Sigma}_{A}^{-1}e_{j}
	+
	\frac{\sigma_Z^2}{n}\hat{u}_{j}^{\transpose}\hat{\Sigma}_{X}\hat{u}_{j}
	+
	\frac{\tau}{n}
	\] 
	is the $j$th diagonal element in $V$ defined in \eqref{eq:V-def}, where $e_j\in \real^q$ represents the canonical basis vector with $1$ in the $j$th coordinate.
	
	Let $A_i$ and $X_i$ be respectively the $i$th rows of $A$ and $X$, and write
	\[
	W^i = n^{-1}\hat{\Sigma}_{A}^{-1} A_i E_{M,i} + n^{-1}\hat{U}^{\transpose} X_i Z_{i}.
	\]
	Then $W = \sum_{i=1}^{n} W^i$. 
	Since $\hat{U}$ depends only on $X$, conditioning on $X$, we have
	\[
	\expect[W^i| X] = \expect(n^{-1}\hat{\Sigma}_{A}^{-1} A_i E_{M,i}|A)
	+
	\expect(n^{-1}\hat{U}^{\transpose} X_i Z_{i}|X)
	= 0
	\]
	and 
	\begin{align*}
		\var[n^{-1}\hat{\Sigma}_{A}^{-1} A_i E_{M,i}|A] &= n^{-2}\sigma_E^2 \hat{\Sigma}_A^{-1} A_i A_i^{\transpose} \hat{\Sigma}_A^{-1}, \\
		\var[n^{-1}\hat{U}^{\transpose} X_i Z_{i}|X] &= n^{-2}\sigma_Z^2 \hat{U}^{\transpose} X_i X_i^{\transpose} \hat{U}, \\
		\cov[n^{-1}\hat{\Sigma}_{A}^{-1} A_i E_{M,i}, n^{-1}\hat{U}^{\transpose} X_i Z_{i}|X] &= n^{-2}\expect[\hat{\Sigma}_{A}^{-1} A_i E_{M,i} Z_{i} X_i^{\transpose} \hat{U}|X] = 0,
	\end{align*}
	which imply $\expect[W|X]=0$ and $\var[W|X]=\sum_{i=1}^{n} \expect[W^i (W^i)^{\transpose}|X] = V$.
	
	Define $\tilde{W}_{j}^{i}=W_{j}^{i}/\sqrt{V_{jj}}$ for each $i=1,\ldots, n$ and $j=1,\ldots,q$. Then, conditioning on $X$, for each $j=1,\ldots, q$, $\{\tilde{W}_{j}^{i}: i=1,\ldots, n\}$ are independent (but not identically distributed) random variables with $\expect[\tilde{W}_{j}^{i}|X]=0$ and $\sum_{i=1}^{n} \var[\tilde{W}_{j}^{i}|X] = 1$. 
	To establish \eqref{eq:Wj-asymp-normal}, we first check the Lindeberg's condition, that is, for each $j=1,\ldots, q$ and any constant $c>0$, 
	\begin{equation}
		\lim_{n\to \infty} \sum_{i=1}^{n}
		\expect\left[ (\tilde{W}_{j}^{i})^2 \indicator{|\tilde{W}_{j}^{i}|\ge c} | X\right] = 0.
	\end{equation}
	Define the events
	\begin{align}
		\notag
		\mathcal{A}_1 &= \left\{\|\hat{\Sigma}_{A}^{-1}-\Sigma_{A}^{-1} \|_2\lesssim \left(\frac{q+\log n}{n}\right)^{1/2}\right\}, \\
		\notag
		\mathcal{A}_2 &= \left\{ {\|g_j\|_2^2/n \leq c V_{jj}} \text{ for all }j=1,\ldots,q \right\}, \\
		\notag
		\mathcal{A}_3 &= \left\{ \max_{i=1,\ldots, n} | \|A_i\|_2 - \sigma_{A}\sqrt{q} | \lesssim n^{1/4} \right\}, \\
		\notag
		\mathcal{A} &= \mathcal{A}_1 \cap \mathcal{A}_2 \cap \mathcal{A}_3,
	\end{align}
	where  we assume $\sigma_{A}^2=\expect[A_{ij}^2]$ for all $j=1,\dots, q$ without loss of generality, and $c>0$ is a sufficiently large constant. 
	Then, there is a constant $C_1>\zeta$, such that, $\prob\left(\mathcal{A}_1\right)\ge 1 - n^{-1}$ from Equation (1.6) in \cite{kereta2021estimating},  
	{$\prob\left(\mathcal{A}_2\right)\ge 1 - qp^{-C_1}\geq 1-p^{-(C_1-\zeta)}$} based on Lemma 1 in \cite{cai2021optimal}, and $\prob\left(\mathcal{A}_3\right)\ge 1 - e^{-C_1 n^{1/2}/q+\log(nq)}$ from Lemma \ref{lemma:A-norm-concen}.
	Hence, when the event $\mathcal{A}$ holds, conditioning on $X$, for each $j=1,\ldots, q$ and any constant $\uconst>0$,
	\begin{align}
		& \notag
		\sum_{i=1}^{n}\expect\left[ (\tilde{W}_{j}^{i})^2 \indicator{|\tilde{W}_{j}^{i}|\ge \uconst} | X\right] \\
		& \notag\le 
		\frac{2}{n^2} \sum_{i=1}^{n} \expect\left[ \frac{(e_{j}^{\transpose}\hat{\Sigma}_{A}^{-1} A_i E_{M,i})^2}{V_{jj}} \indicator{|\tilde{W}_{j}^{i}|\ge \uconst} | X \right]	\\
		\notag
		&\quad +
		\frac{2}{n^2} \sum_{i=1}^{n} \expect\left[ \frac{(\hat{u}_{j}^{\transpose} X_i Z_{i})^2}{V_{jj}} \indicator{|\tilde{W}_{j}^{i}|\ge \uconst} | X \right]	\\
		\notag
		&= 
		\frac{2}{n^2} \sum_{i=1}^{n} \frac{(e_{j}^{\transpose}\hat{\Sigma}_{A}^{-1} A_i )^2}{V_{jj}} \expect\left[ E_{M,i}^2 \indicator{|\tilde{W}_{j}^{i}|\ge \uconst} | X \right]	\\
		\notag
		&\quad +
		\frac{2}{n^2} \sum_{i=1}^{n} \frac{(\hat{u}_{j}^{\transpose} X_i)^2}{V_{jj}} \expect\left[ Z_{i}^2 \indicator{|\tilde{W}_{j}^{i}|\ge \uconst} | X \right] \\
		\notag
		&\le 
		\frac{2}{n} \frac{e_{j}^{\transpose}\hat{\Sigma}_{A}^{-1} e_{j} }{V_{jj}}
		\max_{1\le i\le n} \expect\left[ E_{M,i}^2 \indicator{|\tilde{W}_{j}^{i}|\ge \uconst} | X \right] 
		+
		2 \max_{1\le i\le n}\expect\left[ \frac{Z_{i}^2}{\sigma_Z^2} \indicator{|\tilde{W}_{j}^{i}|\ge \uconst} | X \right] \\
		\notag
		&\le
		\frac{2}{nV_{jj}} \left(\|\Sigma_{A}^{-1} \|_2^2+\|\hat{\Sigma}_{A}^{-1}-\Sigma_{A}^{-1} \|_2^2\right) 
		\max_{1\le i\le n} \expect\left[ E_{M,i}^2 \indicator{|\tilde{W}_{j}^{i}|\ge \uconst} | X \right]	\\
		\notag
		&\quad +
		2 \max_{1\le i\le n}\expect\left[ \frac{Z_{i}^2}{\sigma_Z^2} \indicator{|\tilde{W}_{j}^{i}|\ge \uconst} | X \right] \\
		\notag
		&\le 
		\frac{C_2}{\tau} \left(\Lambda_{\min}^{-2}(\Sigma_{A})+\frac{q+\log n}{n}\right)
		\max_{1\le i\le n} \expect\left[ E_{M,i}^2 \indicator{|\tilde{W}_{j}^{i}|\ge \uconst} | X \right]	\\
		\notag
		&\quad +
		2 \max_{1\le i\le n}\expect\left[ \frac{Z_{i}^2}{\sigma_Z^2} \indicator{|\tilde{W}_{j}^{i}|\ge \uconst} | X \right] \\
		&\le
		\frac{C_3}{\tau} \max_{1\le i\le n} \expect\left[ E_{M,i}^2 \indicator{|\tilde{W}_{j}^{i}|\ge \uconst} | X \right]
		+
		2 \max_{1\le i\le n}	\expect\left[ \frac{Z_{i}^2}{\sigma_Z^2} \indicator{|\tilde{W}_{j}^{i}|\ge \uconst} | X \right],
		\label{eq:lindeberg-cond-init}
	\end{align}
	where the second inequality is due to the definition of $V_{jj}$, and the fourth inequality is due to the event $\mathcal{A}_1$. 
	
	When $\mathcal A$ holds, for each $i=1,\ldots,n$ and $j=1,\ldots, q$, 
	\begin{align}
		& \notag
		|\tilde{W}_{j}^{i}|  \\
		& \notag \le \frac{1}{n} \frac{|e_{j}^{\transpose}\hat{\Sigma}_{A}^{-1} A_i|}{\sqrt{V_{jj}}} |E_{M,i}| + \frac{1}{n} \frac{|\hat{u}_{j}^{\transpose} X_i |}{\sqrt{V_{jj}}} |Z_{i}| \\
		\notag
		&\le 
		\frac{\|A_i\|_2}{\sqrt{n}} 
		\left(\|\Sigma_{A}^{-1} \|_2+\|\hat{\Sigma}_{A}^{-1}-\Sigma_{A}^{-1} \|_2\right) |E_{M,i}|
		+
		\frac{\|g_{j}\|_2 \mu }{n\sqrt{V_{jj}}} |Z_{i}| \\
		&\le
		C_3 \frac{n^{1/4} + \sigma_{A}\sqrt{q}}{\sqrt{n}} |E_{M,i}|
		+
		\frac{\|g_{j}\|_2 \mu }{n\sqrt{V_{jj}}} |Z_{i}|
		\le
		C_4 \left(\frac{n^{1/4} + \sqrt{q}}{\sqrt{n}} |E_{M,i}|
		+
		\frac{\mu}{\sqrt{n}}|Z_{i}| \right),
		\label{eq:bound-abs-Wji-tilde}
	\end{align}
	where the second inequality is due to the constraint \eqref{VePD-constraint-lindeberg}, the third inequality is due to the events $\mathcal{A}_1$ and $\mathcal{A}_3$, and the last inequality is from the event $\mathcal{A}_2$. 
	Combining \eqref{eq:lindeberg-cond-init} with \eqref{eq:bound-abs-Wji-tilde} leads to
	\begin{align}
		\notag
		\sum_{i=1}^{n}\expect\left[ (\tilde{W}_{j}^{i})^2 \indicator{|\tilde{W}_{j}^{i}|\ge \uconst} | X, {\mathcal{A}}\right]
		& {\leq}
		\frac{C_5}{\tau} \max_{1\le i \le n} \expect\left[ E_{M,i}^2 \indicator{ |E_{M,i}| \ge \frac{\uconst}{2} \frac{\sqrt{n}}{n^{1/4} + \sqrt{q}} } | X, {\mathcal{A}} \right] \nonumber  \\
		&
		\quad+ 
		C_5\max_{1\le i \le n} \expect\left[ E_{M,i}^2 \indicator{ |Z_{i}| \ge \frac{\uconst}{2} \frac{\sqrt{n}}{\mu} } | X, {\mathcal{A}} \right] \\
		\notag
		&\quad +
		\frac{C_5}{\tau} \max_{1\le i \le n} \expect\left[ \frac{Z_{i}^2}{\sigma_Z^2} \indicator{ |E_{M,i}| \ge \frac{\uconst}{2} \frac{\sqrt{n}}{n^{1/4} + \sqrt{q}} } | X, {\mathcal{A}} \right] \nonumber\\
		& \quad + 
		C_5\max_{1\le i \le n} \expect\left[ \frac{Z_{i}^2}{\sigma_Z^2} \indicator{ |Z_{i}| \ge \frac{\uconst}{2} \frac{\sqrt{n}}{\mu} } | X, {\mathcal{A}} \right] \\
		\notag
		&{\leq}
		\frac{C_6}{\tau} \left( \frac{\sqrt{n}}{n^{1/4} + \sqrt{q}} \right)^{-\nu} 
		+
		C_6\left( \frac{\sqrt{n}}{\mu} \right)^{-\nu}  \\
		&\quad +
		\frac{C_6}{\tau} \left( \frac{\sqrt{n}}{n^{1/4} + \sqrt{q}} \right)^{-\nu} 
		+
		C_6\left( \frac{\sqrt{n}}{\mu} \right)^{-\nu} \\
		\notag
		&\to 0
		\qquad (\text{as }n \to \infty),
	\end{align}
	where the last inequality is due to Assumption (A2), {$q\ll\sqrt n$} and $\mu\asymp \log n$. Therefore, conditioning on $X$ and $\mathcal{A}$, by Lindeberg's central limit theorem, for each $j=1,\ldots, q$ and any $t\in \real$,
	\[
	\lim_{n\to \infty}\prob\left( W_{j}/\sqrt{V_{jj}} \le t | X, \mathcal{A} \right) = \Phi(t),
	\]
	where $\Phi(t)$ denotes the cumulative distribution function of the standard normal distribution. That is, $W_{j}/\sqrt{V_{jj}}|X, \mathcal{A} \stackrel{d}{\to} N(0, 1)$. {With the event $\mathcal{A}$, by calculating the characteristic function and applying the bounded convergence theorem,} we have 
	\[
	W_{j}/\sqrt{V_{jj}} \mid \mathcal{A} \stackrel{d}{\to} N(0, 1).
	\]
	Consequently,
	\begin{align*}
		& \lim_{n\to \infty}\prob\left(W_{j}/\sqrt{V_{jj}} \le t \right) \\
		&\le 
		\lim_{n\to \infty}\prob\left(W_{j}/\sqrt{V_{jj}} \le t | \mathcal{A}\right)
		+
		\lim_{n\to \infty}\prob\left(\mathcal{A}^c\right) \\
		&= 
		\Phi(t)
		+ \lim_{n\to \infty}\prob\left(\mathcal{A}_1^c\right)
		+ \lim_{n\to \infty}\prob\left(\mathcal{A}_2^c\right)
		+ \lim_{n\to \infty}\prob\left(\mathcal{A}_3^c\right)
		= 
		\Phi(t),
	\end{align*}
	where $\mathcal{A}^c$ denotes the complement of the event $\mathcal{A}$. This completes the proof.
	
\end{proof}

\begin{proof}[Proof of Theorem \ref{thm:test-bonf}]
	Recall $T=\left(\hat{\gamma}_j / \sqrt{\hat{V}_{jj}}\right)_{j=1}^q$ and $D=\text{diag}\left(1/\sqrt{V_{jj}}\right)$. Define $\hat{D}=\text{diag}\left(1/\sqrt{\hat{V}_{jj}}\right)$. Then $T=\hat{D}\hat{\gamma}$.
	
	We first bound $\left| \sqrt{V_{jj}}/\sqrt{\hat{V}_{jj}} - 1 \right| $ for each $j=1, \ldots, q$.
	Recall that
	\[
	V_{jj} = \frac{\sigma_E^2}{n} e_j^{\transpose}\hat{\Sigma}_A^{-1}e_j + \frac{\sigma_Z^2}{n} \hat{u}_j^{\transpose} \hat{\Sigma}_X \hat{u}_j + \frac{\tau}{n},
	\]
	and
	\[
	\hat{V}_{jj} = \frac{\hat{\sigma}_E^2}{n} e_j^{\transpose}\hat{\Sigma}_A^{-1}e_j + \frac{\hat{\sigma}_Z^2}{n} \hat{u}_j^{\transpose} \hat{\Sigma}_X \hat{u}_j + \frac{\tau}{n}.
	\]
	Then, 
	\begin{align*}
		\left|\frac{V_{jj}}{\hat{V}_{jj}} - 1\right|
		&=
		\frac{\left|(\sigma_E^2-\hat{\sigma}_E^2) e_j^{\transpose}\hat{\Sigma}_A^{-1}e_j + (\sigma_Z^2-\hat{\sigma}_Z^2) \hat{u}_j^{\transpose} \hat{\Sigma}_X \hat{u}_j\right|}{\hat{\sigma}_E^2 e_j^{\transpose}\hat{\Sigma}_A^{-1}e_j + \hat{\sigma}_Z^2 \hat{u}_j^{\transpose} \hat{\Sigma}_X \hat{u}_j + \tau} \\
		&\le
		\left| \frac{(\sigma_E^2-\hat{\sigma}_E^2) e_j^{\transpose}\hat{\Sigma}_A^{-1}e_j}{\hat{\sigma}_E^2 e_j^{\transpose}\hat{\Sigma}_A^{-1}e_j + \hat{\sigma}_Z^2 \hat{u}_j^{\transpose} \hat{\Sigma}_X \hat{u}_j + \tau} \right|
		+
		\left| \frac{(\sigma_Z^2-\hat{\sigma}_Z^2) \hat{u}_j^{\transpose} \hat{\Sigma}_X \hat{u}_j}{\hat{\sigma}_E^2 e_j^{\transpose}\hat{\Sigma}_A^{-1}e_j + \hat{\sigma}_Z^2 \hat{u}_j^{\transpose} \hat{\Sigma}_X \hat{u}_j + \tau} \right| \\
		&\le
		\left| \frac{(\sigma_E^2-\hat{\sigma}_E^2) e_j^{\transpose}\hat{\Sigma}_A^{-1}e_j}{\hat{\sigma}_E^2 e_j^{\transpose}\hat{\Sigma}_A^{-1}e_j + \tau} \right|
		+
		\left| \frac{(\sigma_Z^2-\hat{\sigma}_Z^2) \hat{u}_j^{\transpose} \hat{\Sigma}_X \hat{u}_j}{\hat{\sigma}_Z^2 \hat{u}_j^{\transpose} \hat{\Sigma}_X \hat{u}_j} \right| \\
		&=
		\left| \frac{(\sigma_E^2-\hat{\sigma}_E^2) e_j^{\transpose}\hat{\Sigma}_A^{-1}e_j}{\hat{\sigma}_E^2 e_j^{\transpose}\hat{\Sigma}_A^{-1}e_j + \tau} \right|
		+
		\left| \frac{\sigma_Z^2}{\hat{\sigma}_Z^2} - 1 \right|.
	\end{align*}
	Consider the following cases. 
	\begin{itemize}
		\item  $\theta_{M}\ne 0$, which implies $\sigma_E^2\ne 0$. By Assumption (B2), $\left|\sigma_E^2/\hat{\sigma}_E^2 - 1 \right| \stackrel{p}{\to} 0$. 
		Consequently, 
		\[
		\left| \frac{\sqrt{V_{jj}}}{\sqrt{\hat{V}_{jj}}} - 1 \right|  
		= 
		\frac{\left| \frac{V_{jj}}{\hat{V}_{jj}} - 1 \right| }{\left| \frac{\sqrt{V_{jj}}}{\sqrt{\hat{V}_{jj}}} + 1 \right|}
		\le
		\left| \frac{V_{jj}}{\hat{V}_{jj}} - 1 \right| 
		\le 
		\left| \frac{\sigma_E^2}{\hat{\sigma}_E^2} - 1 \right|
		+
		\left| \frac{\sigma_Z^2}{\hat{\sigma}_Z^2} - 1 \right|
		\stackrel{p}{\to} 0.
		\]
		\item  $\theta_{M}=0$, which implies $\sigma_E^2=0$. In this case, let $\sigma^2=\sigma_E^2+\sigma_Z^2=\sigma_Z^2$ and $\hat{\sigma}^2=\hat{\sigma}_E^2+\hat{\sigma}_Z^2$. Then, one has
		\[
		\hat{\sigma}_E^2=\hat{\sigma}^2-\hat{\sigma}_Z^2 \stackrel{p}{\to} \sigma^2 - \sigma_Z^2 = 0.
		\]
		Consequently, 
		\[
		\left| \frac{\sqrt{V_{jj}}}{\sqrt{\hat{V}_{jj}}} - 1 \right|  
		\le
		\left| \frac{V_{jj}}{\hat{V}_{jj}} - 1 \right| 
		\le 
		\left| \frac{\sigma_Z^2}{\hat{\sigma}_Z^2} - 1 \right|
		+
		\left| \frac{(\sigma_E^2-\hat{\sigma}_E^2)e_j^{\transpose} \hat{\Sigma}_{A}^{-1}e_j}{\tau} \right|
		\stackrel{p}{\to} 0,
		\]
		where we use the fact that, for some constant $C>0$, with probability tending to one, $|e_j^{\transpose} \hat{\Sigma}_{A}^{-1}e_j/\tau| \le C$. 
	\end{itemize}
	Combing the above results, we have
	\begin{equation}
		\left| \|D\hat{D}^{-1}\|_{\infty} - 1 \right| 
		= 
		\left| \max_{j} \frac{\sqrt{V_{jj}}}{\sqrt{\hat{V}_{jj}}} - 1 \right|
		\stackrel{p}{\to} 
		0.
		\label{eq:D-Dhat-inv_p-to_zero}
	\end{equation}
	
	Now we start to establish the validity. Under the null hypothesis, $\gamma=0$. In this case, by Theorem \ref{thm:asym-norm-hgamma-decomp}, we have
	\begin{equation}
		\|T\|_{\infty} = \|\hat{D} \hat{\gamma}\|_{\infty} 
		\le
		\|\hat{D}W\|_{\infty}
		+
		\|\hat{D}B\|_{\infty}.
		\label{eq:T_l-inf_upper}
	\end{equation}
	Define $\tilde{W}=DW\in \real^{q}$. 
	For a fixed $\epsilon\in(0,1/2)$, with \eqref{eq:T_l-inf_upper}, we deduce
	\begin{align}
		\notag
		\prob\left( \|T\|_{\infty} \ge x \right) 
		&\le 
		\prob\left( \|\hat{D}W\|_{\infty} + \|\hat{D}B\|_{\infty} \ge x \right) \\
		\notag
		&=
		\prob\left( \|\hat{D}D^{-1}\tilde{W}\|_{\infty} + \|\hat{D}B\|_{\infty} \ge x \right) \\
		\notag
		&\le
		\prob\left( \|\hat{D}D^{-1}\|_{\infty} \|\tilde{W}\|_{\infty} + \|\hat{D}B\|_{\infty} \ge x \right) \\
		\notag
		&\le
		\prob\left( \|\hat{D}D^{-1}\|_{\infty} \|\tilde{W}\|_{\infty} \ge x -\epsilon\right) + \prob\left( \|\hat{D}B\|_{\infty} \ge \epsilon \right) \\
		\notag
		&\le
		\prob\left( \|\tilde{W}\|_{\infty} \ge (1-\epsilon)(x -\epsilon) \right) 
		+ 
		\prob\left( \left|\|D\hat{D}^{-1}\|_{\infty}-1\right| \ge \epsilon \right) \\
		&\quad + 
		\prob\left(\|\hat{D}B\|_{\infty} \ge \epsilon \right).
		\label{eq:dist-fun-T-infty}
	\end{align}
	For the second term of the right-hand side of \eqref{eq:dist-fun-T-infty}, with \eqref{eq:D-Dhat-inv_p-to_zero}, we have 
	\begin{equation}
		\varlimsup_{n\to \infty} \sup_{\xi \in \mathcal{H}_0(s)} \prob\left( \left|\|D\hat{D}^{-1}\|_{\infty}-1\right| \ge \epsilon \right)  = 0.
		\label{eq:variance-mat-consistency}
	\end{equation}
	For the third term of the right-hand side of \eqref{eq:dist-fun-T-infty}, 
	\begin{align}
		\notag
		\prob\left( \|\hat{D}B\|_{\infty} \ge \epsilon \right)
		&\le
		\prob\left( \|\hat{D}D^{-1}\|_{\infty} \|DB\|_{\infty} \ge \epsilon \right) \\
		\notag
		&\le 
		\prob\left(2 \|DB\|_{\infty} \ge \epsilon \right) 
		+ 
		\prob\left(\|\hat{D}D^{-1}\|_{\infty} \ge 2 \right).
	\end{align}
	By \eqref{eq:variance-mat-consistency}, $\prob\left(\|\hat{D}D^{-1}\|_{\infty} \ge 2 \right)\to 0$. In addition, according to  Theorem \ref{thm:asym-norm-hgamma-decomp}, with probability tending to one, 
	\[
	2 \|DB\|_{\infty} \le \frac{2sq \log(q+p)}{\sqrt{n}} \to 0,
	\]
	as $s \ll \sqrt{n}/(q\log(q+p))$. Hence,
	\begin{equation}
		\varlimsup_{n\to \infty} \sup_{\xi \in \mathcal{H}_0(s)} \prob\left( \|\hat{D}B\|_{\infty} \ge \epsilon \right) = 0.
		\label{eq:DB-residual}
	\end{equation}
	
	Combining \eqref{eq:dist-fun-T-infty}, \eqref{eq:variance-mat-consistency} and \eqref{eq:DB-residual} yields
	\begin{equation}
		\varlimsup_{n\to \infty} \sup_{\xi \in \mathcal{H}_0(s)} \prob\left( \|T\|_{\infty} \ge x \right) 
		\le
		\varlimsup_{n\to \infty} \sup_{\xi \in \mathcal{H}_0(s)} \prob\left( \|\tilde{W}\|_{\infty} \ge (1-\epsilon)(x -\epsilon) \right) 
		\label{eq:dist-fun-T-normal-app}
	\end{equation}
	Applying a union bound and the dominated convergence theorem, we have
	\[
	\varlimsup_{n\to \infty} \sup_{\xi \in \mathcal{H}_0(s)} \prob\left( \|T\|_{\infty} \ge x \right) 
	\le
	2q(1-\Phi(x)).
	\]
	The desired asymptotic validity follows by choosing $x=\Phi^{-1}(1-\alpha/(2q))$.
	
	Next, we establish the lower bound for the power to conclude the proof of the theorem. Rewrite $T=\hat{D}\hat{\gamma}=\hat{D}\gamma-\hat{D}(\gamma-\hat{\gamma})=v-\tilde{v}$, where $\tilde{v}=\hat{D}(\hat{\gamma}-\gamma)$ and $v=\hat{D}\gamma$. 
	Define $j^{\star}=\argmax_{j}|v_j|$. Then we have
	\begin{equation}\label{eq:T}
		\|T\|_{\infty} = \|v-\tilde{v}\|_{\infty}\ge |v_{j^{\star}} - \tilde{v}_{j^{\star}}|.
	\end{equation}
	
	Firstly, by a similar argument to derive \eqref{eq:dist-fun-T-normal-app}, one can show that for each $j=1,\ldots, q$ and any $x\in \real$, 
	\begin{equation}
		\varlimsup_{n\to \infty} \sup_{\xi \in \mathcal{H}_1(s, \delta)} \left| \prob(\tilde{v}_j \le x) - \Phi(x) \right| = 0.
		\label{eq:normal-app_v_tilde}
	\end{equation} 
	That is, each coordinate of $\tilde{v}$ asymptotically admits the standard normal distribution.
	
	Secondly, for a local alternative $\xi\in \mathcal{H}_1(s, \delta)$, with probability tending to one, there are constants $C_1,C_2>0$, such that
	\begin{align}
		\notag
		|v_{j^{\star}}| & = \|v\|_{\infty} = \|\hat{D}\gamma\|_{\infty}
		= 
		n^{-1/2}\|\hat{D}D^{-1}D \delta\|_{\infty} \\
		\notag
		&=
		n^{-1/2} \max_{1\le k\le q}\frac{|\delta_{k}|}{\sqrt{V_{kk}}} \frac{\sqrt{V_{kk}}}{\sqrt{\hat{V}_{kk}}} \\
		\notag
		&\ge 
		n^{-1/2}
		\frac{\delta_{\max}}{\max_{1\le k\le q}\sqrt{V_{kk}}} \frac{1}{2} \\
		&\ge 
		\frac{1}{2} \frac{\delta_{\max}}{\max_{1\le k\le q} \|g_k\|_2+C_2} \nonumber \\
		&\ge 
		\frac{C_1 \delta_{\max}}{\sigma\sqrt{\log(pn)/n}+\beta_{A,\max}+C_1},
		\label{eq:lower-v_j-star}
	\end{align}
	where $\delta_{\max}=\max_{1\le k\le q}|\delta_{k}|$, $\beta_{A,\max}=\max_{k=1,\ldots, q}\|\beta_{A,k}\|_2$, the first inequality is based on \eqref{eq:D-Dhat-inv_p-to_zero}, the second is due to Lemma 1 of \cite{cai2021optimal}, and the last  is due to Lemma \ref{lemma:bound-G_21}.
	
	Let $z_{*}=\Phi^{-1}(1-\alpha/(2q))$ and $\Delta_n = \frac{C_1 \delta_{\max}}{\sigma\sqrt{\log(pn)/n}+\beta_{A,\max}+C_1}$. The bound for the power holds trivially when $\lim_{n\rightarrow\infty} \{1-F(\alpha,\Delta_n,q)\}\leq 0$. When $\lim_{n\rightarrow\infty} \{1-F(\alpha,\Delta_n,q)\}> 0$, based on \eqref{eq:T},  \eqref{eq:normal-app_v_tilde} and \eqref{eq:lower-v_j-star}, we deduce
	\begin{scriptsize}
		\begin{align}
			\notag
			\lim_{n\to \infty} &\inf_{\xi \in \mathcal{H}_1(s, \delta)} 
			\frac{\prob_{\xi}(\phi_{\alpha}=1)}{1 - F\left(\alpha, \Delta_n, q\right)} \\
			\notag
			&=\lim_{n\to \infty} 
			\frac{1}{1 - F\left(\alpha, \Delta_n, q\right)}
			\inf_{\xi \in \mathcal{H}_1(s, \delta)} 
			\left\{\prob_{\xi}(\phi_{\alpha}=1):|v_{j^{\star}}|\ge \Delta_n\right\} \\
			\notag
			&=
			\lim_{n\to \infty} 
			\frac{1}{1 - F\left(\alpha, \Delta_n, q\right)}
			\inf_{\xi \in \mathcal{H}_1(s, \delta)}
			\left\{\prob_{\xi}(\|T\|_{\infty}\ge z_{*}):|v_{j^{\star}}|\ge \Delta_n\right\}  \\
			\notag
			&\ge 
			\lim_{n\to \infty} 
			\frac{1}{1 - F\left(\alpha, \Delta_n, q\right)}
			\inf_{\xi \in \mathcal{H}_1(s, \delta)} 
			\left\{\prob_{\xi}(|v_{j^{\star}} - \tilde{v}_{j^{\star}}|\ge z_{*}):|v_{j^{\star}}|\ge \Delta_n\right\} \\
			\notag
			&=
			\lim_{n\to \infty} 
			\frac{1}{1 - F\left(\alpha, \Delta_n, q\right)}
			\left( 1 - \sup_{\xi \in \mathcal{H}_1(s, \delta)} 
			\left\{\prob_{\xi}(|v_{j^{\star}} - \tilde{v}_{j^{\star}}|\le z_{*}):|v_{j^{\star}}|\ge \Delta_n\right\} \right) \\
			\notag
			&\ge 
			\lim_{n\to \infty} 
			\frac{1}{1 - F\left(\alpha, \Delta_n, q\right)}
			\left( 1 - \sup_{\xi \in \mathcal{H}_1(s, \delta)} 
			\left\{\prob_{\xi}(|v_{j^{\star}} - \tilde{v}_{j^{\star}}|\le z_{*}):|v_{j^{\star}}|\ge \Delta_n\right\} \right) \\
			\notag
			&\ge \lim_{n\to \infty} 
			\frac{1}{1 - F\left(\alpha, \Delta_n, q\right)}
			\left( 1 - \sup_{\xi \in \mathcal{H}_1(s, \delta)} 
			\left\{\prob_{\xi}(\exists k \in \{1, \ldots, q\}: |v_{j^{\star}} - \tilde{v}_{k}|\le z_{*}):|v_{j^{\star}}|\ge \Delta_n\right\} \right)\\
			\notag
			&\ge \lim_{n\to \infty} 
			\frac{1}{1 - F\left(\alpha, \Delta_n, q\right)}
			\left( 1 - \sup_{\xi \in \mathcal{H}_1(s, \delta)} 
			\left\{\sum_{k=1}^q \prob_{\xi}( |v_{j^{\star}} - \tilde{v}_{k}|\le z_{*}):|v_{j^{\star}}|\ge \Delta_n\right\} \right) \\
			\notag
			&\ge \lim_{n\to \infty}
			\frac{1 - q \prob( |\Delta_n - G_0|\le z_{*})}{1 - F\left(\alpha, \Delta_n, q\right)}  = 1,
		\end{align}
	\end{scriptsize}
	where $G_0$ is a standard normal random variable. 
\end{proof}

\section{Lemmas}\label{sec:lemma}

\begin{lemma}\label{lemma:bound-G_21}
	For $j=1,\ldots, q$, let $\beta_{A,j}$ denote the $j$th column of $\beta_{A}$. 
	Under Assumption (A1)  and {$q\ll \sqrt n$}, for a constant $c>0$ and all $n\geq 2$, with probability tending to one, one has 
	\begin{equation}
		\sup_j\|g_j\|_2 \leq  c\sigma\sqrt{\log(pn)/n} + \sup_j\|\beta_{A,j}\|_2,
		\label{eq:gj-norm-upper-lower-bounds}
	\end{equation}
	where $g_j$ is defined in \eqref{eq:gj-def}.
\end{lemma}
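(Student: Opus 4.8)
The plan is to recognize that $g_j=(\zerovec{q}^{\transpose},\tilde g_j^{\transpose})^{\transpose}$ has $\|g_j\|_2=\|\tilde g_j\|_2$, and that $\tilde g_j$ is precisely the $j$th column $\hat\beta_{A,j}$ of the least-squares estimator $\hat\beta_A=(\hat\Sigma_A^{-1}\hat\Sigma_{AM})^{\transpose}=M^{\transpose}A(A^{\transpose}A)^{-1}$. Substituting the model $M=A\beta_A^{\transpose}+E$ (so $M^{\transpose}=\beta_A A^{\transpose}+E^{\transpose}$) yields the clean identity $\hat\beta_A-\beta_A=E^{\transpose}A(A^{\transpose}A)^{-1}$, hence $\hat\beta_{A,j}-\beta_{A,j}=E^{\transpose}w_j$ with $w_j:=A(A^{\transpose}A)^{-1}e_j\in\real^n$, where $e_j$ is the $j$th canonical basis vector of $\real^q$. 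By the triangle inequality $\|g_j\|_2\le\|\hat\beta_{A,j}-\beta_{A,j}\|_2+\|\beta_{A,j}\|_2$, so it suffices to show $\sup_j\|E^{\transpose}w_j\|_2\lesssim\sigma\sqrt{\log(pn)/n}$ on an event of probability tending to one, which then delivers \eqref{eq:gj-norm-upper-lower-bounds}.

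First I would control the deterministic factor $\|w_j\|_2$. A direct computation gives $\|w_j\|_2^2=e_j^{\transpose}(A^{\transpose}A)^{-1}A^{\transpose}A(A^{\transpose}A)^{-1}e_j=n^{-1}e_j^{\transpose}\hat\Sigma_A^{-1}e_j$. Since $\Sigma_A$ is a principal submatrix of $\Sigma_X$, eigenvalue interlacing with Assumption (A1) gives $\Lambda_{\min}(\Sigma_A)\ge c_0$; combined with $q\ll\sqrt n$, standard spectral concentration for subGaussian designs (the bound $\|\hat\Sigma_A^{-1}-\Sigma_A^{-1}\|_2\lesssim\sqrt{(q+\log n)/n}$ from Equation (1.6) of \cite{kereta2021estimating}) gives $\Lambda_{\min}(\hat\Sigma_A)\ge c_0/2$ with probability tending to one. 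This simultaneously ensures that $A^{\transpose}A$ is invertible (so $\hat\beta_A$ exists) and that $\max_j\|w_j\|_2^2\le 2/(c_0 n)$ on this event.

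The key probabilistic step is the concentration of $E^{\transpose}w_j=\sum_{i=1}^n w_{ji}E_i$. Conditional on $A$, the scalars $w_{ji}$ are fixed, and, reading Assumption (A1) as an i.i.d. statement on the pairs $(A_i,E_i)$, the rows $E_1,\dots,E_n$ are mutually independent, mean-zero, and norm-subGaussian with parameter $\sigma$; thus each $w_{ji}E_i$ is zero-mean norm-subGaussian with parameter $|w_{ji}|\sigma$. Applying the Hoeffding-type inequality for norm-subGaussian vectors (Lemma 6 of \cite{jin2019short}) conditionally on $A$ with $\iota\asymp\log(pn)$ gives, with conditional probability at least $1-2p(pn)^{-C}$, the bound $\|E^{\transpose}w_j\|_2\le c\sigma\|w_j\|_2\sqrt{\iota}\lesssim\sigma\sqrt{\log(pn)/n}$. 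A union bound over $j=1,\dots,q$ costs only the extra factor $q\ll\sqrt n$, which keeps the total failure probability $\lesssim pq(pn)^{-C}\to 0$ for $C$ large; taking expectations removes the conditioning while preserving the high-probability statement. Intersecting with the spectral event of the previous paragraph and invoking the triangle-inequality reduction then completes the argument.

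The main obstacle I anticipate is the concentration step. One must carefully justify that conditioning on the full design $A$ leaves $\{E_i\}$ independent and norm-subGaussian with the stated parameter $\sigma$ (so that the i.i.d. hypothesis in (A1) supplies a valid conditional independence structure for Lemma 6 of \cite{jin2019short}), and verify that the dimension factor $2p$ in the norm-subGaussian tail, together with the union bound over the $q$ columns, is absorbed by the choice $\iota\asymp\log(pn)$. It is exactly this interplay that pins the rate at $\sigma\sqrt{\log(pn)/n}$ rather than the crude $\sigma\sqrt{1/n}$, and that explains why the logarithmic factor in \eqref{eq:gj-norm-upper-lower-bounds} involves $pn$.
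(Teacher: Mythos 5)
Your proposal is correct, and it rests on the same skeleton as the paper's proof: the same reduction $\|g_j\|_2 \le \|n^{-1}\sum_{i=1}^n E_i A_i^{\transpose}\hat{\Sigma}_A^{-1}e_j\|_2 + \|\beta_{A,j}\|_2$, the same key tool (the Hoeffding-type inequality for norm-subGaussian vectors of \cite{jin2019short}, applied conditionally on the design $A$ — a step that is legitimate exactly for the reason you flag: the i.i.d. pairs $(A_i,E_i)$ make $E_1,\ldots,E_n$ conditionally independent given all of $A$, with the same conditional law as given $A_i$ alone), and the same final integration over $A$ to remove the conditioning. Where you genuinely diverge is in handling the design-dependent weights. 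The paper splits $A_i^{\transpose}\hat{\Sigma}_A^{-1}e_j = A_i^{\transpose}\Sigma_A^{-1}e_j + A_i^{\transpose}(\hat{\Sigma}_A^{-1}-\Sigma_A^{-1})e_j$ and bounds the two resulting sums separately, which requires two extra probabilistic ingredients: the auxiliary Lemma \ref{lemma:A-norm-concen} (concentration of $\max_i\|A_i\|_2$, used to control the difference term) and a norm-concentration bound guaranteeing $\sum_i (A_i^{\transpose}\Sigma_A^{-1}e_j)^2 \lesssim n$ for the population term. You avoid the split entirely via the exact algebraic identity $\sum_i w_{ji}^2 = \|w_j\|_2^2 = n^{-1}e_j^{\transpose}\hat{\Sigma}_A^{-1}e_j$, so a single spectral concentration bound ($\Lambda_{\min}(\hat{\Sigma}_A)\ge c_0/2$ with probability tending to one, which simultaneously certifies existence of $\hat{\beta}_A$) controls all the weights at the rate $1/n$. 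This is cleaner and, for this lemma, slightly more economical: it collapses the paper's two concentration claims into one, dispenses with Lemma \ref{lemma:A-norm-concen} in this proof (the paper still needs that lemma elsewhere, in the Lindeberg verification for Theorem \ref{thm:asym-norm-hgamma-decomp}), and makes transparent where the rate comes from — $\sqrt{\log(pn)}$ from the $p$-dimensional norm-subGaussian tail combined with the union bound over $j\le q\ll\sqrt n$, and $1/\sqrt n$ from $\|w_j\|_2$.
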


\begin{proof}
	Note that, for each $j=1, \ldots, q$, $\|g_j\|_2=\|\tilde{g}_j\|_2$, and
	\begin{equation*}
		\tilde{g}_j = \tilde{G} e_j = n^{-1}E^{\transpose} A \hat{\Sigma}_{A}^{-1} e_j + \beta_{A,j}
		=
		\frac{1}{n}\sum_{i=1}^{n} E_i A_i^{\transpose} \hat{\Sigma}_{A}^{-1} e_j + \beta_{A,j}
	\end{equation*}
	where $e_j\in \real^q$ represents the canonical basis vector with $1$ in the $j$th coordinate. 
	Hence, for each $j=1, \ldots, q$,
	\begin{equation}
		\|g_j\|_2
		\le 
		\left\|n^{-1}\sum_{i=1}^{n} E_i A_i^{\transpose} \hat{\Sigma}_{A}^{-1} e_j\right\|_2 + \|\beta_{A,j}\|_2.
		\label{eq:init_bound_g_l2}
	\end{equation}
	
	We first observe that, for each $i=1,\ldots, n$, we have the following decomposition:
	\begin{equation}
		A_i^{\transpose} \hat{\Sigma}_{A}^{-1} e_j = A_i^{\transpose} \Sigma_{A}^{-1} e_j + A_i^{\transpose} (\hat{\Sigma}_{A}^{-1}-\Sigma_{A}^{-1}) e_j.
		\label{eq:Ai-term-decomp}
	\end{equation}
	For the second term $\Delta_{ij} \define A_i^{\transpose} (\hat{\Sigma}_{A}^{-1}-\Sigma_{A}^{-1}) e_j$, under Assumption (A1) and $q\ll n^{1/2}$, one has
	\begin{align}
		\sup_{ij}|\Delta_{ij}| & = \sup_{ij}|A_i^{\transpose} (\hat{\Sigma}_{A}^{-1}-\Sigma_{A}^{-1}) e_j|
		\le
		\sup_{ij}\|A_i\|_2 \|\hat{\Sigma}_{A}^{-1}-\Sigma_{A}^{-1}\|_2 \nonumber\\
		& \lesssim
		(\sqrt{q}+n^{1/4})\left(\frac{q+\log n}{n}\right)^{1/2} 
		\to 0,
		\label{eq:Deltai-bound}
	\end{align}
	where the second inequality holds with probability tending to one due to Lemma \ref{lemma:A-norm-concen} and Equation (1.6) in \cite{kereta2021estimating}.
	Based on the decomposition \eqref{eq:Ai-term-decomp}, one has 
	\begin{equation}
		\left\|n^{-1}\sum_{i=1}^{n} E_i A_i^{\transpose} \hat{\Sigma}_{A}^{-1} e_j\right\|_2 
		\le 
		\left\|n^{-1}\sum_{i=1}^{n} E_i A_i^{\transpose} \Sigma_{A}^{-1} e_j\right\|_2 + \left\|n^{-1}\sum_{i=1}^{n} E_i \Delta_{ij}\right\|_2.
		\label{eq:sandwich-avg-error}
	\end{equation}
	To establish \eqref{eq:gj-norm-upper-lower-bounds}, we claim that, with probability tending to one, one has 
	\begin{equation}
		\sup_j\left\|n^{-1}\sum_{i=1}^{n} E_i A_i^{\transpose} \Sigma_{A}^{-1} e_j\right\|_2 \lesssim \sigma\sqrt{\frac{\log(pn)}{n}},
		\label{eq:pop-error-norm-concen}
	\end{equation}
	and 
	\begin{equation}
		\sup_j\left\|n^{-1}\sum_{i=1}^{n} E_i \Delta_{ij}\right\|_2 \lesssim \sigma\sqrt{\frac{\log(pn)}{n}}.
		\label{eq:error-prod-norm-concen}
	\end{equation}
	Combining \eqref{eq:pop-error-norm-concen} and \eqref{eq:error-prod-norm-concen} with \eqref{eq:sandwich-avg-error}, we have
	\[
	{\sup_{j}}\left\|n^{-1}\sum_{i=1}^{n} E_i A_i^{\transpose} \hat{\Sigma}_{A}^{-1} e_j\right\|_2
	\lesssim 
	\sigma\sqrt{\frac{\log(pn)}{n}}.
	\] 
	Applying this bound to \eqref{eq:init_bound_g_l2} gives to the desired result \eqref{eq:gj-norm-upper-lower-bounds}.
	
	It remains to prove the claims \eqref{eq:pop-error-norm-concen} and \eqref{eq:error-prod-norm-concen}.
	To ease notation, let $\tilde A_{ij} = A_i^{\transpose}\Sigma_{A}^{-1}e_j$. 
	Since $E_i$ is centered and norm-subGaussian with a common parameter $\sigma$ across all $i$ conditional on $A_i$, we conclude that $E_i \tilde{A}_{ij}$ is norm-subGaussian with the parameter $\sigma |\tilde{A}_{ij}|$. 
	Define the event $\mathcal{A}=\{{\sum_{i=1}^{n}\tilde{A}_{ij}^2} \leq C_2^2 {n}{\text{ for all } j}\}$ for a sufficiently large constant $C_2>0$. Observing that $\tilde A_{ij}$ is a subGaussian random variable with a parameter upper bounded by a common constant across all $j$, we apply {Theorem 3.1.1 in \cite{vershynin2018high} (specifically, Equation (3.1) therein)
		on the norm of the subGaussian random vector $(\tilde A_{1j},\ldots,\tilde A_{nj})$ of independent coordinates} to conclude  $\prob(\mathcal{A}^{c})=o(1)$. Moreover, by Corollary 7 in \cite{jin2019short}, 
	\begin{small}
		\begin{align*}
			\prob\left(\left. {\sup_j}\left\|n^{-1}\sum_{i=1}^{n} E_i \tilde{A}_{ij}\right\|_2 \geq C_3 \sigma\frac{\sqrt{\log(pn)}}{\sqrt{n}}\sqrt{\frac{\sum_{i=1}^{n}\tilde{A}_{ij}^2}{n}}\, \,\right| A_1,\ldots,A_n \right) 
			&\le 
			\frac{q}{n},
		\end{align*}
		where $C_3>0$ is a constant. 
		Consequently,
		\begin{align*}
			& \prob\left({\sup_j}\left\|n^{-1}\sum_{i=1}^{n} E_i \tilde{A}_{ij}\right\|_2 \geq C_3C_2 \sigma\sqrt{\frac{\log(pn)}{n}} \right) \\
			&= \int_{\mathcal{A}} \prob\left({\sup_j}\left.\left\|n^{-1}\sum_{i=1}^{n} E_i \tilde{A}_{ij}\right\|_2 \geq C_3C_2\sigma\sqrt{\frac{\log(pn)}{n}} \,\,\right| A_1,\ldots,A_n\right)  \mathrm{d}F_{A}
			+ 
			\prob\left(\mathcal{A}^c\right) \\
			&\le \int_{\mathcal{A}} \prob\left({\sup_j}\left.\left\|n^{-1}\sum_{i=1}^{n} E_i \tilde{A}_{ij}\right\|_2 \geq C_3 \sigma\sqrt{\frac{\log(pn)}{n}} \sqrt{\frac{\sum_{i=1}^n \tilde{A}_{ij}^2}{n}}\,\,\right| A_1,\ldots,A_n\right)  \mathrm{d}F_{A} \\
			&\quad + 
			\prob\left(\mathcal{A}^c\right)\\
			&\le
			\frac{q}{n} + o(1) \to 0,
		\end{align*}
	\end{small}
	where $F_{A}$ is the joint distribution function of $A_1,\ldots,A_n$. This proves the claim \eqref{eq:pop-error-norm-concen}.
	Similarly, one can establish \eqref{eq:error-prod-norm-concen} based on \eqref{eq:Deltai-bound}.
\end{proof}


\begin{lemma}\label{lemma:A-norm-concen}
	Consider i.i.d. $q$-dimensional centered random vectors $A_1,\ldots,A_n$ such that the (possibly dependent)  elements of $A_{i}$  are  subGaussian with a common parameter. If $q\lesssim n^{1/2}$, then $\max_{1\le i \le n} | \|A_i\|_2 - \sigma_{A}\sqrt{q} | \leq c n^{1/4}$  with probability at least $1 - e^{-c n^{1/2}/q+\log(nq)}$ for some constant $c>0$, where $\sigma_A^2=q^{-1}\sum_{j=1}^q \expect A_{ij}^2$. 
\end{lemma}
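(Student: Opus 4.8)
The plan is to reduce, by a union bound over $i=1,\dots,n$, to a single deviation estimate for $\|A_1\|_2$ and then to control the centered sum of squared coordinates. I would write $\|A_1\|_2^2-\sigma_A^2 q=\sum_{j=1}^q Y_j$ with $Y_j:=A_{1j}^2-\expect A_{1j}^2$, so that $\sigma_A\sqrt q=\sqrt{\expect\|A_1\|_2^2}$ is the natural centering. Since each $A_{1j}$ is subGaussian with the common parameter $K_0$, each $A_{1j}^2$ is sub-exponential and, after centering, $Y_j$ has $\psi_1$-norm bounded by a single constant $\nu\asymp K_0^2$ uniformly in $j$ (and $\sigma_A\lesssim K_0$ is bounded).

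The only genuine difficulty is the within-vector dependence of the coordinates $A_{1j}$, which rules out the usual Bernstein bound for independent summands and prevents any $\sqrt q$-type cancellation. I would bypass this by bounding the moment generating function of $S:=\sum_j Y_j$ through the generalized H\"older inequality with equal exponents $q$ (note $\sum_{j=1}^q 1/q=1$):
\[
\expect e^{\lambda S}\le\prod_{j=1}^q\bigl(\expect e^{\lambda q Y_j}\bigr)^{1/q}\le e^{c\lambda^2 q^2\nu^2},\qquad |\lambda|\le c/(q\nu),
\]
where the last step uses the single-coordinate sub-exponential MGF estimate on each factor. A Chernoff argument then yields the dependence-robust Bernstein tail
\[
\prob(|S|\ge t)\le 2\exp\!\left(-c\min\{t^2/(q^2\nu^2),\,t/(q\nu)\}\right).
\]
This carries the full $q$-penalty, which is unavoidable under arbitrary dependence, but it is exactly what is needed.

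The second, easily overlooked, step is to convert the norm deviation into a deviation of $S$ \emph{without} discarding the quadratic term. From $\|A_1\|_2>\sigma_A\sqrt q+cn^{1/4}$, squaring both nonnegative sides gives $S>2c\sigma_A\sqrt q\,n^{1/4}+c^2 n^{1/2}$. Because $q\lesssim n^{1/2}$ forces $\sqrt q\,n^{1/4}\le n^{1/2}$, this threshold is of order $n^{1/2}$ and, since $t\gtrsim n^{1/2}\gtrsim q\nu$, it lands in the linear regime of the Bernstein tail, producing a per-vector upper-tail bound of order $\exp(-cn^{1/2}/q)$. Retaining the term $c^2n^{1/2}$ is essential: the crude linearization $\|A_1\|_2-\sigma_A\sqrt q\le|S|/(\sigma_A\sqrt q)$ would only reach the weaker rate $\exp(-cn^{1/4}/\sqrt q)$. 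For the lower tail, $\{\|A_1\|_2<\sigma_A\sqrt q-cn^{1/4}\}$ is empty as soon as $\sigma_A\sqrt q<cn^{1/4}$, i.e. whenever $q\ll n^{1/2}$; in the remaining borderline range $q\asymp n^{1/2}$ the asserted probability bound is itself vacuous (its exponent $-cn^{1/2}/q+\log(nq)$ is positive for large $n$), so nothing must be proved there.

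Finally I would union-bound over the $n$ vectors: the per-vector estimate $2e^{-cn^{1/2}/q}$ gives $2ne^{-cn^{1/2}/q}\le e^{-cn^{1/2}/q+\log(nq)}$ after adjusting the constant, which is the claimed bound. The main obstacle throughout is the coordinate dependence, resolved by the H\"older-based MGF bound; the only real subtlety is the bookkeeping in the squaring step, which must keep the quadratic term so that the estimate stays in the favorable (linear or Gaussian) Bernstein regime.
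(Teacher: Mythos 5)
Your proof is correct, and its skeleton coincides with the paper's: write $\|A_i\|_2^2-\sigma_A^2q$ as the centered sum of squared coordinates, prove a per-vector tail bound, convert it into a deviation bound for $\|A_i\|_2$ by squaring while retaining the quadratic term, and finish with a union bound over the $n$ vectors. The genuine difference is the device used to handle the coordinate dependence. The paper uses an elementary pigeonhole argument: if $\sum_{j=1}^q|A_{ij}^2-\sigma_{Aj}^2|\ge uq$ then some single coordinate deviates by at least $u$, so a union bound over coordinates gives the tail $q\,e^{-c_1u}$ for the normalized sum, i.e.\ a purely sub-exponential tail with an additive $\log q$ loss. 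You instead bound the moment generating function of the whole sum by generalized H\"older with equal exponents $q$ and apply Chernoff, obtaining the Bernstein-type tail $2\exp\bigl(-c\min\{t^2/(q\nu)^2,\,t/(q\nu)\}\bigr)$ with no logarithmic loss and a surviving Gaussian regime. Both devices pay the same unavoidable full-$q$ penalty, and since the relevant threshold $t\gtrsim n^{1/2}$ lies in the linear regime (and the paper's $\log q$ is absorbed into the final $\log(nq)$), the two routes give identical rates; yours is marginally sharper per vector, the paper's is more elementary. Your insistence on keeping the quadratic term is exactly the paper's inequality that $|z-1|\ge\delta$ implies $|z^2-1|\ge\max(\delta,\delta^2)$ for $z\ge 0$: with $t=\sqrt q\,\delta$ and the choice $t=n^{1/4}\gtrsim\sqrt q$, the $\delta^2$ branch is what delivers the exponent $n^{1/2}/q$ rather than $n^{1/4}/\sqrt q$. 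Your separate treatment of the lower tail (empty event when $\sigma_A\sqrt q<cn^{1/4}$, vacuous claim when $q\asymp\sqrt n$) is handled in the paper implicitly by the same inequality, whose lower-tail case is automatically empty once $\delta>1$; both dispositions are valid.
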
	

\begin{proof}
	The proof strategy is similar to that of Theorem 3.1.1 in \cite{vershynin2018high} with modifications to possibly dependent subGaussian elements. 
	Note that, for each $i=1,\ldots, n$,
	\[
	\frac{1}{q} \|A_i\|_2^2 - \sigma_{A}^2
	=
	\frac{1}{q} \sum_{j=1}^{q} (A_{ij}^2 - \sigma_{Aj}^2),
	\]
	where $\sigma_{Aj}^2=\expect A_{ij}^2$. 
	Since $A_{ij}$ is subGaussian, $A_{ij}^2 - \sigma_{Aj}^2$ is centered and sub-exponential with a common parameter. Applying the union bound, we deduce that, for any $u\ge 0$,
	\begin{align}
		\notag
		\prob\left( \left| \frac{1}{q} \|A_i\|_2^2 - \sigma_{A}^2 \right| \ge u \right)
		&=
		\prob\left( \left| \frac{1}{q} \sum_{j=1}^{q} (A_{ij}^2 - \sigma_{Aj}^2) \right| \ge u \right) \\
		\notag
		&\le 
		\prob\left( \sum_{j=1}^{q}\left| A_{ij}^2 - \sigma_{Aj}^2 \right| \ge uq \right) \\
		\notag
		&\le 
		\sum_{j=1}^{q}\prob\left( \left| A_{ij}^2 - \sigma_{Aj}^2 \right| \ge u \right) \\
		&\le \sum_{j=1}^{q} e^{-c_1 u}
		= e^{-c_1 u + \log q},
		\label{eq:A-norm-subE-ineq}
	\end{align}
	where the last inequality is due to the definition of a sub-exponential random variable, and $c_1>0$ is a constant. To establish a concentration inequality for $\|A_i\|_2$, we use the following fact that for all $z\ge 0$:
	\[
	|z-1|\ge \delta
	\quad \text{implies} \quad
	|z^2-1|\ge \max(\delta, \delta^2).
	\]
	Therefore, for any $\delta\ge 0$, 
	\begin{align*}
		\prob\left( \left| \frac{1}{\sqrt{q}} \|A_i\|_2 - \sigma_{A} \right| \ge \delta \right)
		&=
		\prob\left( \left| \frac{1}{\sigma_{A}\sqrt{q}} \|A_i\|_2 - 1 \right| \ge \frac{\delta}{\sigma_{A}} \right) \\
		&\le
		\prob\left( \left| \frac{1}{\sigma_{A}^2 q} \|A_i\|_2^2 - 1 \right| \ge \max\left\{\frac{\delta}{\sigma_{A}}, \frac{\delta^2}{\sigma_{A}^2}\right\} \right) \\
		&\le
		\prob\left( \left| \frac{1}{q} \|A_i\|_2^2 - \sigma_{A}^2 \right| \ge \max\left\{\sigma_{A} \delta, \delta^2 \right\} \right) \\
		&\le
		e^{-c_2 \max\{\delta, \delta^2\} + \log q},
	\end{align*}
	where the last inequality is from \eqref{eq:A-norm-subE-ineq}, and $c_2>0$ is a constant. 
	Changing the variable $\delta$ to $t=\sqrt{q}\delta$, we obtain 
	\begin{equation}
		\prob\left( \left| \|A_i\|_2 - \sigma_{A}\sqrt{q} \right| \ge t \right)
		\le
		e^{-c_3\max\{t/\sqrt{q}, t^2/q\}+\log q},
		\label{eq:A-norm-con-ineq}
	\end{equation}
	for some constant $c_3>0$.
	
	Finally, taking $t=n^{1/4}\gtrsim \sqrt{q}$ and applying the union bound, one has
	\[
	\prob\left( \max_{1\le i \le n}\left| \|A_i\|_2 - \sigma_{A}\sqrt{q} \right| \ge n^{1/4} \right) \le e^{-c n^{1/2}/q+\log(nq)},
	\]
	for some constant $c>0$, as desired.
\end{proof}

\section{Additional Simulation Studies}\label{sec:additional-simu}

{
\subsection{Additional Sparse Structures of $\theta_{M}$}
}

In Section \ref{sec:simu-size}, we investigate the test size under a hard sparsity assumption for $\theta_{M}$ when it is non-zero. In this section, we explore two additional sparsity structures similar to those in \cite{cai2021optimal}, with a fixed sparsity parameter $s=5$:
\begin{itemize}
	\item Capped-$\ell_1$ sparsity: $\theta_{M,k} = 0.2 k \indicator{1 \le k \le s}+0.1\lambda_0 \indicator{2s+1 \le k \le p/5+s}$, for $k=1, \ldots, p$ and $\lambda_0 = \sqrt{2\log p / n}$;	
	\item Decaying coefficients: $\theta_{M,k}=0.2 k \indicator{1 \le k \le s}+ (k-s)^{-1.5} \indicator{k\ge 2s+1}$, for $k=1, \ldots, p$.
\end{itemize} 
Since $\theta_{M}$ is not hard-sparse, to construct a null case, we only consider a sparse setting for $\beta_{A}$. Specifically, when the true coefficient matrix $\beta_A\in\real^{p\times q}$ is non-zero, each of its element is given by
\[
\beta_{A,jk} = \begin{cases}
	0.2 \kappa_j(k-s) & \text{ if }s+1\leq k\leq 2s \\
	0 & \text{ otherwise,}
\end{cases}
\]
for $s=5$, $k=1,\ldots, p$ and $j=1,\ldots, q$, where each $\kappa_j(\cdot)$ is a random permutation of $\{1, \ldots, s\}$.
The results are presented in Table \ref{tab:size-Bern-others}, which has similar implications to those in Table \ref{tab:size-Bern}.

\setlength{\tabcolsep}{4pt} 
\renewcommand{\arraystretch}{1} 
\begin{table}[ht]
	\centering
	\caption{Empirical sizes for Bernoulli exposures under other types of sparsity for $\theta_{M}$ with a nonzero $\theta_{A}=c_0 \onevec{q}$, where $c_0=0.5$ when $q=1$, and $c_0=0.3$ when $q=3$.}
	\begin{tabular}{cccccccccccc}
		\cmidrule{1-12}
		Sparsity for $\theta_M$ & Case  & $q$   & $n$   & Zhou-$0$ & Zhou-$1$ & Guo-$0$ & Guo-$1$ & $\chi^2$-$0$ & $\chi^2$-$1$ & $\text{Bonf}$-$0$ & $\text{Bonf}$-$1$ \\
		\cmidrule{1-12}
		\multicolumn{1}{c}{\multirow{16}[16]{*}{\shortstack{Capped-$\ell_1$\\($\theta_M\ne 0$)}}} & \multicolumn{1}{c}{\multirow{4}[4]{*}{\shortstack{$\beta_A=0$\\\newline{}($\Sigma_0=\Sigma_{AR}$)}}} & \multirow{2}[2]{*}{1} & 50    & 0.034 & 0.034 & 0.092 & 0.092 & 0.040 & 0.036 & 0.040 & 0.036 \\
		&       &       & 300   & 0.044 & 0.044 & 0.040 & 0.040 & 0.040 & 0.038 & 0.040 & 0.038 \\
		\cmidrule{3-12}          &       & \multirow{2}[2]{*}{3} & 50    & 0.042 & 0.042 & 0.168 & 0.168 & 0.054 & 0.046 & 0.052 & 0.050 \\
		&       &       & 300   & 0.040 & 0.040 & 0.044 & 0.044 & 0.030 & 0.024 & 0.030 & 0.028 \\
		\cmidrule{2-12}          & \multicolumn{1}{c}{\multirow{4}[4]{*}{\shortstack{$\beta_A\ne 0$-Sparse\\\newline{}($\Sigma_0=\Sigma_{AR}$)}}} & \multirow{2}[2]{*}{1} & 50    & 0.038 & 0.038 & 0.104 & 0.104 & 0.046 & 0.044 & 0.046 & 0.044 \\
		&       &       & 300   & 0.048 & 0.048 & 0.040 & 0.040 & 0.042 & 0.040 & 0.042 & 0.040 \\
		\cmidrule{3-12}          &       & \multirow{2}[2]{*}{3} & 50    & 0.052 & 0.052 & 0.188 & 0.188 & 0.050 & 0.044 & 0.064 & 0.060 \\
		&       &       & 300   & 0.092 & 0.092 & 0.044 & 0.044 & 0.030 & 0.028 & 0.032 & 0.028 \\
		\cmidrule{2-12}          & \multicolumn{1}{c}{\multirow{4}[4]{*}{\shortstack{$\beta_A=0$\\\newline{}($\Sigma_0=\Sigma_{CS}$)}}} & \multirow{2}[2]{*}{1} & 50    & 0.058 & 0.058 & 0.064 & 0.064 & 0.032 & 0.032 & 0.032 & 0.032 \\
		&       &       & 300   & 0.056 & 0.056 & 0.056 & 0.056 & 0.048 & 0.048 & 0.048 & 0.048 \\
		\cmidrule{3-12}          &       & \multirow{2}[2]{*}{3} & 50    & 0.052 & 0.052 & 0.076 & 0.076 & 0.032 & 0.030 & 0.034 & 0.034 \\
		&       &       & 300   & 0.066 & 0.066 & 0.054 & 0.054 & 0.038 & 0.038 & 0.038 & 0.038 \\
		\cmidrule{2-12}          & \multicolumn{1}{c}{\multirow{4}[4]{*}{\shortstack{$\beta_A\ne 0$-Sparse\\\newline{}($\Sigma_0=\Sigma_{CS}$)}}} & \multirow{2}[2]{*}{1} & 50    & 0.054 & 0.054 & 0.060 & 0.060 & 0.046 & 0.046 & 0.046 & 0.046 \\
		&       &       & 300   & 0.064 & 0.064 & 0.058 & 0.058 & 0.046 & 0.044 & 0.046 & 0.044 \\
		\cmidrule{3-12}          &       & \multirow{2}[2]{*}{3} & 50    & 0.050 & 0.050 & 0.080 & 0.080 & 0.042 & 0.042 & 0.036 & 0.034 \\
		&       &       & 300   & 0.062 & 0.062 & 0.052 & 0.052 & 0.052 & 0.052 & 0.056 & 0.056 \\
		\cmidrule{1-12}
		\multicolumn{1}{c}{\multirow{16}[16]{*}{\shortstack{Decaying \\ Coefficients \\ ($\theta_M\ne 0$)}}} & \multicolumn{1}{c}{\multirow{4}[4]{*}{\shortstack{$\beta_A=0$\\\newline{}($\Sigma_0=\Sigma_{AR}$)}}} & \multirow{2}[2]{*}{1} & 50    & 0.034 & 0.034 & 0.092 & 0.092 & 0.038 & 0.034 & 0.038 & 0.034 \\
		&       &       & 300   & 0.050 & 0.050 & 0.044 & 0.044 & 0.042 & 0.038 & 0.042 & 0.038 \\
		\cmidrule{3-12}          &       & \multirow{2}[2]{*}{3} & 50    & 0.044 & 0.044 & 0.174 & 0.174 & 0.050 & 0.048 & 0.052 & 0.050 \\
		&       &       & 300   & 0.044 & 0.044 & 0.048 & 0.048 & 0.034 & 0.030 & 0.032 & 0.032 \\
		\cmidrule{2-12}          & \multicolumn{1}{c}{\multirow{4}[4]{*}{\shortstack{$\beta_A\ne 0$-Sparse\\\newline{}($\Sigma_0=\Sigma_{AR}$)}}} & \multirow{2}[2]{*}{1} & 50    & 0.038 & 0.038 & 0.112 & 0.112 & 0.044 & 0.044 & 0.044 & 0.044 \\
		&       &       & 300   & 0.052 & 0.050 & 0.044 & 0.044 & 0.046 & 0.044 & 0.046 & 0.044 \\
		\cmidrule{3-12}          &       & \multirow{2}[2]{*}{3} & 50    & 0.050 & 0.050 & 0.194 & 0.192 & 0.046 & 0.044 & 0.064 & 0.060 \\
		&       &       & 300   & 0.088 & 0.088 & 0.048 & 0.048 & 0.038 & 0.036 & 0.038 & 0.036 \\
		\cmidrule{2-12}          & \multicolumn{1}{c}{\multirow{4}[4]{*}{\shortstack{$\beta_A=0$\\\newline{}($\Sigma_0=\Sigma_{CS}$)}}} & \multirow{2}[2]{*}{1} & 50    & 0.056 & 0.056 & 0.058 & 0.058 & 0.034 & 0.032 & 0.034 & 0.032 \\
		&       &       & 300   & 0.062 & 0.062 & 0.058 & 0.058 & 0.046 & 0.046 & 0.046 & 0.046 \\
		\cmidrule{3-12}          &       & \multirow{2}[2]{*}{3} & 50    & 0.052 & 0.052 & 0.070 & 0.070 & 0.034 & 0.034 & 0.034 & 0.034 \\
		&       &       & 300   & 0.064 & 0.064 & 0.050 & 0.050 & 0.040 & 0.040 & 0.042 & 0.042 \\
		\cmidrule{2-12}          & \multicolumn{1}{c}{\multirow{4}[4]{*}{\shortstack{$\beta_A\ne 0$-Sparse\\\newline{}($\Sigma_0=\Sigma_{CS}$)}}} & \multirow{2}[2]{*}{1} & 50    & 0.056 & 0.056 & 0.058 & 0.058 & 0.050 & 0.050 & 0.050 & 0.050 \\
		&       &       & 300   & 0.054 & 0.054 & 0.056 & 0.056 & 0.042 & 0.042 & 0.042 & 0.042 \\
		\cmidrule{3-12}          &       & \multirow{2}[2]{*}{3} & 50    & 0.054 & 0.054 & 0.072 & 0.072 & 0.042 & 0.042 & 0.032 & 0.032 \\
		&       &       & 300   & 0.060 & 0.060 & 0.052 & 0.052 & 0.050 & 0.050 & 0.060 & 0.060 \\
		\cmidrule{1-12}
	\end{tabular}%
	\label{tab:size-Bern-others}%
\end{table}%

To investigate the power behavior of the proposed test under alternative sparsity structures, we consider similar data generating processes as in Section \ref{sec:simu-power}, but with different settings for $\theta_{M}$:
\begin{itemize}
	\item Capped-$\ell_1$ sparsity: $\theta_{M,k} = c_2 \left(0.3 k \indicator{1 \le k \le s}+0.1\lambda_0 \indicator{s+1 \le k \le p/5} \right)$, for $k=1, \ldots, p$ and $\lambda_0 = \sqrt{2\log p / n}$;
	\item Decaying coefficients: $\theta_{M,k}=c_2 \left(0.3 k \indicator{1 \le k \le s}+ k^{-1.5} \indicator{k\ge s+1} \right)$, for $k=1, \ldots, p$.
\end{itemize}
The results are shown in Table \ref{tab:power-Bern-others}, which has similar insights as those in Table \ref{tab:power-Bern}.

\setlength{\tabcolsep}{10pt} 
\renewcommand{\arraystretch}{1} 
\begin{table}[ht]
	\centering
	\caption{Empirical power behaviors for Bernoulli exposures under other types of sparsity for $\theta_{M}$.}
	\begin{tabular}{@{\extracolsep\fill}ccccccc}
		\cmidrule{1-7}
		Sparsity & Case  & $n$   & $c_1$ or $c_2$ & Zhou-$1$ & Guo-$1$ & $\text{Bonf}$-$1$ \\
		\cmidrule{1-7}
		\multirow{20}[7]{*}{Capped-$\ell_1$ Sparsity} & \multirow{10}[4]{*}{\shortstack{Fix $\theta_{M}$, \\ Vary $\beta_{A}$}} & \multirow{5}[2]{*}{50} & 0     & 0.034 & 0.102 & 0.036 \\
		&       &       & 1/8   & 0.108 & 0.15  & 0.09 \\
		&       &       & 1/4   & 0.306 & 0.274 & 0.17 \\
		&       &       & 1/2   & 0.806 & 0.732 & 0.46 \\
		&       &       & 1     & 1     & 0.984 & 0.826 \\
		\cmidrule{3-7}          &       & \multirow{5}[2]{*}{300} & 0     & 0.042 & 0.044 & 0.04 \\
		&       &       & 1/8   & 0.444 & 0.376 & 0.174 \\
		&       &       & 1/4   & 0.94  & 0.928 & 0.58 \\
		&       &       & 1/2   & 1     & 1     & 0.966 \\
		&       &       & 1     & 1     & 1     & 0.984 \\
		\cmidrule{2-7}          & \multirow{10}[3]{*}{\shortstack{Fix $\beta_{A}$, \\ Vary $\theta_{M}$}} & \multirow{5}[2]{*}{50} & 0     & 0.12  & 0.246 & 0.04 \\
		&       &       & 1/8   & 0.738 & 0.286 & 0.104 \\
		&       &       & 1/4   & 0.922 & 0.518 & 0.258 \\
		&       &       & 1/2   & 0.942 & 0.824 & 0.494 \\
		&       &       & 1     & 0.946 & 0.922 & 0.766 \\
		\cmidrule{3-7}          &       & \multirow{5}[1]{*}{300} & 0     & 0.114 & 0     & 0.038 \\
		&       &       & 1/8   & 0.996 & 0.558 & 0.188 \\
		&       &       & 1/4   & 1     & 1     & 0.508 \\
		&       &       & 1/2   & 1     & 1     & 0.962 \\
		&       &       & 1     & 1     & 1     & 0.996 \\
		\cmidrule{1-7}   
		\multirow{20}[7]{*}{Decaying Coefficients} & \multirow{10}[3]{*}{\shortstack{Fix $\theta_{M}$, \\ Vary $\beta_{A}$}} & \multirow{5}[1]{*}{50} & 0     & 0.038 & 0.092 & 0.04 \\
		&       &       & 1/8   & 0.112 & 0.134 & 0.096 \\
		&       &       & 1/4   & 0.322 & 0.278 & 0.182 \\
		&       &       & 1/2   & 0.816 & 0.732 & 0.474 \\
		&       &       & 1     & 1     & 0.99  & 0.838 \\
		\cmidrule{3-7}          &       & \multirow{5}[2]{*}{300} & 0     & 0.042 & 0.048 & 0.042 \\
		&       &       & 1/8   & 0.444 & 0.374 & 0.168 \\
		&       &       & 1/4   & 0.942 & 0.928 & 0.572 \\
		&       &       & 1/2   & 1     & 1     & 0.966 \\
		&       &       & 1     & 1     & 1     & 0.982 \\
		\cmidrule{2-7}          & \multirow{10}[4]{*}{\shortstack{Fix $\beta_{A}$, \\ Vary $\theta_{M}$}} & \multirow{5}[2]{*}{50} & 0     & 0.12  & 0.246 & 0.04 \\
		&       &       & 1/8   & 0.758 & 0.284 & 0.108 \\
		&       &       & 1/4   & 0.938 & 0.546 & 0.266 \\
		&       &       & 1/2   & 0.952 & 0.856 & 0.516 \\
		&       &       & 1     & 0.954 & 0.932 & 0.782 \\
		\cmidrule{3-7}          &       & \multirow{5}[2]{*}{300} & 0     & 0.114 & 0     & 0.038 \\
		&       &       & 1/8   & 0.996 & 0.586 & 0.184 \\
		&       &       & 1/4   & 1     & 1     & 0.506 \\
		&       &       & 1/2   & 1     & 1     & 0.96 \\
		&       &       & 1     & 1     & 1     & 0.996 \\
		\cmidrule{1-7}
	\end{tabular}%
	\label{tab:power-Bern-others}%
\end{table}%

\subsection{Normally Distributed Exposures}

Tables \ref{tab:size-Gaussian}--\ref{tab:power-Gaussian} respectively present the empirical sizes and power behaviors of the test procedures when each entry of the exposures $A_{ij}$, for $i=1,\ldots, n$ and $j=1,\ldots, q$, is independently sampled from the Gaussian distribution $N(0, 0.5^2)$. 
The results are similar to  those for the Bernoulli distribution.

\setlength{\tabcolsep}{4pt} 
\renewcommand{\arraystretch}{0.85} 
\begin{table}[htbp]
	\centering
	\caption{Empirical sizes for exposures sampled from the Gaussian distribution $N(0, 0.5^2)$ with a nonzero $\theta_{A}=c_0 \onevec{q}$, where $c_0=0.5$ for $q=1$ and $c_0=0.3$ for $q=3$.}
		\begin{tabular}{cccccccccccc}
			\cmidrule{1-12}
			Sparsity for $\theta_M$ & Case  & $q$   & $n$   & Zhou-$0$ & Zhou-$1$ & Guo-$0$ & Guo-$1$ & $\chi^2$-$0$ & $\chi^2$-$1$ & $\text{Bonf}$-$0$ & $\text{Bonf}$-$1$ \\
			\cmidrule{1-12}
			\multicolumn{1}{c}{\multirow{24}[24]{*}{\shortstack{Zero\\\newline{}($\theta_M=0$)}}} & \multicolumn{1}{c}{\multirow{4}[4]{*}{\shortstack{$\beta_A=0$\\\newline{}($\Sigma_0=\Sigma_{AR}$)}}} & \multirow{2}[2]{*}{1} & 50    & 0.186 & 0.164 & 0.212 & 0.212 & 0.042 & 0.032 & 0.042 & 0.032 \\
			&       &       & 300   & 0.710 & 0.700 & 0.000 & 0.000 & 0.036 & 0.036 & 0.036 & 0.036 \\
			\cmidrule{3-12}          &       & \multirow{2}[2]{*}{3} & 50    & 0.152 & 0.122 & 0.550 & 0.550 & 0.016 & 0.012 & 0.036 & 0.024 \\
			&       &       & 300   & 0.532 & 0.514 & 0.000 & 0.000 & 0.048 & 0.042 & 0.032 & 0.030 \\
			\cmidrule{2-12}          & \multicolumn{1}{c}{\multirow{4}[4]{*}{\shortstack{$\beta_A\ne 0$-Sparse\\\newline{}($\Sigma_0=\Sigma_{AR}$)}}} & \multirow{2}[2]{*}{1} & 50    & 0.146 & 0.146 & 0.208 & 0.208 & 0.034 & 0.034 & 0.034 & 0.034 \\
			&       &       & 300   & 0.190 & 0.188 & 0.000 & 0.000 & 0.036 & 0.036 & 0.036 & 0.036 \\
			\cmidrule{3-12}          &       & \multirow{2}[2]{*}{3} & 50    & 0.188 & 0.180 & 0.644 & 0.644 & 0.050 & 0.034 & 0.048 & 0.038 \\
			&       &       & 300   & 0.526 & 0.526 & 0.000 & 0.000 & 0.050 & 0.044 & 0.030 & 0.030 \\
			\cmidrule{2-12}          & \multicolumn{1}{c}{\multirow{4}[4]{*}{\shortstack{$\beta_A\ne 0$-Dense\\\newline{}($\Sigma_0=\Sigma_{AR}$)}}} & \multirow{2}[2]{*}{1} & 50    & 0.176 & 0.172 & 0.244 & 0.244 & 0.046 & 0.040 & 0.046 & 0.040 \\
			&       &       & 300   & 0.210 & 0.208 & 0.000 & 0.000 & 0.050 & 0.042 & 0.050 & 0.042 \\
			\cmidrule{3-12}          &       & \multirow{2}[2]{*}{3} & 50    & 0.150 & 0.136 & 0.586 & 0.586 & 0.040 & 0.026 & 0.042 & 0.026 \\
			&       &       & 300   & 0.722 & 0.720 & 0.000 & 0.000 & 0.050 & 0.046 & 0.052 & 0.048 \\
			\cmidrule{2-12}          & \multicolumn{1}{c}{\multirow{4}[4]{*}{\shortstack{$\beta_A=0$\\\newline{}($\Sigma_0=\Sigma_{CS}$)}}} & \multirow{2}[2]{*}{1} & 50    & 0.084 & 0.074 & 0.128 & 0.128 & 0.062 & 0.050 & 0.062 & 0.050 \\
			&       &       & 300   & 0.200 & 0.186 & 0.084 & 0.084 & 0.038 & 0.038 & 0.038 & 0.038 \\
			\cmidrule{3-12}          &       & \multirow{2}[2]{*}{3} & 50    & 0.072 & 0.054 & 0.222 & 0.222 & 0.048 & 0.022 & 0.044 & 0.026 \\
			&       &       & 300   & 0.218 & 0.204 & 0.058 & 0.058 & 0.048 & 0.036 & 0.034 & 0.032 \\
			\cmidrule{2-12}          & \multicolumn{1}{c}{\multirow{4}[4]{*}{\shortstack{$\beta_A\ne 0$-Sparse\\\newline{}($\Sigma_0=\Sigma_{CS}$)}}} & \multirow{2}[2]{*}{1} & 50    & 0.130 & 0.122 & 0.152 & 0.152 & 0.038 & 0.030 & 0.038 & 0.030 \\
			&       &       & 300   & 0.646 & 0.640 & 0.088 & 0.084 & 0.038 & 0.038 & 0.038 & 0.038 \\
			\cmidrule{3-12}          &       & \multirow{2}[2]{*}{3} & 50    & 0.110 & 0.090 & 0.268 & 0.268 & 0.076 & 0.032 & 0.072 & 0.040 \\
			&       &       & 300   & 0.690 & 0.690 & 0.092 & 0.092 & 0.058 & 0.054 & 0.052 & 0.044 \\
			\cmidrule{2-12}          & \multicolumn{1}{c}{\multirow{4}[4]{*}{\shortstack{$\beta_A\ne 0$-Dense\\\newline{}($\Sigma_0=\Sigma_{CS}$)}}} & \multirow{2}[2]{*}{1} & 50    & 0.086 & 0.082 & 0.118 & 0.116 & 0.058 & 0.048 & 0.058 & 0.048 \\
			&       &       & 300   & 0.346 & 0.336 & 0.094 & 0.094 & 0.060 & 0.054 & 0.060 & 0.054 \\
			\cmidrule{3-12}          &       & \multirow{2}[2]{*}{3} & 50    & 0.070 & 0.052 & 0.232 & 0.228 & 0.050 & 0.034 & 0.052 & 0.028 \\
			&       &       & 300   & 0.328 & 0.322 & 0.094 & 0.094 & 0.066 & 0.064 & 0.058 & 0.052 \\
			\cmidrule{1-12}
			\multicolumn{1}{c}{\multirow{24}[24]{*}{\shortstack{Hard\\ ($\theta_M\ne 0$)}}} & \multicolumn{1}{c}{\multirow{4}[4]{*}{\shortstack{$\beta_A=0$\\\newline{}($\Sigma_0=\Sigma_{AR}$)}}} & \multirow{2}[2]{*}{1} & 50    & 0.034 & 0.034 & 0.068 & 0.068 & 0.040 & 0.038 & 0.040 & 0.038 \\
			&       &       & 300   & 0.054 & 0.054 & 0.048 & 0.048 & 0.048 & 0.046 & 0.048 & 0.046 \\
			\cmidrule{3-12}          &       & \multirow{2}[2]{*}{3} & 50    & 0.052 & 0.052 & 0.164 & 0.164 & 0.038 & 0.036 & 0.054 & 0.048 \\
			&       &       & 300   & 0.052 & 0.052 & 0.046 & 0.046 & 0.060 & 0.054 & 0.052 & 0.048 \\
			\cmidrule{2-12}          & \multicolumn{1}{c}{\multirow{4}[4]{*}{\shortstack{$\beta_A\ne 0$-Sparse\\\newline{}($\Sigma_0=\Sigma_{AR}$)}}} & \multirow{2}[2]{*}{1} & 50    & 0.040 & 0.040 & 0.096 & 0.096 & 0.038 & 0.034 & 0.038 & 0.034 \\
			&       &       & 300   & 0.046 & 0.046 & 0.048 & 0.048 & 0.042 & 0.040 & 0.042 & 0.040 \\
			\cmidrule{3-12}          &       & \multirow{2}[2]{*}{3} & 50    & 0.054 & 0.054 & 0.228 & 0.228 & 0.048 & 0.044 & 0.054 & 0.052 \\
			&       &       & 300   & 0.084 & 0.084 & 0.046 & 0.046 & 0.058 & 0.054 & 0.054 & 0.050 \\
			\cmidrule{2-12}          & \multicolumn{1}{c}{\multirow{4}[4]{*}{\shortstack{$\beta_A\ne 0$-Dense\\\newline{}($\Sigma_0=\Sigma_{AR}$)}}} & \multirow{2}[2]{*}{1} & 50    & 0.038 & 0.038 & 0.074 & 0.074 & 0.046 & 0.038 & 0.046 & 0.038 \\
			&       &       & 300   & 0.056 & 0.056 & 0.048 & 0.048 & 0.034 & 0.032 & 0.034 & 0.032 \\
			\cmidrule{3-12}          &       & \multirow{2}[2]{*}{3} & 50    & 0.046 & 0.046 & 0.172 & 0.172 & 0.030 & 0.022 & 0.048 & 0.040 \\
			&       &       & 300   & 0.058 & 0.058 & 0.046 & 0.046 & 0.052 & 0.050 & 0.052 & 0.052 \\
			\cmidrule{2-12}          & \multicolumn{1}{c}{\multirow{4}[4]{*}{\shortstack{$\beta_A=0$\\\newline{}($\Sigma_0=\Sigma_{CS}$)}}} & \multirow{2}[2]{*}{1} & 50    & 0.064 & 0.064 & 0.066 & 0.066 & 0.046 & 0.046 & 0.046 & 0.046 \\
			&       &       & 300   & 0.054 & 0.054 & 0.056 & 0.056 & 0.050 & 0.050 & 0.050 & 0.050 \\
			\cmidrule{3-12}          &       & \multirow{2}[2]{*}{3} & 50    & 0.066 & 0.066 & 0.074 & 0.074 & 0.046 & 0.046 & 0.050 & 0.050 \\
			&       &       & 300   & 0.068 & 0.068 & 0.064 & 0.064 & 0.034 & 0.030 & 0.034 & 0.034 \\
			\cmidrule{2-12}          & \multicolumn{1}{c}{\multirow{4}[4]{*}{\shortstack{$\beta_A\ne 0$-Sparse\\\newline{}($\Sigma_0=\Sigma_{CS}$)}}} & \multirow{2}[2]{*}{1} & 50    & 0.068 & 0.068 & 0.068 & 0.068 & 0.054 & 0.054 & 0.054 & 0.054 \\
			&       &       & 300   & 0.062 & 0.062 & 0.056 & 0.056 & 0.052 & 0.052 & 0.052 & 0.052 \\
			\cmidrule{3-12}          &       & \multirow{2}[2]{*}{3} & 50    & 0.066 & 0.066 & 0.074 & 0.074 & 0.062 & 0.062 & 0.056 & 0.050 \\
			&       &       & 300   & 0.058 & 0.058 & 0.064 & 0.064 & 0.042 & 0.040 & 0.052 & 0.052 \\
			\cmidrule{2-12}          & \multicolumn{1}{c}{\multirow{4}[4]{*}{\shortstack{$\beta_A\ne 0$-Dense\\\newline{}($\Sigma_0=\Sigma_{CS}$)}}} & \multirow{2}[2]{*}{1} & 50    & 0.068 & 0.068 & 0.066 & 0.066 & 0.046 & 0.046 & 0.046 & 0.046 \\
			&       &       & 300   & 0.066 & 0.066 & 0.056 & 0.056 & 0.062 & 0.062 & 0.062 & 0.062 \\
			\cmidrule{3-12}          &       & \multirow{2}[2]{*}{3} & 50    & 0.068 & 0.066 & 0.072 & 0.072 & 0.046 & 0.046 & 0.048 & 0.048 \\
			&       &       & 300   & 0.070 & 0.070 & 0.062 & 0.062 & 0.044 & 0.044 & 0.054 & 0.054 \\
			\cmidrule{1-12}
		\end{tabular}%
	\label{tab:size-Gaussian}%
\end{table}%

\setlength{\tabcolsep}{4pt} 
\renewcommand{\arraystretch}{1} 
\begin{table}[htbp]
	\centering
	\caption{Continued-Empirical sizes for exposures sampled from the Gaussian distribution $N(0, 0.5^2)$ with a nonzero $\theta_{A}=c_0 \onevec{q}$, where $c_0=0.5$ for $q=1$ and $c_0=0.3$ for $q=3$.}
	\begin{tabular}{@{\extracolsep\fill}cccccccccccc}
		\cmidrule{1-12}
		Sparsity for $\theta_M$ & Case  & $q$   & $n$   & Zhou-$0$ & Zhou-$1$ & Guo-$0$ & Guo-$1$ & $\chi^2$-$0$ & $\chi^2$-$1$ & $\text{Bonf}$-$0$ & $\text{Bonf}$-$1$ \\
		\cmidrule{1-12}
		\multicolumn{1}{c}{\multirow{16}[16]{*}{\shortstack{Capped-$\ell_1$\\($\theta_M\ne 0$)}}} & \multicolumn{1}{c}{\multirow{4}[4]{*}{\shortstack{$\beta_A=0$\\\newline{}($\Sigma_0=\Sigma_{AR}$)}}} & \multirow{2}[2]{*}{1} & 50    & 0.034 & 0.034 & 0.074 & 0.074 & 0.040 & 0.038 & 0.040 & 0.038 \\
		&       &       & 300   & 0.048 & 0.048 & 0.044 & 0.044 & 0.048 & 0.046 & 0.048 & 0.046 \\
		\cmidrule{3-12}          &       & \multirow{2}[2]{*}{3} & 50    & 0.050 & 0.050 & 0.176 & 0.174 & 0.038 & 0.028 & 0.050 & 0.048 \\
		&       &       & 300   & 0.052 & 0.052 & 0.040 & 0.040 & 0.052 & 0.048 & 0.052 & 0.044 \\
		\cmidrule{2-12}          & \multicolumn{1}{c}{\multirow{4}[4]{*}{\shortstack{$\beta_A\ne 0$-Sparse\\\newline{}($\Sigma_0=\Sigma_{AR}$)}}} & \multirow{2}[2]{*}{1} & 50    & 0.044 & 0.044 & 0.104 & 0.104 & 0.036 & 0.034 & 0.036 & 0.034 \\
		&       &       & 300   & 0.048 & 0.048 & 0.044 & 0.044 & 0.040 & 0.040 & 0.040 & 0.040 \\
		\cmidrule{3-12}          &       & \multirow{2}[2]{*}{3} & 50    & 0.054 & 0.054 & 0.228 & 0.228 & 0.046 & 0.046 & 0.056 & 0.046 \\
		&       &       & 300   & 0.080 & 0.080 & 0.040 & 0.040 & 0.048 & 0.046 & 0.046 & 0.044 \\
		\cmidrule{2-12}          & \multicolumn{1}{c}{\multirow{4}[4]{*}{\shortstack{$\beta_A=0$\\\newline{}($\Sigma_0=\Sigma_{CS}$)}}} & \multirow{2}[2]{*}{1} & 50    & 0.066 & 0.066 & 0.070 & 0.070 & 0.044 & 0.044 & 0.044 & 0.044 \\
		&       &       & 300   & 0.054 & 0.054 & 0.056 & 0.056 & 0.054 & 0.054 & 0.054 & 0.054 \\
		\cmidrule{3-12}          &       & \multirow{2}[2]{*}{3} & 50    & 0.064 & 0.064 & 0.070 & 0.070 & 0.046 & 0.044 & 0.048 & 0.048 \\
		&       &       & 300   & 0.066 & 0.066 & 0.060 & 0.060 & 0.036 & 0.036 & 0.040 & 0.038 \\
		\cmidrule{2-12}          & \multicolumn{1}{c}{\multirow{4}[4]{*}{\shortstack{$\beta_A\ne 0$-Sparse\\\newline{}($\Sigma_0=\Sigma_{CS}$)}}} & \multirow{2}[2]{*}{1} & 50    & 0.066 & 0.066 & 0.070 & 0.070 & 0.052 & 0.052 & 0.052 & 0.052 \\
		&       &       & 300   & 0.056 & 0.056 & 0.056 & 0.056 & 0.050 & 0.050 & 0.050 & 0.050 \\
		\cmidrule{3-12}          &       & \multirow{2}[2]{*}{3} & 50    & 0.064 & 0.064 & 0.068 & 0.068 & 0.060 & 0.060 & 0.054 & 0.054 \\
		&       &       & 300   & 0.074 & 0.074 & 0.062 & 0.062 & 0.048 & 0.048 & 0.056 & 0.056 \\
		\cmidrule{1-12}
		\multicolumn{1}{c}{\multirow{16}[16]{*}{\shortstack{Decaying \\ Coefficients \\ ($\theta_M\ne 0$)}}} & \multicolumn{1}{c}{\multirow{4}[4]{*}{\shortstack{$\beta_A=0$\\\newline{}($\Sigma_0=\Sigma_{AR}$)}}} & \multirow{2}[2]{*}{1} & 50    & 0.034 & 0.034 & 0.076 & 0.076 & 0.042 & 0.038 & 0.042 & 0.038 \\
		&       &       & 300   & 0.050 & 0.050 & 0.046 & 0.046 & 0.050 & 0.048 & 0.050 & 0.048 \\
		\cmidrule{3-12}          &       & \multirow{2}[2]{*}{3} & 50    & 0.052 & 0.052 & 0.156 & 0.156 & 0.036 & 0.032 & 0.052 & 0.048 \\
		&       &       & 300   & 0.048 & 0.048 & 0.046 & 0.046 & 0.052 & 0.048 & 0.050 & 0.048 \\
		\cmidrule{2-12}          & \multicolumn{1}{c}{\multirow{4}[4]{*}{\shortstack{$\beta_A\ne 0$-Sparse\\\newline{}($\Sigma_0=\Sigma_{AR}$)}}} & \multirow{2}[2]{*}{1} & 50    & 0.048 & 0.048 & 0.098 & 0.098 & 0.034 & 0.032 & 0.034 & 0.032 \\
		&       &       & 300   & 0.062 & 0.062 & 0.046 & 0.046 & 0.040 & 0.040 & 0.040 & 0.040 \\
		\cmidrule{3-12}          &       & \multirow{2}[2]{*}{3} & 50    & 0.056 & 0.056 & 0.228 & 0.228 & 0.048 & 0.048 & 0.060 & 0.052 \\
		&       &       & 300   & 0.086 & 0.086 & 0.046 & 0.046 & 0.054 & 0.050 & 0.052 & 0.050 \\
		\cmidrule{2-12}          & \multicolumn{1}{c}{\multirow{4}[4]{*}{\shortstack{$\beta_A=0$\\\newline{}($\Sigma_0=\Sigma_{CS}$)}}} & \multirow{2}[2]{*}{1} & 50    & 0.066 & 0.066 & 0.068 & 0.068 & 0.044 & 0.044 & 0.044 & 0.044 \\
		&       &       & 300   & 0.058 & 0.058 & 0.054 & 0.054 & 0.050 & 0.050 & 0.050 & 0.050 \\
		\cmidrule{3-12}          &       & \multirow{2}[2]{*}{3} & 50    & 0.062 & 0.062 & 0.060 & 0.060 & 0.046 & 0.046 & 0.046 & 0.046 \\
		&       &       & 300   & 0.078 & 0.078 & 0.060 & 0.060 & 0.038 & 0.034 & 0.038 & 0.038 \\
		\cmidrule{2-12}          & \multicolumn{1}{c}{\multirow{4}[4]{*}{\shortstack{$\beta_A\ne 0$-Sparse\\\newline{}($\Sigma_0=\Sigma_{CS}$)}}} & \multirow{2}[2]{*}{1} & 50    & 0.068 & 0.068 & 0.068 & 0.068 & 0.052 & 0.052 & 0.052 & 0.052 \\
		&       &       & 300   & 0.064 & 0.064 & 0.054 & 0.054 & 0.048 & 0.048 & 0.048 & 0.048 \\
		\cmidrule{3-12}          &       & \multirow{2}[2]{*}{3} & 50    & 0.066 & 0.066 & 0.062 & 0.062 & 0.062 & 0.062 & 0.054 & 0.054 \\
		&       &       & 300   & 0.064 & 0.064 & 0.060 & 0.060 & 0.046 & 0.046 & 0.058 & 0.058 \\
		\cmidrule{1-12}
	\end{tabular}%
	\label{tab:size-Gaussian-others}%
\end{table}%

\setlength{\tabcolsep}{10pt} 
\renewcommand{\arraystretch}{0.75} 
\begin{table}[htbp]
	\centering
	\caption{Empirical power behaviors for exposures sampled from $N(0, 0.5^2)$.}
	\begin{tabular}{ccccccc}
		\cmidrule{1-7}
		Sparsity & Case  & $n$   & $c_1$ or $c_2$ & Zhou-$1$ & Guo-$1$ & $\text{Bonf}$-$1$ \\
		\cmidrule{1-7}
		\multirow{20}[8]{*}{Hard Sparsity} & \multirow{10}[4]{*}{\shortstack{Fix $\theta_{M}$, \\ Vary $\beta_{A}$}} & \multirow{5}[2]{*}{50} & 0     & 0.036 & 0.076 & 0.03 \\
		&       &       & 1/8   & 0.128 & 0.134 & 0.054 \\
		&       &       & 1/4   & 0.308 & 0.266 & 0.128 \\
		&       &       & 1/2   & 0.774 & 0.684 & 0.43 \\
		&       &       & 1     & 0.994 & 0.982 & 0.794 \\
		\cmidrule{3-7}          &       & \multirow{5}[2]{*}{300} & 0     & 0.042 & 0.05  & 0.04 \\
		&       &       & 1/8   & 0.446 & 0.394 & 0.188 \\
		&       &       & 1/4   & 0.948 & 0.93  & 0.54 \\
		&       &       & 1/2   & 1     & 1     & 0.952 \\
		&       &       & 1     & 1     & 1     & 0.992 \\
		\cmidrule{2-7}          & \multirow{10}[4]{*}{\shortstack{Fix $\beta_{A}$, \\ Vary $\theta_{M}$}} & \multirow{5}[2]{*}{50} & 0     & 0.134 & 0.204 & 0.034 \\
		&       &       & 1/8   & 0.716 & 0.268 & 0.104 \\
		&       &       & 1/4   & 0.884 & 0.504 & 0.234 \\
		&       &       & 1/2   & 0.936 & 0.796 & 0.502 \\
		&       &       & 1     & 0.946 & 0.938 & 0.754 \\
		\cmidrule{3-7}          &       & \multirow{5}[2]{*}{300} & 0     & 0.138 & 0     & 0.052 \\
		&       &       & 1/8   & 0.986 & 0.544 & 0.178 \\
		&       &       & 1/4   & 1     & 1     & 0.486 \\
		&       &       & 1/2   & 1     & 1     & 0.914 \\
		&       &       & 1     & 1     & 1     & 0.996 \\
		\cmidrule{1-7}
		\multirow{20}[7]{*}{Capped-$\ell_1$ Sparsity} & \multirow{10}[4]{*}{\shortstack{Fix $\theta_{M}$, \\ Vary $\beta_{A}$}} & \multirow{5}[2]{*}{50} & 0     & 0.038 & 0.072 & 0.03 \\
		&       &       & 1/8   & 0.128 & 0.112 & 0.056 \\
		&       &       & 1/4   & 0.31  & 0.26  & 0.136 \\
		&       &       & 1/2   & 0.782 & 0.678 & 0.434 \\
		&       &       & 1     & 0.994 & 0.988 & 0.794 \\
		\cmidrule{3-7}          &       & \multirow{5}[2]{*}{300} & 0     & 0.044 & 0.052 & 0.046 \\
		&       &       & 1/8   & 0.46  & 0.394 & 0.21 \\
		&       &       & 1/4   & 0.954 & 0.93  & 0.596 \\
		&       &       & 1/2   & 1     & 1     & 0.97 \\
		&       &       & 1     & 1     & 1     & 0.998 \\
		\cmidrule{2-7}          & \multirow{10}[3]{*}{\shortstack{Fix $\beta_{A}$, \\ Vary $\theta_{M}$}} & \multirow{5}[2]{*}{50} & 0     & 0.134 & 0.204 & 0.034 \\
		&       &       & 1/8   & 0.726 & 0.286 & 0.106 \\
		&       &       & 1/4   & 0.888 & 0.496 & 0.236 \\
		&       &       & 1/2   & 0.934 & 0.818 & 0.512 \\
		&       &       & 1     & 0.946 & 0.94  & 0.756 \\
		\cmidrule{3-7}          &       & \multirow{5}[1]{*}{300} & 0     & 0.138 & 0     & 0.052 \\
		&       &       & 1/8   & 0.992 & 0.558 & 0.184 \\
		&       &       & 1/4   & 1     & 1     & 0.526 \\
		&       &       & 1/2   & 1     & 1     & 0.932 \\
		&       &       & 1     & 1     & 1     & 1 \\
		\cmidrule{1-7}
		\multirow{20}[7]{*}{Decaying Coefficients} & \multirow{10}[3]{*}{\shortstack{Fix $\theta_{M}$, \\ Vary $\beta_{A}$}} & \multirow{5}[1]{*}{50} & 0     & 0.038 & 0.082 & 0.03 \\
		&       &       & 1/8   & 0.13  & 0.122 & 0.058 \\
		&       &       & 1/4   & 0.314 & 0.274 & 0.138 \\
		&       &       & 1/2   & 0.792 & 0.68  & 0.45 \\
		&       &       & 1     & 0.994 & 0.984 & 0.806 \\
		\cmidrule{3-7}          &       & \multirow{5}[2]{*}{300} & 0     & 0.046 & 0.052 & 0.042 \\
		&       &       & 1/8   & 0.456 & 0.394 & 0.212 \\
		&       &       & 1/4   & 0.95  & 0.93  & 0.598 \\
		&       &       & 1/2   & 1     & 1     & 0.97 \\
		&       &       & 1     & 1     & 1     & 0.998 \\
		\cmidrule{2-7}          & \multirow{10}[4]{*}{\shortstack{Fix $\beta_{A}$, \\ Vary $\theta_{M}$}} & \multirow{5}[2]{*}{50} & 0     & 0.134 & 0.204 & 0.034 \\
		&       &       & 1/8   & 0.746 & 0.288 & 0.114 \\
		&       &       & 1/4   & 0.904 & 0.514 & 0.242 \\
		&       &       & 1/2   & 0.946 & 0.83  & 0.528 \\
		&       &       & 1     & 0.958 & 0.946 & 0.788 \\
		\cmidrule{3-7}          &       & \multirow{5}[2]{*}{300} & 0     & 0.138 & 0     & 0.052 \\
		&       &       & 1/8   & 0.992 & 0.572 & 0.184 \\
		&       &       & 1/4   & 1     & 1     & 0.528 \\
		&       &       & 1/2   & 1     & 1     & 0.932 \\
		&       &       & 1     & 1     & 1     & 1 \\
		\cmidrule{1-7}
	\end{tabular}%
	\label{tab:power-Gaussian}%
\end{table}%

\section{A Modified Test Procedure with Sign Consistency}\label{sec:ex-sign-con}

In certain applications, sparsity of the mediation effect may be a reasonable assumption and variable selection among high-dimensional mediators might be of interest. 
For example, \cite{luo2020high} developed a high-dimensional mediation analysis procedure to select important DNA methylation markers in identifying epigenetic pathways between environmental exposures and health outcomes.
\cite{guo2022statistical} studied how crucial financial metrics, selected from numerous ones, mediate the relationship between company sectors and stock price recovery after COVID-19 pandemic outbreak.
In this section, we present a modification to the proposed test procedure to exploit  sparsity of the mediation effect when it is assumed. Specifically, we assume that the regression coefficient $\theta$ in \eqref{eq:Y-X-equation} is $\ell_0$ sparse, that is,  $|\mathcal{S}|=s$ for some positive integer $s$, where $\mathcal{S}=\supp{\theta}=\{j: \theta_j \ne 0\}$ represents the support set of  $\theta$. 


We take the Lasso estimator with tuning parameter $\lambda_n$ as the initial estimator for $\theta$. To ensure the sign consistency of this initial estimator $\hat{\theta}$ for $\theta$, we assume the uniform signal strength condition \citep{zhang2014confidence} and the mutual incoherence condition \citep{wainwright2009sharp} (or the irrepresentable condition in \cite{zhao2006model}), which are two crucial conditions for $\ell_1$-penalized estimators in the high-dimensional literature \citep[see, e.g., Chapter 4.3.1 of ][]{fan2020statistical}. Previous studies have demonstrated the necessity of these conditions for the sign consistency of $\ell_1$-penalized estimators. For more details, see \cite{zhang2014confidence, wainwright2009sharp, zhao2006model, van2009conditions}. 

\begin{enumerate}
	\item[(C1)] Mutual Incoherence: $\|\Sigma_{\mathcal{S}^c\mathcal{S}} \Sigma_{\mathcal{S}\mathcal{S}}^{-1}\|_{\infty} \le 1-\omega$, for some $\omega\in (0,1]$.
	\item[(C2)] Uniform Signal Strength: $\theta_{\min}=\min_{j\in \mathcal{S}}|\theta_{j}| \gtrsim \lambda_n\sqrt{s}$. 
\end{enumerate}

\begin{proposition}\label{prop:sign-consistency-lasso}
	Suppose that Assumptions (A1), (C1) and (C2) hold, and consider the Lasso estimator $\hat{\theta}$ with a tuning parameter $\lambda_n \asymp \sqrt{\log(q+p) / n}$. Then, for all  sufficiently large $n$ and $s^3\log(q+p-s) \lesssim n$, with probability approaching one as $n\rightarrow\infty$, the Lasso estimator $\hat{\theta}$ is  unique, satisfies $\|\hat{\theta}_{\mathcal{S}} - \theta_{\mathcal{S}}\|_{\infty} \lesssim \lambda_n\sqrt{s}$, and possesses the sign consistency $\hat{\mathcal{S}}=\mathcal{S}$, where $\hat{\mathcal{S}}=\text{supp}(\hat{\theta})$. Moreover, (B1) is satisfied with $\|\hat{\theta}_{\mathcal{S}}-\theta_{\mathcal{S}}\|_1 \lesssim s \sqrt{\log(s)/n}$.
\end{proposition}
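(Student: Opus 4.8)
The plan is to establish the four conclusions via the primal--dual witness (PDW) construction of \cite{wainwright2009sharp}, whose verified conclusions are collected in Chapter 4.3.1 of \cite{fan2020statistical}; it suffices to check the hypotheses of that result under Assumptions (A1), (C1) and (C2). The one genuinely new ingredient is that the incoherence condition (C1) is imposed on the \emph{population} covariance $\Sigma_X$, whereas the PDW argument requires the analogous condition on the \emph{sample} covariance $\hat{\Sigma}=n^{-1}X^{\transpose}X$; the bulk of the work is transferring (C1) to the sample level under the random design.

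First I would lift the incoherence to the sample level. Writing $\Gamma=\Sigma_{\mathcal{S}^c\mathcal{S}}\Sigma_{\mathcal{S}\mathcal{S}}^{-1}$ and $\hat{\Gamma}=\hat{\Sigma}_{\mathcal{S}^c\mathcal{S}}\hat{\Sigma}_{\mathcal{S}\mathcal{S}}^{-1}$, I would decompose
\[
\hat{\Gamma}-\Gamma
=
(\hat{\Sigma}_{\mathcal{S}^c\mathcal{S}}-\Sigma_{\mathcal{S}^c\mathcal{S}})\hat{\Sigma}_{\mathcal{S}\mathcal{S}}^{-1}
+
\Sigma_{\mathcal{S}^c\mathcal{S}}\big(\hat{\Sigma}_{\mathcal{S}\mathcal{S}}^{-1}-\Sigma_{\mathcal{S}\mathcal{S}}^{-1}\big)
\]
and bound each term in the $\ell_\infty\to\ell_\infty$ operator norm. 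Under (A1) the entries of $\hat{\Sigma}-\Sigma_X$ are averages of sub-exponential variables, so a Bernstein bound together with a union bound over the relevant entries yields entrywise deviations of order $\sqrt{\log(q+p)/n}$ with high probability. Converting these entrywise bounds to operator-norm bounds on the blocks costs a factor $s$ (summing over the $s$ columns indexed by $\mathcal{S}$), while the eigenvalue lower bound $c_0$ in (A1) gives $\|\hat{\Sigma}_{\mathcal{S}\mathcal{S}}^{-1}\|_\infty\lesssim\sqrt{s}$ (with a further $\sqrt{s}$ loss passing from the spectral to the $\ell_\infty$ operator norm). The resulting product is of order $s^{3/2}\sqrt{\log(q+p-s)/n}$, which is $o(1)$ precisely under $s^3\log(q+p-s)\lesssim n$, so that $\|\hat{\Gamma}\|_\infty\le 1-\omega/2$ with probability tending to one.

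Given sample incoherence, I would then run the PDW. Solving the oracle (support-restricted) Lasso $\hat{\theta}_{\mathcal{S}}=\argmin_{b\in\real^{s}}\{n^{-1}\|Y-X_{\mathcal{S}}b\|_2^2+\lambda_n\|b\|_1\}$ with $\hat{\theta}_{\mathcal{S}^c}=0$ and the matching subgradient, strict dual feasibility $\|\hat{z}_{\mathcal{S}^c}\|_\infty<1$ follows from $\|\hat{\Gamma}\|_\infty\le 1-\omega/2$ together with the bound $\|n^{-1}X^{\transpose}Z\|_\infty\lesssim\sqrt{\log(q+p)/n}\asymp\lambda_n$ established exactly as in the proof of Proposition~\ref{prop:CC-general-require}; this simultaneously yields uniqueness of the Lasso solution and $\hat{\mathcal{S}}\subseteq\mathcal{S}$. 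From the stationarity equation on $\mathcal{S}$,
\[
\hat{\theta}_{\mathcal{S}}-\theta_{\mathcal{S}}
=
\hat{\Sigma}_{\mathcal{S}\mathcal{S}}^{-1}\big(n^{-1}X_{\mathcal{S}}^{\transpose}Z-\lambda_n\hat{z}_{\mathcal{S}}\big),
\]
I would take $\ell_\infty$ norms and use $\|\hat{\Sigma}_{\mathcal{S}\mathcal{S}}^{-1}\|_\infty\lesssim\sqrt{s}$ to obtain $\|\hat{\theta}_{\mathcal{S}}-\theta_{\mathcal{S}}\|_\infty\lesssim\lambda_n\sqrt{s}$. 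By (C2), $\theta_{\min}\gtrsim\lambda_n\sqrt{s}$ dominates this error, so $\mathrm{sign}(\hat{\theta}_j)=\mathrm{sign}(\theta_j)$ for every $j\in\mathcal{S}$ and hence $\hat{\mathcal{S}}=\mathcal{S}$. Finally, since the recovered model coincides with $\mathcal{S}$, the governing noise term concentrates over the $s$ coordinates of $\mathcal{S}$ alone, giving $\|n^{-1}X_{\mathcal{S}}^{\transpose}Z\|_\infty\lesssim\sqrt{\log(s)/n}$; combined with $\|\hat{\theta}_{\mathcal{S}}-\theta_{\mathcal{S}}\|_1\le\sqrt{s}\,\|\hat{\theta}_{\mathcal{S}}-\theta_{\mathcal{S}}\|_2$ this delivers $\|\hat{\theta}_{\mathcal{S}}-\theta_{\mathcal{S}}\|_1\lesssim s\sqrt{\log(s)/n}$, the sharpened form of (B1).

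The main obstacle is the first step. Converting the population incoherence (C1) into its sample analogue is delicate because the $\ell_\infty$ operator norm interacts poorly with matrix products and inverses, and each passage from entrywise concentration to operator-norm control on $\hat{\Sigma}_{\mathcal{S}\mathcal{S}}^{-1}$ and $\hat{\Sigma}_{\mathcal{S}^c\mathcal{S}}$ costs a factor of $\sqrt{s}$ or $s$; tracking these factors is exactly what produces the cubic scaling $s^3\log(q+p-s)\lesssim n$, and ensuring the probability bounds hold uniformly over the $(q+p-s)s$ off-support entries is the most technical part. Everything downstream---dual feasibility, the $\ell_\infty$ error bound, and sign recovery---is then routine once the sample incoherence margin $\omega/2$ is secured.
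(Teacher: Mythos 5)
Your overall strategy coincides with the paper's: both transfer the population incoherence (C1) to its sample analogue and then invoke the primal--dual witness machinery of \cite{wainwright2009sharp}. Your two-term decomposition of $\hat{\Gamma}-\Gamma$ is algebraically the paper's four-term, Ravikumar-style decomposition in Claim \ref{claim:incoherence-pop2sample} with $T_2$ and $T_3$ merged, and your bookkeeping of the $\sqrt{s}$ and $s$ factors reproduces the same cubic requirement $s^3\log(q+p-s)\lesssim n$. Two minor omissions relative to the paper: you never verify the sample minimum-eigenvalue condition \eqref{eq:smallest-eigenvalue-sample} (your bound $\|\hat{\Sigma}_{\mathcal{S}\mathcal{S}}^{-1}\|_\infty\lesssim\sqrt{s}$ presupposes it; the paper obtains it from Weyl's inequality plus spectral concentration), nor the column-norm condition \eqref{eq:col-bounded-sample}, though your route through $\|n^{-1}X^{\transpose}Z\|_\infty$ arguably sidesteps the latter. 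These are fixable glosses, and up to uniqueness, the $\ell_\infty$ bound, and sign consistency your argument is sound and essentially identical to the paper's.

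The genuine gap is in your last step, the claim $\|\hat{\theta}_{\mathcal{S}}-\theta_{\mathcal{S}}\|_1\lesssim s\sqrt{\log(s)/n}$. Your own stationarity identity reads $\hat{\theta}_{\mathcal{S}}-\theta_{\mathcal{S}}=\hat{\Sigma}_{\mathcal{S}\mathcal{S}}^{-1}\bigl(n^{-1}X_{\mathcal{S}}^{\transpose}Z-\lambda_n\hat{z}_{\mathcal{S}}\bigr)$, and the penalty term cannot be discarded: $\|\lambda_n\hat{z}_{\mathcal{S}}\|_2$ is of order $\lambda_n\sqrt{s}\asymp\sqrt{s\log(q+p)/n}$, which dominates the noise contribution $\sqrt{s\log(s)/n}$ whenever $q+p\gg s$ --- exactly the high-dimensional regime of interest. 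Carrying both terms through gives $\|\hat{\theta}_{\mathcal{S}}-\theta_{\mathcal{S}}\|_1\lesssim s\lambda_n\asymp s\sqrt{\log(q+p)/n}$: enough for Assumption (B1) as stated, but not the sharper $\log(s)$ rate. That rate would be natural for an ordinary least-squares refit on $\hat{\mathcal{S}}$, which carries no penalty bias, not for the Lasso with tuning parameter $\lambda_n\asymp\sqrt{\log(q+p)/n}$. For what it is worth, the paper's own proof of this clause is also only a gloss --- it deduces the sample compatibility condition and defers to the argument of Proposition \ref{prop:CC-general-require}, which again yields only $s\sqrt{\log(q+p)/n}$ --- so the $\log(s)$ refinement is unsupported there as well; but your write-up should derive the $\log(q+p)$ bound honestly rather than assert the stronger one via an argument that silently drops the dominant term.
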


Based on Proposition \ref{prop:sign-consistency-lasso}, we can modify the estimation strategy in Section \ref{sec:estimation} when $\ell_0$ sparsity is assumed, as follows. Given the Lasso estimator $\hat{\theta}=(\hat{\theta}_A, \hat{\theta}_M)$ for $\theta$ and the ordinary least-squares estimator $\hat{\beta}_{A}= ((A^{\transpose}A)^{-1}A^{\transpose} M)^{\transpose}$ for $\beta_A$, we still consider the pilot estimator $\tilde{\gamma}=\hat{\beta}_A^{\transpose}\hat{\theta}_M$ for $\gamma$. 
Recall that, for each $j=1,\ldots, q$, $g_j = (\zerovec{q}^{\transpose}, \tilde{g}_j^{\transpose}) \in \real^{q+p}$ where $(\tilde{g}_1, \ldots, \tilde{g}_q)=\hat{\beta}_{A}$.
Let $\hat{\mathcal{S}}$ denote the support of $\hat{\theta}$, and $\hat{s}=|\hat{\mathcal{S}}|$ denote its size. 
Moreover, let $\hat{\theta}_{\hat{\mathcal{S}}}$ and $g_{j, \hat{\mathcal{S}}}$ be respectively the sub-vector of $\hat{\theta}$ and $g_{j}$ with coordinates in $\hat{\mathcal{S}}$, and $X_{\hat{\mathcal{S}}}$ be the sub-matrix of $X$ with columns in $\hat{\mathcal{S}}$. Then  we propose the folowing modified debiased estimator $\hat{\gamma}_j$ for each $j=1,\ldots, q$,
\[
\hat{\gamma}_j = \tilde{\gamma}_j + n^{-1} \hat{u}_{\hat{\mathcal{S}}, j}^{\transpose} X_{\hat{\mathcal{S}}}^{\transpose} (Y-X_{\hat{\mathcal{S}}}\hat{\theta}_{\hat{\mathcal{S}}})
\]
with 
\begin{align*}
	\hat{u}_{\hat{\mathcal{S}},j}=\arg\min_{u\in \real^{\hat s}} u^{\transpose} \hat{\Sigma}_{X_{\hat{\mathcal{S}}}} u \quad \text{subject to } \quad  
	&\| \hat{\Sigma}_{X_{\hat{\mathcal{S}}}} u - g_{j,\hat{\mathcal{S}}} \|_{\infty} \le \|g_{j,\hat{\mathcal{S}}}\|_2 \lambda \\
	&\big|g_{j,\hat{\mathcal{S}}}^{\transpose}\hat{\Sigma}_{X_{\hat{\mathcal{S}}}} u-\|g_{j,\hat{\mathcal{S}}}\|_2^2\big| \le \|g_{j,\hat{\mathcal{S}}}\|_2^2 \lambda, \\
	&\|X_{\hat{\mathcal{S}}} u\|_{\infty} \le \|g_{j,\hat{\mathcal{S}}}\|_2\mu,
\end{align*}
where $\lambda \asymp \sqrt{\log(s) /n}$ , $\mu \asymp \log n$, and $\hat{\Sigma}_{X_{\hat{\mathcal{S}}}}=n^{-1} X_{\hat{\mathcal{S}}}^{\transpose} X_{\hat{\mathcal{S}}}$. 
With $$\hat{U}_{\hat{\mathcal{S}}}=(\hat{u}_{\hat{\mathcal{S}},1}, \ldots, \hat{u}_{\hat{\mathcal{S}},q})\in \real^{\hat s \times q},$$ the modified de-biased estimator is also represented by
\begin{equation*}
	\hat{\gamma} = \tilde{\gamma} + n^{-1} \hat{U}_{\hat{\mathcal{S}}}^{\transpose} X_{\hat{\mathcal{S}}}^{\transpose} (Y-X_{\hat{\mathcal{S}}}\hat{\theta}_{\hat{\mathcal{S}}}).
\end{equation*}
The corresponding asymptotic variance, conditional on $\{X_i, i=1,\ldots, n\}$, is
\begin{equation*}
	V = \frac{\sigma_E^2}{n} \hat{\Sigma}_A^{-1} + \frac{\sigma_Z^2}{n} \hat{U}_{\hat{\mathcal{S}}}^{\transpose} \hat{\Sigma}_{X_{\hat{\mathcal{S}}}} \hat{U}_{\hat{\mathcal{S}}} + \frac{\tau}{n} I_q.
\end{equation*}
The validity and power behavior of the modified test procedure can be re-established in analogy to Theorems \ref{thm:asym-norm-hgamma-decomp} and \ref{thm:test-bonf}, and thus the details are omitted.


\begin{proof}[Proof of Proposition \ref{prop:sign-consistency-lasso}]
	We apply Theorem 1 in \cite{wainwright2009sharp}. To this end, we need to show that, with probability tending to one, there exist constants $c_0^{\prime}, C^{\prime}>0$ and $\omega^{\prime}\in(0,1]$, such that
	\begin{align}
		&\Lambda_{\min}(\hat{\Sigma}_{\mathcal{S}\mathcal{S}}) \ge c_0^{\prime}, 
		\label{eq:smallest-eigenvalue-sample}\\
		&\|\hat{\Sigma}_{\mathcal{S}^c\mathcal{S}}\hat{\Sigma}_{\mathcal{S}\mathcal{S}}^{-1}\|_{\infty} \le 1-\omega^{\prime}, 
		\label{eq:incoherence-sample}\\
		& n^{-1} \max_{j\in \mathcal{S}^c} \|X_{\cdot j}\|_2^2 \le C^{\prime} 
		\label{eq:col-bounded-sample},
	\end{align}
	where $X_{\cdot j}$ is the $j$th column of $X$, $\hat{\Sigma}_{\mathcal{S}\mathcal{S}}=n^{-1} X_{\mathcal{S}}^{\transpose}X_{\mathcal{S}}$ and $\hat{\Sigma}_{\mathcal{S}^c\mathcal{S}}=n^{-1} X_{\mathcal{S}^c}^{\transpose}X_{\mathcal{S}}$. Equation \eqref{eq:incoherence-sample} is handled in Claim \ref{claim:incoherence-pop2sample}, and  \eqref{eq:smallest-eigenvalue-sample} is due to Weyl's inequality and Corollary 5 in \cite{kereta2021estimating}. Specifically, under Assumption (A1),
	\[
	\Lambda_{\min}(\hat{\Sigma}_{\mathcal{S}\mathcal{S}})
	\ge 
	\Lambda_{\min}(\Sigma_{\mathcal{S}\mathcal{S}})
	-
	\|\hat{\Sigma}_{\mathcal{S}\mathcal{S}}-\Sigma_{\mathcal{S}\mathcal{S}}\|_{2} 
	\ge 
	c_0-C_1 \sqrt{\frac{s+\log n}{n}}
	\ge
	c_0^{\prime} > 0
	\]
	with probability at least $1-n^{-1}$, for some constants $c_0^{\prime}, C_1 > 0$ and sufficiently large $n$; here, recall $c_0$ is defined in Assumption (A1).
	
	To establish \eqref{eq:col-bounded-sample}, note that, for each $j\in \mathcal{S}^c$, $\|X_{\cdot j}\|_2$ is the norm of a random vector with independent and subGaussian coordinates. Then by the concentration inequality of norm \citep[Eq (3.1) in][]{vershynin2018high} and the union bound,
	\begin{align*}
		\prob\left(   \max_{j\in \mathcal{S}^c} n^{-1}\|X_{\cdot j}\|_2^2 - \max_{j\in \mathcal{S}^c}\mu_{2j} \ge u \right)
		&\le
		\prob\left( \max_{j\in \mathcal{S}^c} \left| n^{-1} \|X_{\cdot j}\|_2^2 - \mu_{2j} \right| \ge u \right) \\
		&\le 
		(q+p-s)\exp\left(-C_3 n \min(u^2, u)\right) \\
		&=
		\exp\left(-C_3 n \min(u^2, u) + \log(q+p-s)\right),
	\end{align*}
	where $\mu_{2j} = \expect[X_{1j}^2]>0$ and $C_3>0$ is a constant. Since $\Lambda_{\max}(\Sigma_X)\le C_0$ by Assumption (A1), $\mu_{2j}\le C_0$ for each $j=1,\ldots,q+p$. Taking $u$ to be a suitable constant in the above inequality leads to \eqref{eq:col-bounded-sample}. 
	
	With \eqref{eq:smallest-eigenvalue-sample}--\eqref{eq:col-bounded-sample} and Assumption (C2), taking $\lambda_n \asymp \sqrt{\frac{\log(q+p)}{n}}$, by Theorem 1 in \cite{wainwright2009sharp}, we conclude that, with probability tending to one, the Lasso estimator is unique, possesses sign consistency $\hat{\mathcal{S}}=\mathcal{S}$, and satisfies $\|\hat{\theta}_{\mathcal{S}} - \theta_{\mathcal{S}}\|_{\infty} \lesssim \lambda_n\sqrt{s}$. 
	Moreover, it is clear that the (sample version) mutual incoherence condition \eqref{eq:incoherence-sample} implies the (sample version) uniform irrepresentable condition \citep{van2009conditions}, which further implies the (sample version) compatibility condition \citep[Theorem 9.1 in][]{van2009conditions}. Hence, following the arguments in the proof for Proposition \ref{prop:CC-general-require}, one can show that (B1) is satisfied.
\end{proof}

\begin{claim}
	\label{claim:incoherence-pop2sample}
	Under Assumptions (A1) and (C1), if  $s^3\log(q+p-s) \ll n$ 
	then the sample covariance matrix $\hat{\Sigma}_{X}$  satisfies the sample version of the mutual incoherence condition, i.e.,
	\[\prob\left(\|\hat{\Sigma}_{\mathcal{S}^c\mathcal{S}}\hat{\Sigma}_{\mathcal{S}\mathcal{S}}^{-1}\|_{\infty} \le 1-\omega/2\right) \to 1,
	\]
	as $n\to \infty$.
\end{claim}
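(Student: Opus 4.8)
The plan is to transfer the population mutual incoherence condition (C1) to its sample analogue by a perturbation argument, where throughout I read $\|\cdot\|_\infty$ for matrices as the operator $\ell_\infty$ norm (the maximum absolute row sum), as is standard for the mutual incoherence condition. Writing $P=\Sigma_{\mathcal{S}^c\mathcal{S}}\Sigma_{\mathcal{S}\mathcal{S}}^{-1}$ and $\hat{P}=\hat{\Sigma}_{\mathcal{S}^c\mathcal{S}}\hat{\Sigma}_{\mathcal{S}\mathcal{S}}^{-1}$, I would control $\|\hat{P}-P\|_\infty$ and conclude by the triangle inequality. Using the resolvent identity $\hat{\Sigma}_{\mathcal{S}\mathcal{S}}^{-1}-\Sigma_{\mathcal{S}\mathcal{S}}^{-1}=-\Sigma_{\mathcal{S}\mathcal{S}}^{-1}(\hat{\Sigma}_{\mathcal{S}\mathcal{S}}-\Sigma_{\mathcal{S}\mathcal{S}})\hat{\Sigma}_{\mathcal{S}\mathcal{S}}^{-1}$, I decompose
\[
\hat{P}-P=(\hat{\Sigma}_{\mathcal{S}^c\mathcal{S}}-\Sigma_{\mathcal{S}^c\mathcal{S}})\hat{\Sigma}_{\mathcal{S}\mathcal{S}}^{-1}-\Sigma_{\mathcal{S}^c\mathcal{S}}\Sigma_{\mathcal{S}\mathcal{S}}^{-1}(\hat{\Sigma}_{\mathcal{S}\mathcal{S}}-\Sigma_{\mathcal{S}\mathcal{S}})\hat{\Sigma}_{\mathcal{S}\mathcal{S}}^{-1}.
\]
By submultiplicativity of the operator norm and condition (C1) (which bounds the factor $\Sigma_{\mathcal{S}^c\mathcal{S}}\Sigma_{\mathcal{S}\mathcal{S}}^{-1}=P$ by $1-\omega$ in the second term), it then suffices to bound the two deviation blocks $\hat{\Sigma}_{\mathcal{S}^c\mathcal{S}}-\Sigma_{\mathcal{S}^c\mathcal{S}}$ and $\hat{\Sigma}_{\mathcal{S}\mathcal{S}}-\Sigma_{\mathcal{S}\mathcal{S}}$ together with $\|\hat{\Sigma}_{\mathcal{S}\mathcal{S}}^{-1}\|_\infty$.

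Next I would establish the entrywise concentration $\max_{j,k}|\hat{\Sigma}_{jk}-\Sigma_{jk}|\lesssim\sqrt{\log(q+p-s)/n}$ over the relevant $O((q+p-s)s)$ entries. Under Assumption (A1) each product $X_{ij}X_{ik}$ is sub-exponential, so Bernstein's inequality with a union bound over these entries yields this bound with probability tending to one, exactly as in the proof of Proposition~\ref{prop:CC-general-require}. Converting an entrywise bound to the operator $\ell_\infty$ norm costs a factor equal to the number of columns, which is $s$ for both blocks; hence
\[
\|\hat{\Sigma}_{\mathcal{S}^c\mathcal{S}}-\Sigma_{\mathcal{S}^c\mathcal{S}}\|_\infty\lesssim s\sqrt{\log(q+p-s)/n},\qquad \|\hat{\Sigma}_{\mathcal{S}\mathcal{S}}-\Sigma_{\mathcal{S}\mathcal{S}}\|_\infty\lesssim s\sqrt{\log(q+p-s)/n}.
\]
For the inverse, Weyl's inequality together with Corollary~5 of \cite{kereta2021estimating} (already invoked in the proof of Proposition~\ref{prop:sign-consistency-lasso}) gives $\Lambda_{\min}(\hat{\Sigma}_{\mathcal{S}\mathcal{S}})\ge c_0'$ with probability tending to one, so that $\|\hat{\Sigma}_{\mathcal{S}\mathcal{S}}^{-1}\|_\infty\le\sqrt{s}\,\|\hat{\Sigma}_{\mathcal{S}\mathcal{S}}^{-1}\|_2\le\sqrt{s}/c_0'$, using the elementary inequality $\|M\|_\infty\le\sqrt{s}\,\|M\|_2$ for $s\times s$ matrices.

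Substituting these three bounds into the decomposition yields $\|\hat{P}-P\|_\infty\lesssim s^{3/2}\sqrt{\log(q+p-s)/n}$, which is $o(1)$ precisely under the stated hypothesis $s^3\log(q+p-s)\ll n$. A final triangle inequality with (C1) then gives $\|\hat{P}\|_\infty\le\|P\|_\infty+\|\hat{P}-P\|_\infty\le(1-\omega)+o(1)\le 1-\omega/2$ with probability tending to one, which is the assertion. The main obstacle I expect is not any single estimate but the careful book-keeping of the dimension factors: the factor $s$ from each entrywise-to-operator conversion and the factor $\sqrt{s}$ from $\|\hat{\Sigma}_{\mathcal{S}\mathcal{S}}^{-1}\|_\infty$ must combine to exactly $s^{3/2}$ so as to match the sparsity budget $s^3\log(q+p-s)\ll n$; a cruder bound (for instance using $\|M\|_\infty\le s\|M\|_2$ for the inverse) would inflate the rate to $s^2$ and require the stronger condition $s^4\log(q+p-s)\ll n$, so sharpness here rests on invoking the $\sqrt{s}$ conversion for the $s\times s$ inverse rather than a full $s$ factor.
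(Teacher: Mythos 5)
Your proof is correct and follows essentially the same perturbation route as the paper: your two-term decomposition of $\hat{P}-P$ is an algebraic regrouping of the paper's four-term decomposition ($T_1$ through $T_4$, with your first term equal to $T_2+T_3$ and your second term the paper's refactored $T_1$), and both arguments reduce to the same budget $s^3\log(q+p-s)\ll n$. The only cosmetic difference is that you bound every deviation block via uniform entrywise Bernstein concentration plus the factor-$s$ and factor-$\sqrt{s}$ norm conversions, whereas the paper mixes spectral-norm concentration from \cite{kereta2021estimating} (for the $T_1$ piece and the inverse) with direct sub-exponential row-sum tail bounds (for $T_2$ and $T_3$); both toolkits yield matching rates.
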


\begin{proof}[Proof of Claim \ref{claim:incoherence-pop2sample}]
	We decompose the sample covariance matrix in the mutual incoherence condition as  $\hat{\Sigma}_{\mathcal{S}^c\mathcal{S}}\hat{\Sigma}_{\mathcal{S}\mathcal{S}}^{-1} = T_1 + T_2 + T_3 + T_4$, where
	\begin{align*}
		T_1 &= \Sigma_{\mathcal{S}^c\mathcal{S}}(\hat{\Sigma}_{\mathcal{S}\mathcal{S}}^{-1}-\Sigma_{\mathcal{S}\mathcal{S}}^{-1})\\
		T_2 &= (\hat{\Sigma}_{\mathcal{S}^c\mathcal{S}}-\Sigma_{\mathcal{S}^c\mathcal{S}}) \Sigma_{\mathcal{S}\mathcal{S}}^{-1}\\
		T_3 &= (\hat{\Sigma}_{\mathcal{S}^c\mathcal{S}}-\Sigma_{\mathcal{S}^c\mathcal{S}})(\hat{\Sigma}_{\mathcal{S}\mathcal{S}}^{-1}-\Sigma_{\mathcal{S}\mathcal{S}}^{-1})\\
		T_4 &= \Sigma_{\mathcal{S}^c\mathcal{S}} \Sigma_{\mathcal{S}\mathcal{S}}^{-1}
	\end{align*}
	as in the proof of Lemma 6 in \cite{ravikumar2010high}. The fourth term can be bounded by the population mutual incoherence condition, specifically,
	\begin{equation}
		\|T_4\|_{\infty} = \|\Sigma_{\mathcal{S}^c\mathcal{S}} \Sigma_{\mathcal{S}\mathcal{S}}^{-1}\|_{\infty} \le 1-\omega.
		\label{eq:T4-bound}
	\end{equation}
	
	For the first term $T_1$, we re-factorize it as	
	\[
	T_1 = \Sigma_{\mathcal{S}^c\mathcal{S}}\Sigma_{\mathcal{S}\mathcal{S}}^{-1}(\Sigma_{\mathcal{S}\mathcal{S}}-\hat{\Sigma}_{\mathcal{S}\mathcal{S}}) \hat{\Sigma}_{\mathcal{S}\mathcal{S}}^{-1},
	\]
	and then bound it by applying the sub-multiplicative property of the matrix $\ell_{\infty}$ norm:
	\begin{align}
		\notag
		\|T_1\|_{\infty} &\le \|\Sigma_{\mathcal{S}^c\mathcal{S}}\Sigma_{\mathcal{S}\mathcal{S}}^{-1}\|_{\infty} \|\Sigma_{\mathcal{S}\mathcal{S}}-\hat{\Sigma}_{\mathcal{S}\mathcal{S}}\|_{\infty}\|\hat{\Sigma}_{\mathcal{S}\mathcal{S}}^{-1}\|_{\infty} \\
		\notag
		&\le 
		\|\Sigma_{\mathcal{S}^c\mathcal{S}}\Sigma_{\mathcal{S}\mathcal{S}}^{-1}\|_{\infty} \sqrt{s}\|\Sigma_{\mathcal{S}\mathcal{S}}-\hat{\Sigma}_{\mathcal{S}\mathcal{S}}\|_{2} \sqrt{s}\left(\|\Sigma_{\mathcal{S}\mathcal{S}}^{-1}\|_{2}+\|\hat{\Sigma}_{\mathcal{S}\mathcal{S}}^{-1}-\Sigma_{\mathcal{S}\mathcal{S}}^{-1}\|_{2}\right) \\
		&\le 
		(1-\omega) s \|\Sigma_{\mathcal{S}\mathcal{S}}-\hat{\Sigma}_{\mathcal{S}\mathcal{S}}\|_{2} \left(c_0^{-1}+\|\hat{\Sigma}_{\mathcal{S}\mathcal{S}}^{-1}-\Sigma_{\mathcal{S}\mathcal{S}}^{-1}\|_{2}\right),
		\label{eq:T1-init-bound}
	\end{align}
	where the last inequality is due to Assumpsion (A1). With Assumpsion (A1), by Corollary 5 and Equation (1.6) in \cite{kereta2021estimating}, one has
	\[
	\prob\left( \|\Sigma_{\mathcal{S}\mathcal{S}}-\hat{\Sigma}_{\mathcal{S}\mathcal{S}}\|_{2} \gtrsim \sqrt{\frac{s+\log n}{n}} \right) \le n^{-1}
	\]
	and \[
	\prob\left( \|\hat{\Sigma}_{\mathcal{S}\mathcal{S}}^{-1}-\Sigma_{\mathcal{S}\mathcal{S}}^{-1}\|_{2} \gtrsim \sqrt{\frac{s+\log n}{n}} \right) \le n^{-1}.
	\]
	Combining the above two inequalities with \eqref{eq:T1-init-bound} yields
	\begin{equation}
		\prob\left( \|T_1\|_{\infty} \ge \omega/6 \right) \le 2n^{-1},
		\label{eq:T1-bound}
	\end{equation}
	for $s^3 \ll n$ and sufficiently large $n$.
	
	For the second term $T_2$, we first derive a bound for $\|\hat{\Sigma}_{\mathcal{S}^c\mathcal{S}}-\Sigma_{\mathcal{S}^c\mathcal{S}}\|_{\infty}$. By definition, the $(j,k)$th element of the difference matrix $\hat{\Sigma}_X-\Sigma_X$ can be represented as an average of i.i.d. random variables, $R_{jk}=n^{-1}\sum_{i=1}^{n} R_{jk}^{(i)}$ where $R_{jk}^{(i)}=X_{ij}X_{ik}-\expect[X_{ij}X_{ik}]$ is a centered sub-exponential random variable. Hence,
	\begin{align}
		&\notag
		\prob\left(\|\hat{\Sigma}_{\mathcal{S}^c\mathcal{S}}-\Sigma_{\mathcal{S}^c\mathcal{S}}\|_{\infty} \ge t \right)\\
		&=
		\prob\left(\max_{j\in \mathcal{S}^c} \sum_{k\in \mathcal{S}} |R_{jk}| \ge t \right) 
		\le
		(q+p-s) \prob\left(\sum_{k\in \mathcal{S}} |R_{jk}| \ge t \right) \notag \\
		\notag
		&\le (q+p-s) s \prob\left( |R_{jk}| \ge \frac{t}{s} \right) \\
		\notag
		&\le 
		\exp\left( -C_1 n \min\left\{\frac{t^2}{s^2}, \frac{t}{s}\right\} + \log(q+p-s) + \log(s) \right) \\
		&\le 
		\exp\left( -C_1 n \min\left\{\frac{t^2}{s^2}, \frac{t}{s}\right\} + 2\log(q+p-s) \right),
		\label{eq:Sig-ScS-linf-concen}
	\end{align}
	for some constant $C_1>0$.
	Taking $t\asymp \frac{1}{\sqrt{s}}\frac{\omega}{6}$ in the above inequality, we have
	\begin{align}
		\notag
		\prob\left(\|T_2\|_{\infty} \ge \omega/6\right) 
		&\le
		\prob\left( \|\hat{\Sigma}_{\mathcal{S}^c\mathcal{S}}-\Sigma_{\mathcal{S}^c\mathcal{S}}\|_{\infty} \|\Sigma_{\mathcal{S}\mathcal{S}}^{-1}\|_{\infty} \ge \omega/6 \right) \\
		\notag
		&\le 
		\prob\left( \|\hat{\Sigma}_{\mathcal{S}^c\mathcal{S}}-\Sigma_{\mathcal{S}^c\mathcal{S}}\|_{\infty} \sqrt{s} \|\Sigma_{\mathcal{S}\mathcal{S}}^{-1}\|_{2} \ge \omega/6 \right) \\
		&\le
		\exp\left( -C_2 n/s^3 + \log(q+p-s) \right)
		\label{eq:T2-bound}
	\end{align}
	for come constant $C_2>0$. 
	The above probability on the right-hand side would tend to zero as $n$ goes to infinity, once $s^3\log(q+p-s) \ll n$ and $n$ is sufficiently large.
	
	For the third term $T_3$, taking  $t\asymp \omega/6$ in \eqref{eq:Sig-ScS-linf-concen}, with Equation (1.6) in \cite{kereta2021estimating}, we deduce that
	\begin{small}
		\begin{align}
			& \notag
			\prob\left(\|T_3\|_{\infty} \ge \omega/6\right) \\
			&\notag\le
			\prob\left( \|\hat{\Sigma}_{\mathcal{S}^c\mathcal{S}}-\Sigma_{\mathcal{S}^c\mathcal{S}}\|_{\infty} \|\hat{\Sigma}_{\mathcal{S}\mathcal{S}}^{-1}-\Sigma_{\mathcal{S}\mathcal{S}}^{-1}\|_{\infty} \ge \omega/6 \right) \\
			\notag
			&\le
			\prob\left( \|\hat{\Sigma}_{\mathcal{S}^c\mathcal{S}}-\Sigma_{\mathcal{S}^c\mathcal{S}}\|_{\infty} \sqrt{s}\|\hat{\Sigma}_{\mathcal{S}\mathcal{S}}^{-1}-\Sigma_{\mathcal{S}\mathcal{S}}^{-1}\|_{2} \ge \omega/6,~ \|\hat{\Sigma}_{\mathcal{S}\mathcal{S}}^{-1}-\Sigma_{\mathcal{S}\mathcal{S}}^{-1}\|_{2} \lesssim \sqrt{\frac{s+\log n}{n}} \right) \\
			\notag
			&\quad 
			+ n^{-1} \\
			\notag
			&\le
			\prob\left( \|\hat{\Sigma}_{\mathcal{S}^c\mathcal{S}}-\Sigma_{\mathcal{S}^c\mathcal{S}}\|_{\infty} \ge C_3 \omega/6 \right) + n^{-1} \\
			&\le
			\exp\left( -C_4 n/s^2 + \log(q+p-s) \right) +  n^{-1},
			\label{eq:T3-bound}
		\end{align}
	\end{small}
	for some constants $C_3, C_4>0$.
	
	Combining \eqref{eq:T4-bound}, \eqref{eq:T1-bound}, \eqref{eq:T2-bound} and \eqref{eq:T3-bound} together, we conclude that
	\[
	\prob\left(\|\hat{\Sigma}_{\mathcal{S}^c\mathcal{S}}\hat{\Sigma}_{\mathcal{S}\mathcal{S}}^{-1}\|_{\infty} \le 1-\omega/2\right) \to 1,
	\]
	as $n\to \infty$. 
\end{proof}




\end{appendices}


\bibliography{ref}

\end{document}